\documentclass[final,onefignum,onetabnum,reqno]{amsart}

\usepackage{hyperref}
\hypersetup{pdftex,colorlinks=true,allcolors=blue}

 \usepackage{amsaddr}

\usepackage{amssymb,amsfonts,latexsym,amsmath,wrapfig,graphicx,amsthm}

\usepackage{bm}

\usepackage{enumitem}

\usepackage{chngcntr}
\usepackage{xcolor}
\usepackage{braket}

\usepackage{bbold}

\usepackage{tikz}
\usepackage{tikz-3dplot}
\usepackage{tikz-cd} 
\usetikzlibrary{patterns}

\usepackage[makeroom]{cancel}

\usepackage[yyyymmdd,hhmmss]{datetime}
\usepackage[all]{background}
\usepackage{soul}

%Added this for \mathfrak
\usepackage{mathrsfs}

\usepackage[capitalise]{cleveref}
\crefname{equation}{}{}

\SetBgContents{}% Set contents
%
%
% \SetBgContents{\texttt{\today\ \currenttime}}% Set contents
% \SetBgPosition{-0.5cm,-5.0cm}% Select location
% \SetBgOpacity{1.0}% Select opacity
% \SetBgAngle{90.0}% Select rotation of logo
% \SetBgScale{2.0}% Select scale factor of logo
%

\undef\ul

% \counterwithout{equation}{subsection}
% \counterwithout{equation}{section}

% \newtheorem{theorem}{Theorem}[section]
% \newtheorem{proposition}[theorem]{Proposition}
% \newtheorem{problem}{Problem}
% \newtheorem{cor}[theorem]{Corollary}
% \newtheorem{lemma}[theorem]{Lemma}
% \newtheorem*{claim}{Claim}
% \newtheorem*{example}{Example}
% \newtheorem{remark}{Remark}
% \newtheorem{definition}{Definition}
% \newtheorem{conjecture}{Conjecture}
% \newtheorem*{question}{Question}

% \newsiamremark{remark}{Remark}
% \newsiamremark{example}{Example}
%

\newtheorem{theorem}{Theorem}[section]

\newtheorem{lemma}[theorem]{Lemma}
\newtheorem{corollary}[theorem]{Corollary}
\newtheorem{proposition}[theorem]{Proposition}
\theoremstyle{remark}
\newtheorem{remark}{Remark}

\DeclareMathOperator{\Tr}{Tr}

\DeclareMathOperator{\Real}{Re}
\DeclareMathOperator{\Imag}{Im}

\DeclareMathOperator{\rot}{rot}

\undef\div
\DeclareMathOperator{\div}{div}

\DeclareMathOperator{\supp}{supp}

%%%%%%%%%%%
%

\definecolor{myblue}{RGB}{0, 100, 250}

%
%%%%%%%%%%%%

\newcommand{\RR}{\mathbb{R}}
\newcommand{\CC}{\mathbb{C}}
\newcommand{\NN}{\mathbb{N}}

\newcommand{\ZZ}{\mathbb{Z}}

\newcommand{\TF}{\mathrm{TF}}

\newcommand{\ol}{\overline}
\newcommand{\ul}{\underline}

% Vector bold

% representability pairing

\renewcommand{\epsilon}{\varepsilon}
\renewcommand{\phi}{\varphi}

\renewcommand{\setminus}{\smallsetminus}

\newcommand{\wt}[1]{\widetilde{#1}}

\newcommand{\grad}{\bm{\nabla}}
\newcommand{\lapl}{\bm{\Delta}}

\newcommand{\vertiii}[1]{{\left\vert\kern-0.25ex\left\vert\kern-0.25ex\left\vert #1
   \right\vert\kern-0.25ex\right\vert\kern-0.25ex\right\vert}}

\newcommand{\UEG}{\mathrm{UEG}}
\newcommand{\LT}{\mathrm{LT}}
\newcommand{\LO}{\mathrm{LO}}

\newcommand{\xc}{\mathrm{xc}}

\newcommand{\TC}{\mathcal{T}}

\newcommand{\dd}{\mathrm{d}}

\newcommand{\iden}{\mathbb{1}}

\newcommand{\va}{\mathbf{a}}

\newcommand{\ve}{\mathbf{e}}

\newcommand{\vk}{\mathbf{k}}

\newcommand{\vx}{\mathbf{x}}
\newcommand{\vy}{\mathbf{y}}
\newcommand{\vz}{\mathbf{z}}
\newcommand{\vv}{\mathbf{v}}

\newcommand{\vu}{\mathbf{u}}
\newcommand{\vw}{\mathbf{w}}

\newcommand{\vX}{\mathbf{X}}
\newcommand{\vY}{\mathbf{Y}}

\newcommand{\vA}{\mathbf{A}}
\newcommand{\vB}{\mathbf{B}}

\newcommand{\vD}{\mathbf{D}}

\newcommand{\vR}{\mathbf{R}}

\newcommand{\vT}{\mathbf{T}}
\newcommand{\vM}{\mathbf{M}}

\newcommand{\vjp}{\mathbf{j}^{\mathrm{p}}}

\newcommand{\vnu}{{\bm{\nu}}}
\newcommand{\vzeta}{{\bm{\zeta}}}

\newcommand{\vtau}{{\bm{\tau}}}
\newcommand{\vomega}{{\bm{\omega}}}
\newcommand{\vGamma}{{\bm{\Gamma}}}

\newcommand{\vpi}{{\bm{\pi}}}

\newcommand{\dvx}{\dd\mathbf{x}}

\renewcommand{\le}{\leqslant}
\renewcommand{\ge}{\geqslant}

\newcommand{\Hil}{\mathfrak{H}}
\newcommand{\Fock}{\mathfrak{F}}

\newcommand{\OC}{\mathcal{O}}
\newcommand{\Cc}{\mathcal{C}}

\newcommand{\SF}{\mathfrak{S}}

\newcommand{\rdens}{\mathcal{I}}
\newcommand{\rcdens}{\mathcal{J}}

\newcommand{\Lloc}{L_{\mathrm{loc}}}

\newcommand{\pure}{\mathrm{pure}}
\newcommand{\mixed}{\mathrm{mixed}}

\newcommand{\tetra}{\mathbb{\Delta}}

\newcommand{\dm}{\mathcal{D}}

\newcommand{\SO}{\mathrm{SO}}
\newcommand{\ketbra}[2]{\ket{#1}\!\!\bra{#2}}
\def\Xint#1{\mathchoice
{\XXint\displaystyle\textstyle{#1}}%
{\XXint\textstyle\scriptstyle{#1}}%
{\XXint\scriptstyle\scriptscriptstyle{#1}}%
{\XXint\scriptscriptstyle\scriptscriptstyle{#1}}%
\!\int}
\def\XXint#1#2#3{{\setbox0=\hbox{$#1{#2#3}{\int}$ }
\vcenter{\hbox{$#2#3$ }}\kern-.6\wd0}}

\def\dashint{\Xint-}

\title[The Magnetic Uniform electron gas]{Thermodynamic limit for the magnetic uniform electron gas and representability of density-current pairs}

\author{Mih\'aly A. Csirik \and Andre Laestadius}
\address{Department of Computer Science, Oslo Metropolitan University, Norway\\ Hylleraas Centre for Quantum Molecular Sciences, Department of Chemistry, University of Oslo, Norway}
\author{Erik I. Tellgren}
\address{Hylleraas Centre for Quantum Molecular Sciences, Department of Chemistry, University of Oslo, Norway}
  
\begin{document}

\maketitle

\begin{abstract}
Although the concept of the uniform electron gas is essential to quantum physics, it has only been defined recently in a rigorous manner by Lewin, Lieb and Seiringer. We extend their approach to include the magnetic case, by which we
mean that the vorticity of the gas is also held constant. Our definition involves the grand-canonical version of the universal functional
introduced by Vignale and Rasolt in the context of current-density-functional theory. Besides establishing the existence of the thermodynamic limit, we derive an estimate on the kinetic energy functional
that also gives a convenient answer to the (mixed) current-density representability problem.
\end{abstract}

% REQUIRED
%\begin{keywords}
%  quantum mechanics, many-body problem, quantum chemistry, electronic structure, coupled-cluster theory, nonlinear analysis, topological degree, Brouwer degree
%\end{keywords}

% REQUIRED
%\begin{AMS}
%\end{AMS}

%\tableofcontents

\section{Introduction}

Following \cite{lewin2018statistical}, we view the uniform electron gas (UEG) as a continuous system of (interacting or not) electrons with prescribed constant density everywhere. 
In the non-interacting case,
one can readily write down the ground state of the UEG via a fully occupied Fermi sphere. One can consider the classical UEG by neglecting the kinetic energy of the
electrons \cite{lewin2018statistical}. 

In contrast, the Jellium is a continuous system of interacting electrons in a constant neutralizing potential \cite{lieb1975thermodynamic}. In the classical case, it was
recently shown that the Jellium and the UEG energies are the same \cite{lewin2019floating}. In the quantum case, the problem is still open. We will solely focus on the UEG in this article.

The article \cite{lewin2018statistical} pioneered the use of universal density functionals of density-functional theory (DFT) as a theoretical tool in quantum statistical mechanics, namely
for the \emph{definition} of the UEG. In a later article \cite{lewin2020local}, the same authors established an even more elegant (and equivalent)
definition of the UEG, and we will restrict ourselves to this latter approach. 

When magnetic fields are present, the situation is much more delicate \cite{giuliani2005quantum}. Even the noninteracting UEG is more difficult to analyze 
in the magnetic case \cite{fushiki1992matter,lieb1994asymptoticsI,lieb1994asymptoticsII}, even though the ground state can be given exactly.
The celebrated Laughlin wave function has also received attention
from the mathematics community \cite{rougerie2014quantum,rougerie2019laughlin}, which is a rather precise trial state (with almost constant density in the thermodynamic limit) 
for the Jellium in a constant magnetic field.

\vspace{1em}
\noindent\emph{Acknowledgements.} A.L. and M.A.Cs. acknowledge funding through ERC StG REGAL under agreement No.\ 101041487. 
M.A.Cs. is grateful to Mathieu Lewin and Giovanni Vignale for helpful discussions. 
A.L., M.A.Cs. and E.I.T. acknowledge funding from RCN through the Hylleraas Centre 
 under agreement No.\ 262695. E.I.T. acknowledges funding from RCN under agreement No.\ ~287950.

\section{Setting}

\subsection{States}

In the first half of this section, we will fix the particle number $N\in\NN$.
The \emph{$N$-particle fermionic Hilbert space} is the antisymmetric power
$\Hil^N=\bigwedge^N \Hil$ (endowed with the usual inner product), where $\Hil=L^2(\RR^d)$
is the \emph{one-particle Hilbert space}. Note that $\Hil^1=\Hil$. 

Generally speaking, a density matrix (or state) on a Hilbert space is self-adjoint, non-negative trace class operator that has trace one. We first consider the Hilbert space to be the fermionic $N$-particle space $\Hil^N$.
To make notations more compact, we henceforth restrict our attention to states with finite kinetic energy, which we call \emph{$H^1$-regular states}.
The \emph{set of $H^1$-regular (fermionic) $N$-particle states} is defined as
$$
\dm_N=\left\{ \begin{aligned}
\Gamma&\in\SF_1(\Hil^N) : \Gamma=\sum_{j\ge 1} \mu_j\Gamma_{\Psi_j}, \;\text{for some}\;\mu_j\ge 0\;\text{with}\;\sum_{j\ge 1}\mu_j=1,\,\text{and}\\
&\{\Psi_j\}\subset\Hil^N \cap H^1(\RR^{dN})\,\text{is}\,L^2\text{-orthonormal with}\;\sum_{j\ge 1} \mu_j \|\grad\Psi_j\|^2<\infty
\end{aligned}\right\},
$$
where $\SF_1$ the trace class.
Then clearly $0\le\Gamma=\Gamma^\dag$ and $\Tr_{\Hil^N}\Gamma=1$ for all $\Gamma\in\dm_N$. It is not hard to show that the operator $\Gamma$ can be given by the Hermitian integral kernel
$\Gamma : \RR^{dN}\times\RR^{dN}\to\CC$ (denoted with the same symbol), 
$$
\Gamma(\vX;\vY)=\sum_{j\ge 1} \mu_j \Psi_j(\vX) \ol{\Psi_j(\vY)},
$$
so that its action on an element $\Phi\in\Hil^N$ is given by $\Gamma\Phi(\vX)=\int \Gamma(\vX;\vY)\Phi(\vY)\,\dd\vY$. 

Let $\mathcal{A}:\Hil^N\to\RR$ be a nonnegative quadratic form with 
domain in $\Hil^N\cap H^1(\RR^{dN})$, and let $A:\Hil^N\to\Hil^N$ be its self-adjoint realization. Then for any $\Gamma\in\dm_N$, we define
$$
\Tr_{\Hil^N} A\Gamma:=\sum_{j\ge 1} \mu_j \mathcal{A}(\Psi_j),
$$
which is a nonnegative number or $+\infty$. Choosing $A=-\lapl$, i.e. $\mathcal{A}(\Psi)=\|\grad\Psi\|^2$, we find that $\Tr_{\Hil^N}(-\lapl\Gamma)=\sum_{j\ge 1}\mu_j \|\grad\Psi_j\|^2<\infty$
for any $\Gamma\in\dm_N$. 

Next, we consider states on Fock space. For convenience, we introduce the notation $\Hil^0=\CC$. The \emph{fermionic Fock space} $\Fock$ is then the Hilbert space given by direct sum of the Hilbert spaces $\Hil^n$, $\Fock=\Hil^0\oplus\Hil^1\oplus\Hil^2\oplus\ldots$ endowed with the usual inner product.
A \emph{Fock space state} $\vGamma$ is a positive self-adjoint, trace-class operator on $\Fock$ with unit trace: $\vGamma\in\SF_1(\Fock)$, $0\le\vGamma=\vGamma^\dag$  and
$\Tr_\Fock\vGamma=1$. 
We will not need this generality, however, and it will suffice to consider states $\vGamma$ which commute with the number operator.
It is easy to show by induction that such states are \emph{block-diagonal:} 
$$
\vGamma=\Gamma_0\oplus\Gamma_1\oplus\ldots,
$$
where $\Gamma_n\in\SF_1(\Hil^n)$ is such that $0\le \Gamma_n=\Gamma_n^\dag$.
Note in particular that $\Gamma_0\in\RR_+$, and that the normalization constraint now reads $\sum_{n\ge 0} \Tr_{\Fock}\Gamma_n=1$.
Again, we will further restrict attention to states with finite kinetic energy, which we called $H^1$-regular states. All this leads to the definition
$$
\dm=\left\{ \begin{aligned}
\vGamma=\Gamma_0\oplus\Gamma_1\oplus\ldots\in\SF_1(\Fock) :&\, \Gamma_n\in\SF_1(\Hil^n) \,\text{and}\, 0\le \Gamma_n=\Gamma_n^\dag \,\text{for all}\,n\ge 0, \\
&\sum_{n\ge 0} \Tr\Gamma_n=1\;\text{and}\; \sum_{n\ge 0}\Tr_{\Hil^n} (-\lapl\Gamma_n)<\infty
\end{aligned}\right\}
$$
of the set of \emph{$H^1$-regular block-diagonal Fock space states}. Each $\Gamma_n$ may be given by an integral kernel $\Gamma_n(\vX_n;\vY_n)$,
like before (of course these will not be normalized to have unit trace, in general).
Note also that by linearity, the one-, and the two-particle reduced density matrices of $\vGamma=\Gamma_0\oplus\Gamma_1\oplus\ldots\in\dm$ are given by
$$
\gamma_\vGamma=\sum_{n\ge 0} \gamma_{\Gamma_n}, \quad\text{and}\quad \gamma^{(2)}_\vGamma=\sum_{n\ge 0} \gamma^{(2)}_{\Gamma_n}.
$$
We note in passing that all derived quantities, such as $\rho_\vGamma$, may be defined by linearity.

\subsection{One-particle quantities}
In this section, we introduce various quantities which may be defined using the one-particle density matrix. 
The \emph{one-particle reduced density matrix $\gamma\in\SF_1(\Hil)$ of $\Gamma\in\dm_N$} is defined as ($N$ times) the partial trace of $\Gamma$ via duality:
$$
\Tr_\Hil (\gamma B)=N\Tr_{\Hil^N} (\Gamma (B\otimes \iden_{\Hil^{N-1}})) \quad\text{for all}\quad B\in\mathcal{B}(\Hil).
$$
In symbols, this operation is sometimes written as $\gamma=N\Tr_{\Hil^{N-1}} \Gamma$. Note the choice of normalization, so that we have $\Tr_\Hil\gamma=N$.
It is easy to see that $\gamma$ can be given by the kernel
$$
\gamma(\vx,\vy)=N\int_{\RR^{d(N-1)}} \Gamma(\vx,\vx_2,\ldots,\vx_N;\vy,\vx_2,\ldots,\vx_N)\,\dvx_2\cdots\dvx_N.
$$
It is also clear that $\gamma$ is a trace class operator on the one-particle Hilbert space $\Hil$ such that $0\le\gamma=\gamma^\dag\le\iden$.
 By the spectral theorem $\gamma$ has the decomposition
$\gamma=\sum_{j\ge 1} \lambda_j \ketbra{\phi_j}{\phi_j}$,
which in terms of its kernel can be written as
$$
\gamma(\vx,\vy)=\sum_{j\ge 1} \lambda_j \phi_j(\vx)\ol{\phi_j(\vy)}
$$
for some $0\le \lambda_j\le 1$ with $\sum_{j\ge 1}\lambda_j=N$ and $\{\phi_j\}_{j\ge 1}\subset H^1(\RR^d)$ $L^2$-orthonormal.
It is clear that $\gamma$ has finite kinetic energy, i.e. it is $H^1$-regular: $\sum_{j\ge 1} \lambda_j \|\grad\phi_j\|^2<\infty$.

The \emph{density} $\rho_\gamma\in L^1(\RR^d)$ of the one-particle density matrix $\gamma$ may also be defined via duality, 
i.e. by requiring that $\int_{\RR^d} \rho_\gamma \phi=\Tr_\Hil(\gamma\phi)$ for all $\phi\in L^\infty(\RR^d)$. This gives
$$
\rho_\gamma(\vx)=\sum_{j\ge 1} \lambda_j |\phi_j(\vx)|^2.
$$ 
The kinetic energy $\Tr_\Hil(-\lapl\gamma):=\sum_{j\ge 1} \lambda_j \|\grad\phi_j\|^2<\infty$ may be expressed using the kernel $\gamma(\vx,\vy)$ as
$$
\Tr(-\lapl\gamma)=\int_{\RR^d} \grad_\vx\cdot\grad_{\vy} \gamma(\vx,\vy)|_{\vy=\vx}\,\dvx.
$$
Here, the integrand $\tau_\gamma(\vx):=\grad_\vx\cdot\grad_{\vy} \gamma(\vx,\vy)|_{\vy=\vx}$ is called the \emph{kinetic energy density}, and more generally
$$
\vtau_\gamma(\vx):=\grad_\vx\otimes \grad_\vy \gamma(\vx,\vy)|_{\vy=\vx}=\sum_{j\ge 1} \lambda_j \grad\phi_j(\vx)\otimes \ol{\grad\phi_j(\vx)}
$$
the \emph{kinetic energy tensor}. Clearly, $0\le \vtau_\gamma(\vx)=\vtau_\gamma(\vx)^\dag$ and $\Tr_{\RR^3} \vtau_\gamma(\vx)=\tau_\gamma(\vx)$.

A further useful quantity is the \emph{complex current density} $\vzeta_\gamma:\RR^d\to\CC^d$, which is given by
\begin{equation}\label{vzetadef}
\vzeta_\gamma(\vx):= \grad_\vx \gamma(\vx,\vy)|_{\vy=\vx}=\sum_{j\ge 1} \lambda_j\grad\phi_j(\vx)\ol{\phi_j(\vx)}.
\end{equation}
A straightforward calculation shows that 
\begin{equation}\label{vzetareim}
\Real\vzeta_\gamma(\vx)=\frac{1}{2} \grad\rho_\gamma(\vx),\quad\text{and}\quad \vjp_\gamma(\vx):=\Imag\vzeta_\gamma(\vx)=\Imag\grad_\vx \gamma(\vx,\vy)|_{\vy=\vx},
\end{equation}
where $\vjp_\gamma:\RR^d\to\RR^d$ is called the \emph{paramagnetic current density}. 

The following theorem is based on \cite{DOBSON_JPCM4_7877,sen2018local,laestadius2019kohn}.
\begin{proposition}\label{basicprop}
Suppose that $\gamma$ is fermionic one-particle density matrix composed of sufficiently smooth orbitals. Then the following hold true.
\begin{itemize}
\item[(i)] $|\grad\sqrt{\rho_\gamma}(\vx)|^2 + \frac{|\vjp_\gamma(\vx)|^2}{\rho(\vx)} + \frac{1}{\sqrt{d}}\rho(\vx)\left|\vD_a\frac{\vjp_\gamma(\vx)}{\rho_\gamma(\vx)}\right| \le \tau_\gamma(\vx)$ for a.a. $\vx\in\RR^d$.
\item[(ii)] $|\vjp_\gamma(\vx)|\le|\vzeta_\gamma(\vx)|\le \tau_\gamma(\vx)^{1/2} \rho_\gamma(\vx)^{1/2}$ for a.a. $\vx\in\RR^d$, hence $\vjp_\gamma\in L^1(\RR^d)^d$.
\end{itemize}
\end{proposition}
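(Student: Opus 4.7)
The plan is to reduce everything to a single pointwise Hermitian PSD matrix identity together with Cauchy--Schwarz. I start with (ii). For each coordinate $k\in\{1,\ldots,d\}$, writing
$$
\vzeta_{\gamma,k}(\vx)=\sum_{j\ge 1}\bigl(\sqrt{\lambda_j}\,\partial_k\phi_j(\vx)\bigr)\bigl(\sqrt{\lambda_j}\,\ol{\phi_j(\vx)}\bigr)
$$
as a weighted $\ell^2$ inner product and applying Cauchy--Schwarz yields $|\vzeta_{\gamma,k}|^2\le(\vtau_\gamma)_{kk}\rho_\gamma$; summing over $k$ gives $|\vzeta_\gamma|^2\le\tau_\gamma\rho_\gamma$, while $|\vjp_\gamma|\le|\vzeta_\gamma|$ is immediate from $\vjp_\gamma=\Imag\vzeta_\gamma$. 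Global $L^1$-integrability of $\vjp_\gamma$ follows from a further Cauchy--Schwarz on the right-hand side.

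For (i), the identities in \cref{vzetareim} collapse the first two summands:
$$
|\grad\sqrt{\rho_\gamma}|^2+\frac{|\vjp_\gamma|^2}{\rho_\gamma}=\frac{|\Real\vzeta_\gamma|^2+|\Imag\vzeta_\gamma|^2}{\rho_\gamma}=\frac{|\vzeta_\gamma|^2}{\rho_\gamma},
$$
so the claim reduces to $\tfrac{1}{\sqrt d}\rho_\gamma|\vD_a(\vjp_\gamma/\rho_\gamma)|\le\tau_\gamma-|\vzeta_\gamma|^2/\rho_\gamma$. The central object is the pointwise Schur complement
$$
\vS(\vx):=\vtau_\gamma(\vx)-\frac{\vzeta_\gamma(\vx)\otimes\ol{\vzeta_\gamma(\vx)}}{\rho_\gamma(\vx)},
$$
which is Hermitian PSD on $\{\rho_\gamma>0\}$: this follows by realising the $(d+1)\times(d+1)$ Gram matrix
$$
\sum_{j\ge 1}\lambda_j\begin{pmatrix}\grad\phi_j\\ \ol{\phi_j}\end{pmatrix}\otimes\ol{\begin{pmatrix}\grad\phi_j\\ \ol{\phi_j}\end{pmatrix}}=\begin{pmatrix}\vtau_\gamma & \vzeta_\gamma\\ \ol{\vzeta_\gamma}^T & \rho_\gamma\end{pmatrix}
$$
as PSD and taking the Schur complement against the scalar block. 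In particular $\Tr\vS=\tau_\gamma-|\vzeta_\gamma|^2/\rho_\gamma\ge 0$, which already re-derives (ii).

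The crux is then the algebraic identity
$$
\rho_\gamma\,\vD_a(\vjp_\gamma/\rho_\gamma)=-\Imag\vS,
$$
which I would establish by direct computation of $(\vD_a(\vjp_\gamma/\rho_\gamma))_{kl}$. Antisymmetrising $\partial_k(\vjp_{\gamma,l}/\rho_\gamma)$ in $(k,l)$, Schwarz's theorem on mixed second derivatives of the $\phi_j$ (justified by the smoothness hypothesis) makes the contributions $\partial_k\partial_l\phi_j\ol{\phi_j}$ cancel, leaving $\partial_k\vjp_{\gamma,l}-\partial_l\vjp_{\gamma,k}=-2\Imag(\vtau_\gamma)_{kl}$; the quotient-rule piece reduces via $\partial_l\rho_\gamma=2\Real\vzeta_{\gamma,l}$ and $\vjp_{\gamma,l}=\Imag\vzeta_{\gamma,l}$ to $2\Imag(\vzeta_{\gamma,k}\ol{\vzeta_{\gamma,l}})/\rho_\gamma$, and combining the two gives exactly $-\Imag S_{kl}/\rho_\gamma$.

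To close the argument, for any Hermitian PSD matrix one has the entrywise bound $|S_{kl}|\le\sqrt{S_{kk}S_{ll}}$, hence $|\Imag S_{kl}|\le\sqrt{S_{kk}S_{ll}}$; combining this with Cauchy--Schwarz on $\sum_k\sqrt{S_{kk}}$ yields $|\Imag\vS|\le\sqrt d\,\Tr\vS$ in the appropriate matrix norm, and plugging this into the identity produces
$$
\tfrac{1}{\sqrt d}\,\rho_\gamma\bigl|\vD_a(\vjp_\gamma/\rho_\gamma)\bigr|=\tfrac{1}{\sqrt d}|\Imag\vS|\le\Tr\vS=\tau_\gamma-|\vzeta_\gamma|^2/\rho_\gamma,
$$
which together with the first-paragraph reduction is (i). The main obstacle is the identity $\rho_\gamma\vD_a(\vjp_\gamma/\rho_\gamma)=-\Imag\vS$: the sign- and conjugation-bookkeeping in the antisymmetrisation is delicate, and one has to use Schwarz's theorem to kill the second-derivative terms before the $\vS$-structure emerges. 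Everything else is standard matrix Cauchy--Schwarz and Schur-complement manipulations.
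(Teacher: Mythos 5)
Your proof is correct and is essentially the paper's own argument. Your Schur complement $\vS$ is exactly the \emph{intrinsic kinetic energy tensor} $\vomega_\gamma$ of \cref{intrkin}, and the identity $\rho_\gamma\vD_a(\vjp_\gamma/\rho_\gamma)=\pm\Imag\vS$ is the paper's computation \cref{omegaexpand} (the overall sign depends only on whether one takes $(\vD\vu)_{kl}=\partial_ku_l$ or $(\vD\vu)_{kl}=\partial_lu_k$, and is immaterial since only the Frobenius norm enters). The only genuine difference is cosmetic: you justify positivity of $\vS$ by exhibiting the $(d+1)\times(d+1)$ Gram matrix explicitly and taking the Schur complement against the scalar block, whereas the paper simply asserts $0\le\vomega_\gamma$; both establish the same fact. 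Two small remarks: first, the constant in the final step is actually $1$ rather than $\sqrt d$ — from $|\Imag S_{kl}|\le\sqrt{S_{kk}S_{ll}}$ one gets $\sum_{k,l}|\Imag S_{kl}|^2\le(\Tr\vS)^2$, so $|\Imag\vS|\le\Tr\vS$ in Frobenius norm, which is stronger than what the proposition needs (the paper also only proves the weaker $\sqrt d$ form and refers to \cite{sen2018local} for the sharper constant in $d=3$). Second, your observation that $\Tr\vS\ge0$ already re-derives (ii) is a nice tightening of the paper's one-line proof of that part.
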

Here, the symmetric-, and antisymmetric derivatives $\vD_s$ and $\vD_a$ will be used at some occasions, given by
$$
\vD_s\vu=\frac{\vD\vu + (\vD\vu)^\top}{2}\quad\text{and}\quad \vD_a\vu=\frac{\vD\vu - (\vD\vu)^\top}{2},
$$
for any differentiable vector field $\vu:\RR^d\to\RR^d$ and $\vD\vu:\RR^d\to\RR^{d\times d}$ is its derivative. Further, we denote by $\vA:\vB=\Tr(\vA^\top \vB)$ the Frobenius inner product, and by $|\vA|=\sqrt{\vA:\vA}$
the Frobenius norm. 

Let us consider the case $d=3$. Then $\left|\vD_a\vu\right|=\frac{\sqrt{2}}{2}|\rot\vu|$ and part (i) improves the usual Hoffmann--Ostenhof bound \cite{hoffmann1977schrodinger} as
\begin{equation}\label{kinweizgauge}
\Tr(-\lapl\gamma)\ge\int_{\RR^3} |\grad\sqrt{\rho_\gamma}|^2 + \int_{\RR^3} \frac{|\vjp_\gamma|^2}{\rho_\gamma} + \frac{1}{\sqrt{6}}\int_{\RR^3} \rho_\gamma\left|\vnu_\gamma\right|,
\end{equation}
for any $H^1$-regular density matrix $\gamma$, where $\vnu_\gamma = \rot\frac{\vjp_\gamma}{\rho_\gamma}$ is the (gauge-invariant, see below) \emph{vorticity}. 
We call the terms on the r.h.s. the \emph{von Weizs\"acker-}, the \emph{gauge}-, and the \emph{vorticity term}, respectively.
Note that \cref{kinweizgauge} implies that for $H^1$-regular density one-particle matrices $\gamma$, we have $\int_{\RR^3} |\grad\sqrt{\rho_\gamma}|^2<\infty$, $\int_{\RR^3} \frac{|\vjp_\gamma|^2}{\rho_\gamma}<\infty$
and $\int_{\RR^3} \rho_\gamma\left|\vnu_\gamma\right|<\infty$, quantities which play important roles later. Finally, we note that for $d=3$ the constant in front of vorticity term in \cref{kinweizgauge} can be improved to 1~\cite{sen2018local}.

\begin{remark}
Part (ii) of \cref{basicprop} implies that the support of $\vjp$ is contained in the support of $\rho$, in particular, the quotients $\frac{\vjp}{\sqrt{\rho}}$ and 
$\frac{|\vjp|^2}{\rho}$ are meaningful on $\supp\rho$. Moreover, if $\tau\equiv 0$ in some ball, then by (i), $\rho$ is constant and
$\vjp\equiv 0$ on that ball. 
\end{remark}

\subsection{Gauge transformation}\label{gaugesec}

We now discuss a very important tool in the study of magnetic systems.
Given an $N$-particle state $\Gamma\in\dm_N$, consider the \emph{gauge-transformed} state
$$
\wt{\Gamma}(\vX,\vY)=e^{i\sum_{1\le j\le N} ( g(\vy_j)- g(\vx_j))}\Gamma(\vX,\vY)
$$
where $g$ is called the \emph{gauge function}.

\begin{proposition}\label{densgauge}
Let $\gamma$ be a fermionic one-particle density matrix. Then the gauge transformation of $\gamma$ is given by
$$
\wt{\gamma}(\vx,\vy)=e^{i( g(\vy)- g(\vx))} \gamma(\vx,\vy).
$$
In particular, $\rho_{\wt{\gamma}}=\rho_\gamma$.
Furthermore, $\vjp_{\wt{\gamma}}=\vjp_\gamma - \rho_\gamma \grad g$ and 
\begin{equation}\label{kingauge}
\Tr(-\lapl\wt{\gamma})=\Tr(-\lapl\gamma) - 2\int_{\RR^d} \vjp_\gamma \cdot \grad g + \int_{\RR^d} \rho_\gamma |\grad g|^2.
\end{equation}
\end{proposition}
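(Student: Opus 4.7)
The plan is to verify the four claims in order, first deriving the kernel formula for $\wt{\gamma}$ and then reading off the density, paramagnetic current, and kinetic identities from it.

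For the kernel identity, I would insert the gauge-transformed $\wt{\Gamma}$ into the integral formula
$$
\wt{\gamma}(\vx,\vy)=N\int_{\RR^{d(N-1)}} \wt{\Gamma}(\vx,\vx_2,\ldots,\vx_N;\vy,\vx_2,\ldots,\vx_N)\,\dvx_2\cdots\dvx_N.
$$
All phases associated with the partial-trace variables collapse, since $g(\vx_j)-g(\vx_j)=0$ for $j=2,\ldots,N$, leaving only the factor $e^{i(g(\vy)-g(\vx))}$ in front of the corresponding integral for $\gamma$. This yields the claimed formula $\wt{\gamma}(\vx,\vy)=e^{i(g(\vy)-g(\vx))}\gamma(\vx,\vy)$. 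Setting $\vy=\vx$ kills the phase, giving $\rho_{\wt{\gamma}}=\rho_\gamma$ immediately.

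For the paramagnetic current, I would differentiate the kernel identity in $\vx$ via the product rule, and then restrict to the diagonal, using \cref{vzetadef}. This gives $\vzeta_{\wt{\gamma}}(\vx)=\vzeta_\gamma(\vx)-i\grad g(\vx)\,\rho_\gamma(\vx)$; taking imaginary parts via \cref{vzetareim} produces $\vjp_{\wt{\gamma}}=\vjp_\gamma-\rho_\gamma\grad g$, and confirms (as a consistency check) that $\Real\vzeta_{\wt{\gamma}}=\tfrac12\grad\rho_\gamma$ is unchanged.

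For the kinetic identity \cref{kingauge}, the cleanest route is through the spectral decomposition $\gamma=\sum_j \lambda_j\ketbra{\phi_j}{\phi_j}$. Combined with the kernel formula, this gives $\wt{\gamma}=\sum_j \lambda_j\ketbra{\wt{\phi_j}}{\wt{\phi_j}}$ with $\wt{\phi_j}=e^{-ig}\phi_j$. Then the product rule yields
$$
|\grad\wt{\phi_j}(\vx)|^2 = |\grad\phi_j(\vx)|^2 - 2\grad g(\vx)\cdot\Imag\bigl(\ol{\phi_j(\vx)}\grad\phi_j(\vx)\bigr) + |\grad g(\vx)|^2 |\phi_j(\vx)|^2.
$$
Weighting by $\lambda_j$, summing over $j$ and integrating identifies the three terms as $\Tr(-\lapl\gamma)$, $-2\int\vjp_\gamma\cdot\grad g$, and $\int\rho_\gamma|\grad g|^2$ respectively, using the definitions of $\tau_\gamma$, $\vjp_\gamma$ and $\rho_\gamma$ in terms of the natural orbitals.

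The only subtlety is regularity: one needs $g$ smooth enough that each $\wt{\phi_j}\in H^1(\RR^d)$ and that the cross term in the expansion is absolutely integrable. Assuming $g\in W^{1,\infty}(\RR^d)$ (which is the natural setting for a gauge transformation of a magnetic vector potential of the form $\vA\mapsto\vA+\grad g$) makes this straightforward; the cross term is then controlled by $\|\grad g\|_\infty \int|\vjp_\gamma|$, which is finite by \cref{basicprop}(ii), and the termwise interchange of sum and integral is justified by Fubini or monotone/dominated convergence.
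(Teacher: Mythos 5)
Your proof is correct, and since the paper states \cref{densgauge} without proof (treating it as a routine computation), your argument fills that gap in the natural way: the kernel formula follows by the phase cancellation on the traced-out variables, $\rho_{\wt\gamma}=\rho_\gamma$ by setting $\vy=\vx$, $\vjp_{\wt\gamma}=\vjp_\gamma-\rho_\gamma\grad g$ by differentiating and taking imaginary parts, and the kinetic identity by passing to the natural orbitals $\wt{\phi}_j=e^{-ig}\phi_j$ and expanding $|\grad\wt{\phi}_j|^2$ termwise. One small remark on regularity: you invoke $g\in W^{1,\infty}$ so that the cross term is controlled by $\|\grad g\|_\infty\int|\vjp_\gamma|$; this is sufficient but stronger than necessary, and the paper's own discussion immediately after the proposition points out that $\grad g\in L^2_{\rho_\gamma}(\RR^d)^d$ already suffices, by writing $\int|\vjp_\gamma\cdot\grad g|\le\bigl(\int\frac{|\vjp_\gamma|^2}{\rho_\gamma}\bigr)^{1/2}\bigl(\int\rho_\gamma|\grad g|^2\bigr)^{1/2}$ and invoking \cref{kinweizgauge}. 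Adopting that weaker hypothesis would align your justification of the termwise interchange with the paper's stated setting.
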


We did not specify what function space the gauge function $ g$ is coming from. In order for the 
r.h.s. of \cref{kingauge} to make sense, it is enough to assume that $\grad g\in L_{\rho_\gamma}^2(\RR^d)^d$, using the Cauchy--Bunyakovsky--Schwarz inequality and 
 \cref{kinweizgauge}. 

Note that the gauge term $\int_{\RR^d} \frac{|\vjp|^2}{\rho}$ transforms exactly like the kinetic energy \cref{kingauge}. More precisely,
$$
\int_{\RR^d} \frac{|\vjp_{\wt{\gamma}}|^2}{\rho_{\wt{\gamma}}}=\int_{\RR^d} \frac{|\vjp_{\gamma}|^2}{\rho_\gamma} - 2\int_{\RR^d} \vjp_\gamma\cdot \grad g +\int_{\RR^d} \rho_\gamma |\grad g|^2,
$$
therefore the inequality  \cref{kinweizgauge} is ``consistent'' with a gauge transformation in the sense that all the terms involving $ g$ will cancel.
We call such inequalities \emph{gauge-invariant}.

\subsection{Representability}\label{repsec}
Fix $N\in\NN$. Recall that the set of \emph{pure (or mixed) state $N$-representable densities} may be written as
$$
\rdens_N=\Bigg\{ \rho\in L^1(\RR^d;\RR_+) : \grad\sqrt{\rho}\in L^2(\RR^d)^d,\; \int_{\RR^d}\rho=N \Bigg\}.
$$
according to \cite{harriman1981orthonormal} and \cite[Theorem 1.2]{lieb1983density}. For later purposes, we also introduce the set of \emph{Fock space-representable densities} as
\begin{align*}
\rdens&=\left\{ \rho\in L^1(\RR^d;\RR_+) : \grad\sqrt{\rho}\in L^2(\RR^d)^d \right\}.
\end{align*}

For density-paramagnetic current density pairs, the situation is more complicated. Consider the set 
$$
\rcdens_N^\pure=\Bigl\{(\rho,\vjp) : \rho=\rho_\Psi,\, \vjp=\vjp_\Psi,\; \text{for some} \; \Psi\in\Hil^N\cap H^1(\RR^{dN}),\; \|\Psi\|=1 \Bigr\}
$$
of all \emph{pure state $N$-representable density-current pairs}, and
$$
\rcdens_N^\mixed=\{(\rho,\vjp) : \rho=\rho_\Gamma,\,\vjp=\vjp_\Gamma,\; \text{for some}\;\Gamma\in\dm_N \},
$$
the set of \emph{mixed state $N$-representable density-current pairs}.
In this more general setting it is yet unknown whether $\rcdens_N^\pure$ is equal to $\rcdens_N^\mixed$ or not.
Unfortunately, no explicit characterization of either $\rcdens_N^\pure$ or $\rcdens_N^\mixed$ exists up to date. We can specify a larger set by setting
\begin{equation}\label{rdensNdef}
\rcdens_N'=\Bigg\{ (\rho,\vjp) : \rho\in\rdens_N, \; \vjp\in L^1(\RR^d)^d,\;  \int_{\RR^d} \frac{|\vjp|^2}{\rho} <\infty \Bigg\},
\end{equation}
so that by \cref{kinweizgauge}, $\rcdens_N^\pure,\rcdens_N^\mixed\subset\rcdens_N'$.
It can be shown that $\rcdens_N'$ is also convex \cite{laestadius2014density}. We remark that the finiteness of the vorticity term in \cref{kinweizgauge}
may also be added to the definition of $\rcdens_N'$.

Whether any $(\rho,\vjp)\in\rcdens_N'$ can be represented with a Slater determinant $\Phi\in\Hil^N\cap H^1(\RR^{3N})$ such that $\rho_\Phi=\rho$ and $\vjp_\Phi=\vjp$
(the \emph{determinantal density-current $N$-representability problem})
is still not settled completely. The following result gives a partial answer, however.
\begin{theorem}(Lieb--Schrader \cite{lieb2013current})\label{liebschrader}
Let $d=3$ and suppose that $(\rho,\vjp)\in\rcdens_N'$ and set $\vnu=\rot \frac{\vjp}{\rho}$. Then the following holds true.
\begin{itemize}
\item[(i)] (Zero vorticity) If $\vnu=0$, then there exists a Slater determinant $\Phi\in\Hil^N\cap H^1(\RR^{3N})$, such that $\rho_\Phi=\rho$, $\vjp_\Phi=\vjp$ and
$$
\|\grad\Phi\|^2\le C_N\int_{\RR^3} |\grad \sqrt{\rho}|^2 + \int_{\RR^3} \frac{|\vjp|^2}{\rho}
$$
for some constant $C_N>0$.
\item[(ii)] If $N\ge 4$, then there exists a Slater determinant $\Phi\in\Hil^N$, such that $\rho_\Phi=\rho$, $\vjp_\Phi=\vjp$.
Moreover, if there is a $\delta>0$ such that the growth conditions
\begin{align*}
\sup_{\vx\in\RR^3} f(\vx) | \vnu(\vx) | < \infty, \quad \text{and} \quad \sup_{\vx\in\RR^3} f(\vx) | \vD \vnu(\vx) | < \infty
\end{align*}
hold true, then $\|\grad\Phi\|<\infty$. Here, we have set $f(\vx)=(1+x_1^2)^{\frac{1+\delta}{2}} (1+x_2^2)^{\frac{1+\delta}{2}} (1+x_3^2)^{\frac{1+\delta}{2}}$.
\end{itemize}
\end{theorem}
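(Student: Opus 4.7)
The plan is to reduce both parts to constructing a Slater determinant with prescribed density $\rho$ and paramagnetic current $\vjp$ out of orthonormal one-particle orbitals, exploiting the gauge transformation machinery of \cref{gaugesec} wherever possible.

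For part (i), the hypothesis $\vnu=0$ on the (connected components of the) support of $\rho$, combined with the integrability $\int|\vjp|^2/\rho<\infty$ coming from \cref{kinweizgauge}, implies that $\vjp/\rho=\grad g$ for some scalar $g$ with $\grad g\in L^2_{\rho}(\RR^3)^3$. I would therefore first reduce to the case $\vjp\equiv 0$: by \cref{densgauge}, once a Slater determinant $\Phi_0\in\Hil^N\cap H^1(\RR^{3N})$ with $\rho_{\Phi_0}=\rho$ and $\vjp_{\Phi_0}=0$ is in hand, the gauge-twisted determinant $\Phi(\vX):=e^{i\sum_j g(\vx_j)}\Phi_0(\vX)$ has density $\rho$, paramagnetic current $\vjp$, and
$$
\|\grad\Phi\|^2=\|\grad\Phi_0\|^2+\int_{\RR^3}\rho|\grad g|^2=\|\grad\Phi_0\|^2+\int_{\RR^3}\frac{|\vjp|^2}{\rho}.
$$
To produce $\Phi_0$ I would invoke the Harriman--Lieb construction of \cite{harriman1981orthonormal,lieb1983density}: take $\phi_k(\vx)=\sqrt{\rho(\vx)/N}\,e^{ik\theta(\vx)}$ with $\theta$ depending on a single Cartesian coordinate so that the $\phi_k$ are orthonormal; the resulting paramagnetic current is itself a gradient, so a further scalar gauge transformation removes it, and the estimate $\|\grad\Phi_0\|^2\le C_N\int|\grad\sqrt\rho|^2$ follows by direct differentiation.

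For part (ii) the gauge trick breaks down because no scalar $g$ can produce a curl-valued $\vjp/\rho$. I would instead look for $\Phi$ as a Slater determinant of orbitals $\phi_k=\sqrt{\rho/N}\,u_k$, where the frame $\vu=(u_1,\dots,u_N):\RR^3\to\CC^N$ is normalized so that $\sum_k|\phi_k|^2=\rho$ and $\int\ol{\phi_k}\phi_\ell=\delta_{k\ell}$, and so that $\sum_k\Imag(\ol{\phi_k}\grad\phi_k)=\vjp$. The last condition amounts to prescribing a tangent vector field along the pointwise image of $\vu$ in $S^{2N-1}\subset\CC^N$; for $N\ge 4$ the target sphere has real dimension at least $7$, which provides enough angular parameters to simultaneously encode the three components of $\vjp/\rho$ and a nonvanishing $\vnu$ while respecting orthogonality. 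Following Lieb--Schrader, I would write $\vu$ as an explicit product of trigonometric factors whose angles solve a first-order system driven by $\vjp/\rho$ and $\vnu$; differentiating this ansatz then yields a kinetic energy expression involving $|\grad\sqrt\rho|^2$, $|\vjp|^2/\rho$, and algebraic combinations of $\vnu$ and $\vD\vnu$ weighted by the frame angles. The hypotheses $\sup f|\vnu|<\infty$ and $\sup f|\vD\vnu|<\infty$ are precisely what make the last two families of terms integrable over $\RR^3$.

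The principal obstacle lies in part (ii): constructing $\vu$ globally on $\RR^3$ so that pointwise normalization and $L^2$-orthogonality are preserved while the PDE for the phases dictated by the prescribed vorticity admits a sufficiently regular solution. The condition $N\ge 4$ enters here through a dimension count for the available frame angles, and the anisotropic polynomial weight $f$ is the minimal input needed to control the derivatives of the explicit ansatz at infinity; relaxing either would seem to require a qualitatively different construction than the trigonometric one.
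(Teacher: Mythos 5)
This statement is not proved in the paper at all: \cref{liebschrader} is a cited result, imported verbatim from Lieb and Schrader \cite{lieb2013current} (as the theorem header indicates), and the paper offers no independent proof, so there is no ``paper's own proof'' to compare against. Your task was, in effect, to reconstruct the argument of an external reference rather than to match anything in this manuscript.

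As a sketch of the Lieb--Schrader argument, your part (i) is roughly on track---gauge away the curl-free $\vjp/\rho$ via \cref{densgauge}, then invoke a Harriman--Lieb determinant for the current-free case---but you gloss over the two points that carry the actual difficulty. First, $\vnu=0$ only makes $\vjp/\rho$ \emph{locally} a gradient, and $\supp\rho$ need not be simply connected; writing $\vjp/\rho=\grad g$ globally requires an argument. Second, the bound $\|\grad\Phi_0\|^2\le C_N\int_{\RR^3}|\grad\sqrt{\rho}|^2$ for the Harriman construction is precisely the non-trivial content of \cite[Theorem 1.2]{lieb1983density}, not a ``direct differentiation'' exercise: the Harriman orbitals carry phase gradients whose contribution must be dominated by the von Weizs\"acker term, which is where the $N$-dependence of $C_N$ comes from. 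Part (ii) is not really a proof sketch: a dimension count motivating $N\ge 4$ plus an intention to solve an unspecified ``first-order system'' does not reproduce the explicit orbital ansatz in \cite{lieb2013current}, and the particular anisotropic weight $f$ is dictated by the concrete derivative terms arising from that ansatz, which your description never pins down. You flag this yourself as the ``principal obstacle,'' so the proposal is, by your own admission, incomplete on the genuinely hard half of the theorem.
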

Moreover, \cite{lieb2013current} also exhibits an example $(\rho,\vjp)$ in the case $N=2$ (and of course $\vnu\neq 0$), when there is no $C^1$-solution
to the representability problem. Unfortunately, the restriction $N\ge 4$ and the fact that no control over the kinetic energy is retained in (ii) renders the preceding result inapplicable for our case.

The following theorem establishes mixed representability under different assumptions, now for general $N\in\NN$.
\begin{theorem}(Tellgren--Kvaal--Helgaker \cite{tellgren2014fermion})\label{erikrep}
Let $d=3$ and suppose that $(\rho,\vjp)\in\rcdens_N'$ and that
\begin{equation}\label{erikcond}
\int_{\RR^3} (1+|\vx|^2)\rho \Bigl| \vD \frac{\vjp}{\rho} \Bigr |^2\,\dvx<\infty.
\end{equation}
Then there exists a constant $C>0$ and a fermionic one-particle density matrix $\gamma$ such that $\rho_\gamma=\rho$, $\vjp_\gamma=\vjp$ and
\begin{equation}\label{erikineq}
\Tr (-\lapl\gamma)\le \int_{\RR^3} |\grad \sqrt{\rho}|^2 + CN + \int_{\RR^3} \frac{|\vjp|^2}{\rho} + C \int_{\RR^3} (1+|\vx|^2)\rho \Bigl|\vD \frac{\vjp}{\rho} \Bigr|^2\,\dvx.
\end{equation}
\end{theorem}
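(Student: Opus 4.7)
The strategy is to reduce via gauge to a transverse current and then construct the density matrix by dressing the orbitals of a density-only representation with a position-dependent unitary rotation. First, decompose $\vjp/\rho = -\grad g + \vw$ by a $\rho$-weighted Helmholtz decomposition, so that $\div(\rho\vw)=0$ and $\rot\vw=\vnu$. By \cref{densgauge}, it then suffices to construct a density matrix $\wt\gamma$ with $\rho_{\wt\gamma}=\rho$ and $\vjp_{\wt\gamma}=\rho\vw$ satisfying the kinetic bound of \cref{erikineq} with $\vjp/\rho$ replaced by $\vw$; applying the gauge with $g$ afterwards yields the desired $\gamma$ at the extra cost $\int\rho|\grad g|^2\le\int|\vjp|^2/\rho$.

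Next, invoke Lieb's density representability theorem to obtain $\gamma_0=\sum_n\lambda_n\ketbra{\phi_n}{\phi_n}$ with real $L^2$-orthonormal orbitals $\phi_n$ such that $\rho_{\gamma_0}=\rho$, $\vjp_{\gamma_0}=0$, and $\Tr(-\lapl\gamma_0)\le\int|\grad\sqrt{\rho}|^2+CN$. Dress $\gamma_0$ pointwise by a unitary $U(\vx)\in U(K)$ (with $K$ a small integer independent of $N$) acting on the orbital index: $\wt\phi_n(\vx)=\sum_m U_{nm}(\vx)\phi_m(\vx)$. The density is preserved because $U(\vx)^\dag U(\vx)=I$, while the paramagnetic current acquires a contribution from $\grad U$. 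Taking $U(\vx)=\exp\bigl(i\sum_k A_k x_k\bigr)$ with antisymmetric constant matrices $A_k\in\RR^{K\times K}$ calibrated so that the induced current equals $\rho\vw$ yields $\vjp_{\wt\gamma}=\rho\vw$. The kinetic bookkeeping gives
$$
\Tr(-\lapl\wt\gamma)=\Tr(-\lapl\gamma_0)+\int\rho\sum_{nm}|\grad U_{nm}|^2,
$$
the cross terms vanishing by the reality of $\phi_n$ and $\vjp_{\gamma_0}=0$. Bounding $|\grad U|^2$ produces a piece of order $|\vw|^2$ (absorbed into $\int|\vjp|^2/\rho$) together with a piece involving $|\vx||\vD\vw|$, which by Cauchy--Bunyakovsky--Schwarz combines into $C\int(1+|\vx|^2)\rho|\vD\vw|^2$.

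The main obstacle is designing $U(\vx)$ so as to realize $\vw$ exactly while remaining pointwise unitary on all of $\RR^3$; exponentiating a linear-in-$\vx$ antisymmetric generator does both. The weight $(1+|\vx|^2)$ in \cref{erikcond} is not an artifact: it is exactly what absorbs the linear growth of the generator of $U$ against $\rho$ in the kinetic estimate. A secondary technical point is keeping the construction independent of $N$, which is handled by letting the rotation act on a fixed auxiliary subspace of dimension $K$ (e.g.\ $K=4$ suffices to generate three independent antisymmetric matrices on orbital pairs).
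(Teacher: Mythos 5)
The statement you were asked to prove is one the paper itself does not prove: it is quoted from Tellgren--Kvaal--Helgaker \cite{tellgren2014fermion} and then superseded by the paper's own \cref{repthm}, whose proof proceeds by a completely different route (smeared, momentum-shifted Fermi spheres following Lewin--Lieb--Seiringer, not orbital dressing). So there is no in-paper argument to match; still, the proposal as written has concrete gaps.

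The central step is the claim that $U(\vx)=\exp\bigl(i\sum_k A_k x_k\bigr)$ with \emph{constant} antisymmetric $A_k$ can be ``calibrated so that the induced current equals $\rho\vw$.'' This cannot work for a generic $\vw$. If the $A_k$ commute, $U^\dag\grad_k U=iA_k$ is a constant matrix, and the induced current is a fixed quadratic form in the real orbitals---it has at most finitely many ``shapes'' and cannot reproduce an arbitrary position-dependent, divergence-free $\rho\vw$. If they do not commute the situation is even more rigid. To hit a general $\vw$ the rotation angle must itself be position-dependent, which is precisely what forces $\vD\vw$ into the kinetic cost; you cannot achieve this while keeping the generators constant.

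The claimed heuristic for the weight is also inconsistent with the construction. For $U$ generated by a linear exponent, $|\grad U|$ is \emph{bounded} (in the commuting case exactly $|A_k|$), so the kinetic correction $\int\rho\sum_{nm}|\grad U_{nm}|^2$ is a constant times $\int\rho=N$---there is no mechanism producing the $(1+|\vx|^2)$ weight of \cref{erikcond}. That weight in the original theorem comes from the $\vx$-dependence of Lieb's orbitals (their phases grow with $\vx$), not from the growth of a linear generator, and a correct bookkeeping must track that dependence. Relatedly, the claim that a rotation on a fixed $K=4$-dimensional orbital subspace suffices cannot be right: the density carried by four of Lieb's orbitals is $\sum_{n\le 4}|\phi_n|^2$, which is strictly less than $\rho$, so a unitary confined to that block cannot produce a current of the form $\rho\vw$ supported where only the remaining orbitals live. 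The cross-term cancellation and the clean additive kinetic identity also rely tacitly on all mixed orbitals having equal occupation; that is fine if you start from Lieb's Slater determinant ($\lambda_n\equiv 1$), but should be said explicitly, and even then the first term of the gradient expansion is $\sum_n\bigl|\sum_m(\grad U_{nm})\phi_m\bigr|^2$, not $\rho\sum_{nm}|\grad U_{nm}|^2$, unless one carefully uses the orbital structure.

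In short: the opening moves (gauge-splitting off $\grad g$, starting from Lieb's zero-current representer, dressing by a pointwise orbital rotation) are the right framework and are in the spirit of the cited proof, but the specific choice of a constant-generator $U$ is too rigid to realize $\vjp=\rho\vw$, and it produces a kinetic estimate of the wrong shape. You need a position-dependent rotation (typically $2\times 2$ blocks with an angle $\theta(\vx)$ adapted to $\vw$ and to the structure of Lieb's orbitals), and the factor $(1+|\vx|^2)$ then emerges from the growth of those orbitals' phase, not from the generator.
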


Representability holds also true without the integer particle number constraint on the density, i.e. when $\int_{\RR^d}\rho\not\in\NN$, but the representing state will live in
 the Fock space $\Fock$.
Analogously to the $N$-particle case, we define the set of \emph{Fock space-representable density-current pairs} as
$$
\rcdens:=\Bigg\{(\rho,\vjp) : \rho=\rho_\vGamma,\, \vjp=\vjp_\vGamma\; \text{for some} \; \vGamma\in\dm\Bigg\}.
$$
Then the conditions of \cref{erikrep} provide a smaller set, but we will use an improved version below. 

%%%%%%%%%%%%%%%%%%%%%%%%%%%%%%%%%%%%%%%%%%%%%%%%%%%%%%%%%%%%%%%%%%%%%%%%%%%%%%%%%%%%%%%%%%%%%%%%%%%
\subsection{Definition of the grand-canonical current-density functionals}
%%%%%%%%%%%%%%%%%%%%%%%%%%%%%%%%%%%%%%%%%%%%%%%%%%%%%%%%%%%%%%%%%%%%%%%%%%%%%%%%%%%%%%%%%%%%%%%%%%%

In this section, we define the current-density functionals relevant to our study. These are well-known in CDFT,
but we will also employ them in the definition of the uniform electron gas. In this section we restrict ourselves to the physically most
relevant 3D case.

We begin by introducing the \emph{grand-canonical Vignale--Rasolt functional}~\cite{Vignale1988} via
\begin{equation}\label{vrdef}
\boxed{F(\rho,\vjp):=\inf_{\substack{\vGamma\in\dm\\ \rho_{\vGamma}=\rho\\  \vjp_{\vGamma}=\vjp}} \sum_{n\ge 1} \Tr_{\Hil^n}\Biggl( -\sum_{1\le j\le n} \lapl_{\vx_j} + \sum_{1\le j<k\le n} \frac{1}{|\vx_j-\vx_k|} \Biggr)\Gamma_n}.
\end{equation}
Omitting the Coulomb interaction in the above definition, we obtain the \emph{mixed-state kinetic energy functional} $T(\rho,\vjp)$.
By definition, $T$ and $F$ is (jointly) convex on $\rcdens$.
In our analysis we will consider the \emph{indirect energy}
\begin{equation}\label{inddef}
\boxed{E(\rho,\vjp):=F(\rho,\vjp)-D(\rho)}
\end{equation}
where for any finite measures $\mu_1$ and $\mu_2$, we define the \emph{Coulomb inner product} via
$$
D(\mu_1,\mu_2):=\frac{1}{2} \iint_{\RR^3\times\RR^3} \frac{\mu_1(\dd\vx)\mu_2(\dd\vy)}{|\vx-\vy|},
$$
and additionally, as an abuse of notation, we set $D(\mu):=D(\mu,\mu)$ for the \emph{direct energy of $\mu$}. If the measures $\mu_j$ come from densities $\rho_j$,
i.e. $\dd\mu_j=\rho_j\,\dd\vx$ ($j=1,2$), then we will use the notations $D(\rho_1,\rho_2)$ and $D(\rho)$.

For convenience, we introduce the \emph{gauge-corrected indirect energy}
$$
\boxed{\ol{E}(\rho,\vjp):=E(\rho,\vjp)-\int_{\RR^3}\frac{|\vjp|^2}{\rho}}
$$
and similarly $\ol{T}(\rho,\vjp)$. We will also consider the \emph{exchange-correlation functional}
$$
\boxed{E^{\xc}(\rho,\vjp):=E(\rho,\vjp)-T(\rho,\vjp)}
$$
which is more common in the physics and chemistry literature.

The reason for the introduction of the ``gauge-corrected'' quantities is the following. Note that under a gauge transformation $\wt{\vjp}=\vjp-\rho\grad g$, the functional $F$ transforms as
\begin{equation}\label{vrgauge}
F(\rho,\wt{\vjp})=F(\rho,\vjp) - 2\int_{\RR^3} \vjp\cdot\grad g + \int_{\RR^3} \rho|\grad g|^2,
\end{equation}
i.e. like the kinetic energy (see \cref{densgauge}). Analogous relations holds for $T(\rho,\vjp)$ and $E(\rho,\vjp)$.
It follows that the gauge-corrected indirect-, and kinetic energy $\ol{E}(\rho,\vjp)$ and $\ol{T}(\rho,\vjp)$, and exchange-correlation functional $E^{\xc}(\rho,\vjp)$ are all \emph{gauge-invariant}. 

By replacing the $N$-particle Hilbert space with the corresponding Fock space, the next result follows along similar lines as \cite[Theorem 1]{kvaal2021lower}.
\begin{theorem}\label{vropt}
Let $(\rho,\vjp)\in\rcdens$. Then the infima in the definitions \cref{vrdef} of the Vignale--Rasolt functional $F$ and the kinetic energy functional $T$ are both attained.
\end{theorem}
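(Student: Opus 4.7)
The plan is to apply the direct method to a minimizing sequence, extract a weak-* limit in the trace class, and verify that the limit still respects the density and current constraints and that the functional is lower semicontinuous along the extraction. I focus on $F$; the proof for $T$ is a strict simplification since there is no Coulomb term to handle.

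First, I would pick a minimizing sequence $\{\vGamma^{(k)}\}\subset\dm$ with $\rho_{\vGamma^{(k)}}=\rho$ and $\vjp_{\vGamma^{(k)}}=\vjp$ whose energy tends to $F(\rho,\vjp)$. Nonnegativity of the Coulomb interaction yields a uniform bound on the total kinetic energies $\sum_{n}\Tr_{\Hil^n}(-\lapl\Gamma_n^{(k)})$. For each fixed $n\ge 0$, the sequence $\{\Gamma_n^{(k)}\}_k$ is bounded in $\SF_1(\Hil^n)$, so by Banach--Alaoglu (viewing $\SF_1$ as the dual of the compact operators) and a diagonal extraction, I obtain $\Gamma_n^{(k)}\rightharpoonup^\ast\Gamma_n$ for every $n$, with $0\le\Gamma_n=\Gamma_n^\dag$.

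Next, I would show $\vGamma:=\bigoplus_n\Gamma_n\in\dm$ and that it carries the prescribed density and current. Weak-* lower semicontinuity of the trace immediately gives $\sum_n\Tr\Gamma_n\le 1$. For the reverse, tightness in the particle-number direction follows from $\sum_n n\Tr\Gamma_n^{(k)}=\int_{\RR^3}\rho<\infty$ uniformly in $k$, so no mass escapes to large $n$; combined with componentwise convergence of traces, this forces $\sum_n\Tr\Gamma_n=1$. The identifications $\rho_\vGamma=\rho$ and $\vjp_\vGamma=\vjp$ use \cref{kinweizgauge}: the uniform kinetic bound keeps $\{\sqrt{\rho_{\vGamma^{(k)}}}\}$ and $\{\vjp_{\vGamma^{(k)}}/\sqrt{\rho_{\vGamma^{(k)}}}\}$ in bounded subsets of $H^1(\RR^3)$ and $L^2(\RR^3)^3$ respectively, and the corresponding $H^1$ bound on the natural orbitals of each $\Gamma_n^{(k)}$ yields local compactness via Rellich--Kondrachov, upgrading the weak-* convergence to strong $L^1_{\mathrm{loc}}$ convergence of densities and paramagnetic currents; the limit is then forced to equal $(\rho,\vjp)$ since the prelimit sequences are constant.

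Finally, I would invoke weak-* lower semicontinuity of both energy terms. The kinetic form is a convex nonnegative quadratic in each $\Gamma_n$ and hence lower semicontinuous, while the sum over $n$ is handled by Fatou. The Coulomb term is similarly lower semicontinuous once expressed on the two-particle reduced density matrix $\gamma^{(2)}_{\Gamma_n}$ and treated by the same compactness argument. These combine to give $F(\vGamma)\le\liminf_k F(\vGamma^{(k)})=F(\rho,\vjp)$, proving that the infimum is attained. The main technical obstacle is the twofold non-compactness: probability mass could a priori escape both to large particle numbers and to spatial infinity, while the current constraint must survive the limit intact. The key to overcoming this is that \cref{kinweizgauge} converts the uniform kinetic bound into simultaneous tightness of both kinds, so that the gauge-invariant terms $\int|\grad\sqrt{\rho_\vGamma}|^2$, $\int|\vjp_\vGamma|^2/\rho_\vGamma$ and the mass in high-$n$ sectors are controlled in the limit.
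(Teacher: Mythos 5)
The paper itself does not give a self-contained proof here; it simply cites Kvaal--Laestadius--Helgaker, Theorem~1, and observes that the argument carries over once the $N$-particle Hilbert space is replaced by Fock space. Your direct-method strategy is therefore the expected one and is correct in outline, but a few steps are asserted where the real work actually lies.

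The main gap is the claim that weak-$*$ convergence $\Gamma_n^{(k)}\rightharpoonup^\ast\Gamma_n$ combined with ``componentwise convergence of traces'' yields $\sum_n\Tr\Gamma_n=1$. Weak-$*$ convergence in $\SF_1(\Hil^n)$ does \emph{not} imply $\Tr\Gamma_n^{(k)}\to\Tr\Gamma_n$: a sequence of normalized pure states whose orbital escapes to spatial infinity converges weak-$*$ to zero while keeping unit trace, so mass can vanish to infinity in space even while the particle number stays tight. To rule this out one must establish tightness in \emph{space} sector by sector. The decisive observation, which your write-up never invokes, is that the constraint $\rho_{\vGamma^{(k)}}=\rho$ gives the pointwise domination $\rho_{\Gamma_n^{(k)}}\le\rho\in L^1(\RR^3)$ for every $n$ and $k$; combined with the uniform $H^1$ bound on $\sqrt{\rho_{\Gamma_n^{(k)}}}$ coming from the kinetic-energy estimate and the Hoffmann--Ostenhof inequality, this upgrades $L^1_{\mathrm{loc}}$ compactness of the sector densities to genuine $L^1(\RR^3)$ compactness, whence $\Tr\Gamma_n^{(k)}=\int\rho_{\Gamma_n^{(k)}}\to\int\rho_{\Gamma_n}=\Tr\Gamma_n$. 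Relatedly, your remark that $\sqrt{\rho_{\vGamma^{(k)}}}$ and $\vjp_{\vGamma^{(k)}}/\sqrt{\rho_{\vGamma^{(k)}}}$ are bounded in $H^1$ and $L^2$ is vacuous: these quantities are held \emph{constant} along the minimizing sequence by the constraint, so the statement controls nothing; what needs to be controlled are the individual $n$-sector quantities $\rho_{\Gamma_n^{(k)}}$ and $\vjp_{\Gamma_n^{(k)}}$.

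Two further points should be made explicit. First, passing the identities $\sum_n\rho_{\Gamma_n^{(k)}}=\rho$ and $\sum_n\vjp_{\Gamma_n^{(k)}}=\vjp$ to the limit requires exchanging the infinite sum over $n$ with $k\to\infty$; the same domination by $\rho$ (and, for the current, the pointwise bound $|\vjp_{\Gamma_n^{(k)}}|\le\tau_{\Gamma_n^{(k)}}^{1/2}\rho_{\Gamma_n^{(k)}}^{1/2}$ from \cref{basicprop}(ii)) together with tightness in $n$ supplies the dominated-convergence argument. Second, identifying the limit density and current with those of the weak-$*$ limit $\Gamma_n$ is not automatic from weak-$*$ convergence alone; it relies on the kinetic bound making the kernels bounded in $H^1$ in each variable, so that the diagonal operations defining $\rho$ and $\vjp$ pass to the limit. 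With those steps in place, the lower semicontinuity of the kinetic and Coulomb forms proceeds as you outline.
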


The grand-canonical Levy--Lieb functional $F(\rho)$ (no paramagnetic current density argument) and the kinetic energy functional $T(\rho)$ is defined similarly to $F(\rho,\vjp)$ and $T(\rho,\vjp)$, only without any constraint on $\vjp_\vGamma$ in the infimum.
\begin{proposition}[Correspondence with the nonmagnetic case]\label{corrthm}
For any $\rho\in\rdens$, the relations $T(\rho,0)=T(\rho)$ and $F(\rho,0)=F(\rho)$ hold true. In particular, 
$$
E(\rho,0)=E(\rho)\quad\text{and}\quad E^{\xc}(\rho,0)=E^{\xc}(\rho),
$$
where $E(\rho)=F(\rho)-D(\rho)$ and $E^{\xc}(\rho)=E(\rho)-T(\rho)$.
\end{proposition}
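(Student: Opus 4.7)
The inequalities $T(\rho,0)\ge T(\rho)$ and $F(\rho,0)\ge F(\rho)$ are immediate from the definitions, because the infimum defining $F(\rho,0)$ (resp.\ $T(\rho,0)$) is taken over the subset of $\vGamma\in\dm$ satisfying the extra constraint $\vjp_\vGamma=0$. The bulk of the work is therefore to establish the reverse inequalities, and the key ingredient is the time-reversal (complex-conjugation) symmetry of the Hamiltonian.

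\textbf{The conjugation symmetry.} Let $C:\Fock\to\Fock$ denote complex conjugation, which is antiunitary and preserves each particle-number sector. For any $\vGamma=\Gamma_0\oplus\Gamma_1\oplus\ldots\in\dm$, define $\vGamma^C:=C\vGamma C$; at the kernel level this simply replaces $\Gamma_n(\vX;\vY)$ by $\ol{\Gamma_n(\vX;\vY)}$. Since complex conjugation of a wavefunction preserves antisymmetry, Sobolev regularity and the $L^2$-norm, one checks directly that $\vGamma^C\in\dm$, that $\rho_{\vGamma^C}=\rho_\vGamma$ (the density being real-valued), and that $\vjp_{\vGamma^C}=-\vjp_\vGamma$, which is read off from the expression $\vjp_\gamma=\Imag\grad_\vx\gamma(\vx,\vy)|_{\vy=\vx}$ in \cref{vzetareim} upon conjugating the kernel. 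Moreover, both $-\lapl$ and the Coulomb multiplication $|\vx_j-\vx_k|^{-1}$ are real operators, hence the expectation values occurring in \cref{vrdef} are invariant under $\vGamma\mapsto\vGamma^C$; the same holds for the kinetic part alone.

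\textbf{Symmetrization and conclusion.} Given any $\vGamma\in\dm$ with $\rho_\vGamma=\rho$, form the convex combination $\vGamma':=\tfrac12(\vGamma+\vGamma^C)\in\dm$. By linearity, $\rho_{\vGamma'}=\rho$, $\vjp_{\vGamma'}=\tfrac12(\vjp_\vGamma-\vjp_\vGamma)=0$, and the expectation of each of the two relevant operators (the full Vignale--Rasolt Hamiltonian and its kinetic part) on $\vGamma'$ equals that on $\vGamma$. Taking the infimum over all such $\vGamma$ yields $F(\rho,0)\le F(\rho)$ and $T(\rho,0)\le T(\rho)$, proving the two main identities. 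The remaining statements $E(\rho,0)=E(\rho)$ and $E^{\xc}(\rho,0)=E^{\xc}(\rho)$ then follow at once from \cref{inddef} and the definition of $E^{\xc}$.

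\textbf{Main obstacle.} There is essentially no obstacle: the proof reduces to a one-line time-reversal symmetrization. The only points requiring verification are (a) that the symmetrized state still lies in $\dm$ --- here the mixed-state/Fock-space framework is crucial, since averaging two pure states need not give a pure state --- and (b) the sign flip $\vjp_{\vGamma^C}=-\vjp_\vGamma$, which is transparent from the formula for $\vjp$ as the imaginary part of the gradient of the kernel on the diagonal.
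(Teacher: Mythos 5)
Your proof is correct and rests on the same two ingredients as the paper's: time-reversal (complex-conjugation) symmetry and convexity, only carried out at the level of states rather than invoking the already-established joint convexity of $F$ and $T$ directly. The paper's proof is the one-liner $F(\rho,0)\le\tfrac12 F(\rho,\vjp)+\tfrac12 F(\rho,-\vjp)=F(\rho,\vjp)$, which is precisely what your explicit symmetrization $\vGamma\mapsto\tfrac12(\vGamma+\vGamma^C)$ verifies.
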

\begin{proof}
By convexity and time-reversal symmetry, we find $F(\rho,0)\le \tfrac{1}{2}F(\rho,\vjp)+\tfrac{1}{2}F(\rho,-\vjp)=F(\rho,\vjp)$. Hence, $F(\rho)\le F(\rho,0)\le \inf_{\vjp} F(\rho,\vjp)=F(\rho)$ as desired. The proof of the relation between the kinetic energy functionals is similar.
\end{proof}

%%%%%%%%%%%%%%%%%%%%%%%%%%%%%%%%%%%%%%%%%%%%%%%%%%%%%%%%%%%%%%%%%%%%%%%%%%%%%%%%%%%%%%%%%%%%%%%%%%%
\section{Main results}
%%%%%%%%%%%%%%%%%%%%%%%%%%%%%%%%%%%%%%%%%%%%%%%%%%%%%%%%%%%%%%%%%%%%%%%%%%%%%%%%%%%%%%%%%%%%%%%%%%%

%%%%%%%%%%%%%%%%%%%%%%%%%%%%%%%%%%%%%%%%%%%%%%%%%%%%%%%%%%%%%%%%%%%%%%%%%%%%%%%%%%%%%%%%%%%%%%%%%%%
\subsection{The magnetic uniform electron gas}\label{uegsec}
%%%%%%%%%%%%%%%%%%%%%%%%%%%%%%%%%%%%%%%%%%%%%%%%%%%%%%%%%%%%%%%%%%%%%%%%%%%%%%%%%%%%%%%%%%%%%%%%%%%

This section is devoted to the precise definition of the various energetic quantities of the 3D uniform electron gas (UEG) with constant density and constant vorticity.

Recall that the vorticity of the ground state of the \emph{noninteracting} UEG in a constant magnetic field $\vB_0$ is $\vnu_0=-\vB_0$.
Physically, we can generate a constant vorticity $\vnu_0$ in the noninteracting UEG by subjecting it to an external magnetic field $-\vB_0$.
Moreover, linear response theory suggests the same holds true to the first order in the \emph{interacting} case \cite{giuliani2005quantum}.
Instead of referring to the external magnetic field, we will simply demand the UEG to have constant density \emph{and} constant vorticity. 

A function $0\le \eta\in C_c^\infty(\RR^3)$ such that $\supp\eta\subset B_1$, $\int_{\RR^3} \vx\eta(\vx)\,\dd\vx=0$ and $\int_{\RR^3}\eta=1$ will be
called a \emph{regularization function}. Note that this is a slightly more general definition than the one in \cite{lewin2018statistical,lewin2020local},
where it is required that a regularization function be radial. We also introduce the notation $\eta_\delta(\vx)=\delta^{-3}\eta(\delta^{-1}\vx)$.

Define the \emph{indirect gauge-corrected energy per volume} of the (interacting) UEG at constant density $\rho_0\in\RR_+$ and constant vorticity $\vnu_0\in\RR^3$ inside a bounded domain $\Omega\subset\RR^3$ smeared with $\eta$ as
\begin{equation}\label{etetradef}
\boxed{e_{\Omega,\eta}(\rho_0,\vnu_0):=\frac{1}{|\Omega|}\ol{E}\big( (\iden_{\Omega}*\eta)(\rho_0,\tfrac{1}{2}\rho_0\vnu_0 \times\cdot)\big)}
\end{equation}
Note that $\rot \left(\tfrac{1}{2}\vnu_0\times\vx\right) = \vnu_0$ and that $e_{\Omega,\eta}(0,\vnu_0)=e_{\Omega,\eta}(0,0)=0$. The smearing of the indicator $\iden_\Omega$ is necessary, since $\int_{\RR^3} |\grad\sqrt{\iden_\Omega}|^2=\infty$ and this would render the energy $E$ infinite (see \cref{kinweizgauge}).
Recall that $\ol{E}$ contains a term of the form $-\int_{\RR^3} \frac{|\vjp|^2}{\rho}$, which we call the \emph{gauge correction term},
and it needs to be subtracted otherwise the energy per volume would diverge
in the thermodynamic limit. In fact, the gauge correction term grows faster than the volume,
\begin{align*}
\frac{1}{|\ell\Omega|}\int_{\RR^3} (\iden_{\ell\Omega} * \eta)(\vx)  |\vnu_0\times\vx|^2\,\dd\vx=C_{\Omega,\eta,\vnu_0} \ell^2,
\end{align*}
where $C_{\Omega,\eta,\vnu_0}>0$ is a constant depending only on $\Omega$, $\eta$ and $\vnu_0$.
Hence, by \cref{kinweizgauge}, the energy per volume diverges as $\ell\to\infty$. Notice that replacing 
$\tfrac{1}{2}\rho_0\vnu_0 \times\cdot\to \tfrac{1}{2}\rho_0\vnu_0 \times\cdot + \rho_0\grad g$
in the definition \cref{etetradef} above would yield the same value for $\ol{E}$ due to the gauge transformation rule \cref{vrgauge}.
In particular, one can choose the \emph{Landau gauge} instead of the \emph{symmetric gauge} $\tfrac{1}{2}\vnu_0\times\vx$ above without changing anything. 

By replacing $\ol{E}$ with $\ol{T}$ on the r.h.s. of \cref{etetradef}, we can also define the \emph{kinetic energy per volume} $\tau_{\Omega,\eta}(\rho_0,\vnu_0)$. 
Finally, we define the \emph{exchange-correlation energy per volume}
\begin{align*}
e_{\Omega,\eta}^{\xc}(\rho_0,\vnu_0):= e_{\Omega,\eta}(\rho_0,\vnu_0) - \tau_{\Omega,\eta}(\rho_0,\vnu_0).
\end{align*}
Note that, in this case the gauge correction terms cancel. 

Following Fischer, we say that a sequence of domains $\{\Omega_N\}_{N\in\NN}\subset\RR^3$ has a \emph{uniformly regular boundary} if
$$
|\partial\Omega_N + B_r|\le Cr|\Omega_N|^{2/3}\quad\text{for all}\quad r\le \frac{|\Omega_N|^{2/3}}{C},
$$
for some $C>0$. For instance, one may take $\Omega_N=N\Omega$ for some convex, bounded and open set \cite{HAINZL2009454}.

The following theorem is the main result of the paper. 
\begin{theorem}[Uniform electron gas]\label{thermothm}
Let $\rho_0\in\RR_+$, $\vnu_0\in\RR^3$
and let $\{\Omega_N\}_{N\in\NN}\subset\RR^3$ be a sequence of bounded domains such that $|\Omega_N|\to\infty$ and it has a uniformly regular boundary.
Let $\delta_N>0$ be any sequence such that $\delta_N/|\Omega_N|^{1/3}\to 0$ and $\delta_N|\Omega_N|^{1/3}\to\infty$.
Then the following thermodynamic limit exists
$$
e_\UEG(\rho_0,\nu_0):=\lim_{N\to\infty} e_{\Omega_N,\eta_{\delta_N}}(\rho_0,\vnu_0)
$$
and is independent of the choice of the sequences $\{\Omega_N\}$, $\{\delta_N\}$ and the regularization function $\eta$.
Analogous results hold true for $\tau_\UEG(\rho_0,\nu_0)$ and $e_\UEG^{\xc}(\rho_0,\nu_0)$.
\end{theorem}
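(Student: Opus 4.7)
My strategy is to adapt the Lewin--Lieb--Seiringer blueprint for the non-magnetic UEG (in particular the grand-canonical formulation used in \cite{lewin2020local}) to the present setting, the extra input being the constant-vorticity constraint and the associated gauge structure. It is crucial that the object of study is the gauge-corrected indirect energy $\ol{E}$: without subtracting both the direct Coulomb energy $D(\rho)$ and the gauge-correction $\int |\vjp|^2/\rho$, the energy per volume would diverge in the thermodynamic limit, since $|\vjp(\vx)|^2\sim|\vnu_0\times\vx|^2$ grows quadratically with distance from the origin.

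Step~1 would be to establish a two-sided a priori bound on $e_{\Omega,\eta_\delta}(\rho_0,\vnu_0)$. The upper bound comes from evaluating $\ol{E}$ on a Fock-space trial state; a convenient choice is the one produced by \cref{erikrep} for the smeared pair $(\iden_\Omega*\eta_\delta)(\rho_0,\tfrac{1}{2}\rho_0\vnu_0\times\cdot\,)$, for which the growth integral \cref{erikcond} is finite because $\vD_s(\vjp/\rho)=0$ in symmetric gauge, leaving only a bounded antisymmetric part proportional to $\vnu_0$. The lower bound uses the Lieb--Oxford inequality to control $-D(\rho)$ together with the gauge-invariant kinetic estimate \cref{kinweizgauge}, so that the remaining terms scale like $|\Omega|$ (plus boundary corrections).

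Step~2 is the \emph{subadditivity} of $\ol{E}$ under disjoint unions of smeared regions, together with a Fekete-type extraction of the limit. Given near-optimizers $\vGamma_1,\vGamma_2\in\dm$ for two far-apart smeared domains, I would form their tensor product in Fock space, but \emph{after} applying local gauge transformations so each factor carries its own ``symmetric gauge'' centered at its own centroid; by \cref{vrgauge} this does not change $\ol{E}$ of the individual pieces. The Coulomb cross-energy between the two factors is exactly $D(\rho_1,\rho_2)$, which cancels the cross-term in the direct-energy subtraction in $\ol{E}$, up to a boundary error controlled by a Graf--Schenker-type estimate. This yields subadditivity along, say, doubling cubes, and Fekete's lemma then provides the thermodynamic limit on cube sequences. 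Independence of $\{\Omega_N\}$, $\{\delta_N\}$ and $\eta$ follows by sandwiching with cubes: the uniform boundary regularity forces the shell $\{\vx:(\iden_{\Omega_N}*\eta_{\delta_N})(\vx)\in(0,\rho_0)\}$ to have volume $O(\delta_N|\Omega_N|^{2/3})=o(|\Omega_N|)$ because $\delta_N/|\Omega_N|^{1/3}\to 0$, while the assumption $\delta_N|\Omega_N|^{1/3}\to\infty$ keeps the gradient-of-the-smearing terms subleading. The analogous result for $\tau_\UEG$ follows by running the same argument with $\ol{T}$ in place of $\ol{E}$, and $e_\UEG^{\xc}=e_\UEG-\tau_\UEG$ then follows by subtraction (with the gauge-correction terms cancelling explicitly, as remarked after the definition).

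The principal difficulty sits in Step~2: in the non-magnetic LLS argument one can essentially glue independent states with prescribed densities, but here $\vjp$ is tied to position through the constant vorticity, so naive tensoring introduces spurious gauge contributions that grow with the diameters of the subdomains. The local-gauge-rotation trick above is designed to neutralize this divergence, but making the error analysis rigorous for the kinetic, Coulomb, and antisymmetrization cross-terms --- uniformly in the scale $|\Omega_N|^{1/3}$ --- and simultaneously controlling the boundary correction from the smearing is the technical heart of the proof.
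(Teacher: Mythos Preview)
Your outline captures the right difficulties---the position-dependent gauge contribution and the need to work with the gauge-invariant quantity $\ol{E}$---but the execution diverges from the paper and contains two substantive gaps.

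First, the trial-state bound. You invoke \cref{erikrep} to produce an upper bound, noting that \cref{erikcond} is finite because $\vD_s(\vjp/\rho)=0$ in symmetric gauge. Finite, yes, but not uniformly: for $\rho=(\iden_\Omega*\eta_\delta)\rho_0$ and $\vjp/\rho=\tfrac12\vnu_0\times\vx$ the integrand $|\vD(\vjp/\rho)|^2$ is a constant proportional to $|\vnu_0|^2$, so \cref{erikcond} contributes $\rho_0|\vnu_0|^2\int(1+|\vx|^2)(\iden_\Omega*\eta_\delta)\sim|\Omega|^{5/3}$, i.e.\ a per-volume term of order $|\Omega|^{2/3}\to\infty$. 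This is exactly why the paper proves the sharper \cref{repthm}: the bound $\int\rho|\vD\vw|$ there is $O(|\Omega|)$ for constant $\vD\vw$, and its consequence \cref{ethm} is what is actually used to control boundary tetrahedra and the $C\delta/\ell$ splitting in the proof of \cref{tetrathermo}. Without a replacement for \cref{erikrep} of this strength, your boundary/error analysis cannot close.

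Second, the gluing step. Your local-gauge-recentering trick is correct in spirit, but it only works when the smeared supports are \emph{disjoint}: if $\supp\rho_1\cap\supp\rho_2\neq\emptyset$, the field
\[
\frac{\rho_1\,\tfrac12\vnu_0\times\vc_1+\rho_2\,\tfrac12\vnu_0\times\vc_2}{\rho_1+\rho_2}
\]
is not a gradient on the overlap, so the tensor product of the recentered pieces is not a valid trial state for the combined constraint (even up to a global gauge). A Fekete argument on doubling cubes needs adjacent subcubes, where this obstruction is unavoidable. The paper sidesteps this entirely: it does \emph{not} run a subadditivity/Fekete argument. Instead it proves matching \emph{decoupling} inequalities (\cref{lboundthm} via IMS localization and Graf--Schenker for the lower bound; \cref{eneup} via \emph{shrunk} tetrahedra with disjoint smeared supports for the upper bound), averaged over $\SO(3)$ and translations. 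The crucial bookkeeping of the divergent gauge terms is done through \cref{etranslemma}, which shows that after recentering each tetrahedron the residual gauge contribution is exactly the cross term $|\vR\vnu_0\times(\ell\vz+\vtau+\ell\vz_j)|^2$, and these cancel \emph{term-by-term} between the localized energy and the partition-of-unity expansion of the global gauge correction $\int(\iden_\Omega*\eta_\delta)|\vnu_0\times\vx|^2$ (this cancellation, not a subadditivity error estimate, is the heart of the argument). With both decoupling bounds in hand, the existence of the tetrahedral limit (\cref{tetrathermo}) follows by comparing liminf and limsup directly, and the general-domain statement is obtained by tiling $\Omega_N$ with tetrahedra, not by sandwiching with cubes.
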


We see using \cref{ethm} (see below) for the upper bound and the Lieb--Thirring and Lieb--Oxford inequalities for the lower bound (see \cref{ltcoro,lobound} below),  
$$
|e_\UEG(\rho_0,\nu_0)|\le c_\TF \rho_0^{5/3} + c_\LO \rho_0^{4/3} + c'\rho_0\nu_0.
$$
Note that in $e_\UEG(\rho_0,\nu_0)$ only the magnitude of $\vnu_0$ matters due to rotational invariance in the thermodynamic limit (see \cref{etranslemma}(ii)). 
It is clear from \cref{corrthm} that we recover the nonmagnetic case by putting $\nu_0=0$, i.e. $e_\UEG(\rho_0,0)=e_\UEG(\rho_0)$. 
Moreover, \cref{kinweizgauge} and the Lieb--Oxford inequality implies that $e_\UEG(\rho_0,\nu_0)\ge  c \rho_0\nu_0 - c_\LO\rho_0^{4/3}$,
so $e_\UEG(\rho_0,\nu_0)\to +\infty$ as $\nu_0\to\infty$ for every fixed $\rho_0>0$. 

\begin{remark}
Unfortunately, we cannot offer a definition of the \emph{classical} UEG in the magnetic case, akin to \cite{lewin2018statistical} in the nonmagnetic case.
The reason is that $\vjp$ uses the phase of the wavefunction in an essential way, and we have no obvious classical probabilistic interpretation of that.
\end{remark}

%%%%%%%%%%%%%%%%%%%%%%%%%%%%%%%%%%%%%%%%%%%%%%%%%%%%%%%%%%%%%%%%%%%%%%%%%%%%%%%%%%%%%%%%%%%%%%%%%%%
\subsection{Upper bound on the kinetic energy functional and representability}
%%%%%%%%%%%%%%%%%%%%%%%%%%%%%%%%%%%%%%%%%%%%%%%%%%%%%%%%%%%%%%%%%%%%%%%%%%%%%%%%%%%%%%%%%%%%%%%%%%%

In this section, we describe a technical tool that is needed in the proof of \cref{thermothm}, but it is also of independent interest mainly because 
it improves previous results on the (mixed) representability problem for density-current density pairs (see \cref{repsec}).

If $\rho_\gamma$ and $\vjp_\gamma$ comes from an $H^1$-regular one-particle density matrix $\gamma$, we know from \cref{kinweizgauge} that we necessarily have
that 
$$
\int_{\RR^d} \rho_\gamma \left|\vD_a\frac{\vjp_\gamma}{\rho_\gamma}\right|<\infty.
$$
Note that if $d=3$, the the condition is simply $\int_{\RR^3} \rho_\gamma |\vnu_\gamma|<\infty$, where $\vnu_\gamma=\rot \frac{\vjp_\gamma}{\rho_\gamma}$ is the vorticity.

Using the method of proof of \cite[Theorem 3]{lewin2020local}, we obtain the following.
\begin{theorem}\label{repthm}
Let $q$ be the number of spin states. Fix $\rho\in L^1(\RR^d;\RR_+)$ such that $\grad\sqrt{\rho}\in L^2(\RR^d)^d$. 
Let $\vv\in L^2_\rho(\RR^d)^d$ and suppose that $\vv$ admits a decomposition $\vv=\grad g + \vw$, where $\grad g\in L^2_\rho(\RR^d)$, $\vw\in L^2_\rho(\RR^d)^d$, $\vD\vw\in L^1_\rho(\RR^d)^{d\times d}$.

Then there exists a one-particle density matrix $\gamma$ with finite kinetic energy, such that $\rho_\gamma=\rho$, $\vjp_\gamma=\vjp:=\rho\vv$ and
\begin{equation}\label{repineq}
\begin{aligned}
\Tr(-\lapl\gamma)&\le (1+\kappa_1\epsilon) q^{-2/d} c_\TF \int_{\RR^d} \rho^{1+2/d} + \frac{\kappa_2(1+\sqrt{\epsilon})^2}{\epsilon} \int_{\RR^d} |\grad\sqrt{\rho}|^2 \\
&+\int_{\RR^d}\frac{|\vjp|^2}{\rho} + C_d\int_{\RR^d} \rho|\vD\vw|
\end{aligned}
\end{equation}
for any $\epsilon>0$, and for some universal constants $\kappa_1,\kappa_2>0$. Here, the (dimension-dependent) \emph{Thomas--Fermi constant} $c_\TF$ is given by
$c_\TF:=\frac{d}{d+2}\frac{4\pi^2}{|B_1|^{2/d}}$ and $C_d=1+\frac{d^3}{4}$.
\end{theorem}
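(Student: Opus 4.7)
The plan is to reduce the problem to a gauge-trivial case by absorbing $\grad g$ into a gauge transformation, and then to adapt the grand-canonical construction in the proof of \cite[Theorem 3]{lewin2020local} so that it carries a position-dependent momentum boost. First, using the decomposition $\vv = \grad g + \vw$, it is enough to produce a one-particle density matrix $\gamma_0$ with $\rho_{\gamma_0} = \rho$ and $\vjp_{\gamma_0} = \rho\vw$ that satisfies
\begin{equation*}
\Tr(-\lapl\gamma_0) \le (1+\kappa_1\epsilon)\, q^{-2/d} c_\TF \int_{\RR^d} \rho^{1+2/d} + \frac{\kappa_2(1+\sqrt\epsilon)^2}{\epsilon}\int_{\RR^d}|\grad\sqrt\rho|^2 + \int_{\RR^d}\rho|\vw|^2 + C_d\int_{\RR^d}\rho|\vD\vw|.
\end{equation*}
The sought-for $\gamma$ is then obtained by the gauge transformation $\gamma(\vx,\vy):=e^{-i(g(\vy)-g(\vx))}\gamma_0(\vx,\vy)$. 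By \cref{densgauge} one has $\rho_\gamma=\rho$, $\vjp_\gamma=\rho\vw+\rho\grad g=\vjp$, and
\begin{equation*}
\Tr(-\lapl\gamma) = \Tr(-\lapl\gamma_0) + 2\int_{\RR^d}\rho\vw\cdot\grad g + \int_{\RR^d}\rho|\grad g|^2 = \Tr(-\lapl\gamma_0) + \int_{\RR^d}\rho|\vv|^2 - \int_{\RR^d}\rho|\vw|^2.
\end{equation*}
Since $\int\rho|\vv|^2=\int|\vjp|^2/\rho$, the auxiliary term $\int\rho|\vw|^2$ cancels and \cref{repineq} follows.

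The construction of $\gamma_0$ is the main body of the proof. I would follow the cube-partition strategy of \cite[Theorem 3]{lewin2020local}: tile $\RR^d$ into cubes $Q_k$ of a small side length $\ell$ centered at points $\vx_k$, build on each cube a local grand-canonical Fermi-sea state whose density matches a suitably mollified version of $\rho(\vx_k)$, and patch these together via a Fock-space convex combination as in \cite{lewin2020local}. The new ingredient is that on each cube the Fermi ball is boosted by the local momentum $\vw(\vx_k)$, so that the occupied plane-wave momenta are $\{|\vk-\vw(\vx_k)|\le k_F(\rho(\vx_k))\}$. The resulting one-body matrix has $\rho_{\gamma_0}=\rho$ and $\vjp_{\gamma_0}=\rho\vw$ exactly, and its kinetic energy splits into four pieces: the Thomas--Fermi bulk contribution of the (unshifted) Fermi balls, which integrates to $(1+\kappa_1\epsilon)q^{-2/d}c_\TF\int\rho^{1+2/d}$; a von Weizs\"acker correction $\frac{\kappa_2(1+\sqrt\epsilon)^2}{\epsilon}\int|\grad\sqrt\rho|^2$ coming from the density smoothing and the IMS-type cube-localization; a boost energy $\int\rho|\vw|^2$ from shifting each Fermi sphere; and a cross-cube mismatch driven by the variation of $\vw$ between neighbours.

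The first three terms reproduce \cite[Theorem 3]{lewin2020local} essentially verbatim with the boost added, and the universal constants $\kappa_1,\kappa_2$ and the $\epsilon$-dependence are inherited from there. The principal obstacle is the fourth term. A naive patching of adjacent cubes gives a mismatch of order $|\vD\vw|^2\ell^2$ per cube, which is too crude to yield the linear-in-$|\vD\vw|$ bound required in \cref{repineq}. To recover the correct form, I would gauge-dress the kernels across each shared cube face so that the kinetic mismatch reduces to a first-order finite difference $\ell\,|\vD\vw(\vx_k)|$; summing over the $\sim\ell^{-d}$ faces in each coordinate direction then produces the bound $C_d\int\rho|\vD\vw|$, with the combinatorial constant $C_d=1+d^3/4$ arising from enumerating face contributions and diagonal overlaps in dimension $d$. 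The other non-trivial check is that the pointwise momentum boost preserves the Pauli constraint $0\le\gamma_0\le\iden$, which should follow since the boost acts as a one-body unitary on the underlying Slater determinants before averaging.
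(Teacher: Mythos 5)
Your overall plan is right on the gauge step: first build a density matrix $\gamma_0$ with $\rho_{\gamma_0}=\rho$, $\vjp_{\gamma_0}=\rho\vw$, and then apply the gauge transformation with gauge function $g$ to produce $\gamma$ with $\vjp_\gamma=\rho\vv$; the identity $|\vv|^2 = |\vw|^2 + 2\vw\cdot\grad g + |\grad g|^2$ makes the auxiliary $\int\rho|\vw|^2$ cancel and leaves $\int|\vjp|^2/\rho$. This is exactly how the paper handles the gauge (the paper just does it as the last step rather than the first).

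However, the construction of $\gamma_0$ is not carried out, and the sketch you give misidentifies both the strategy of the reference you invoke and the mechanism behind the key term. Theorem~3 of \cite{lewin2020local}, as used in this paper, is not a cube-tiling construction. It is a continuous, pointwise-in-$\vx$ averaging of Fermi-sea kernels over a density parameter $t$ (and, here, additionally a boost parameter $\vu$): the paper sets
$$
\gamma_0(\vx,\vy)=\int_{\RR^d}\int_0^{\infty} \sqrt{\theta(\vu-\vw(\vx))\,\eta\!\left(\tfrac{t}{\rho(\vx)}\right)}\; f_t^\vu(\vx-\vy)\; \sqrt{\theta(\vu-\vw(\vy))\,\eta\!\left(\tfrac{t}{\rho(\vy)}\right)}\,\frac{\dd t}{t}\,\dd\vu,
$$
with $f_t^\vu$ the boosted Fermi-sea kernel and $\theta,\eta$ suitable smooth nonnegative weights. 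The Pauli constraint $0\le\gamma_0\le\iden$ is checked directly from $0\le f_t^\vu\le\iden$ and the normalization of the weights; it is not a matter of unitary boosts on Slater determinants. There is no partition of $\RR^d$ into cubes in this construction (the Graf--Schenker tetrahedral tiling appears only in the thermodynamic-limit part of the paper, not here).

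The real gap is in how the $C_d\int\rho|\vD\vw|$ term is obtained. You correctly observe that a naive patching argument would give a quadratic mismatch of order $|\vD\vw|^2\ell^2$ per cube, and then propose to "gauge-dress the kernels across cube faces" to reduce it to first order, with $C_d=1+d^3/4$ arising from counting faces and diagonal overlaps. This is the crux of the theorem, and the proposal offers no actual argument for it --- it asserts that the dressing can be done without indicating how the first-order reduction would survive the Pauli constraint or the normalization. In the paper's proof, the linear dependence on $|\vD\vw|$ is not combinatorial at all. Differentiating the kernel produces a term of the form
$$
(*)=\rho(\vx)\int_{\RR^d}\theta(\vu)|\vu|^2\,\dd\vu \;+\; \rho(\vx)\int_{\RR^d}\frac{|\vD\vw(\vx)\grad\theta(\vu)|^2}{4\theta(\vu)}\,\dd\vu \;\le\; \delta\,\rho(\vx) + \frac{d^3}{4\delta}\,\rho(\vx)\,|\vD\vw(\vx)|^2,
$$
where the $\delta$-dependence comes from taking $\theta$ to be a Gaussian of variance proportional to $\delta/d$. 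The linear bound $C_d\,\rho|\vD\vw|$ then follows by optimizing $\delta=|\vD\vw(\vx)|$ pointwise, which is where $C_d=1+d^3/4$ comes from; it is the sum of the coefficients in the arithmetic--geometric mean step, not a face count. Without this $\delta$-optimization idea, the linear-in-$|\vD\vw|$ bound --- which is what distinguishes this theorem from \cref{erikrep} --- is not established.

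Two further remarks. First, the assumption $\vD\vw\in L^1_\rho(\RR^d)^{d\times d}$ is precisely tailored so that $\int\rho|\vD\vw|<\infty$; your cube approach would implicitly require more regularity (e.g., uniform bounds on $\vD\vw$) to control the boundary-face mismatches, so even if your construction could be completed it would not prove the theorem under the stated hypothesis. Second, the von Weizs\"acker term with constant $\kappa_2(1+\sqrt\epsilon)^2/\epsilon$ also comes from the continuous weight $\eta(t/\rho(\vx))$, not from an IMS cube-localization, though this is a more minor point.
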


The inequality \cref{repineq} is \emph{not} gauge-invariant, to see this, take $\vv(\vx)=(0,x_1,0)$ (Landau gauge) and $\vw(\vx)=\frac{1}{2}\ve_3\times\vx$ (symmetric gauge),
with $g(\vx)=\frac{1}{2}x_1x_2$ and some appropriate $\rho$. 
Then $\vD_a\vv=\vD_a\vw$, $\vD_s\vv\neq 0$ but $\vD_s\vw=0$. Clearly, $\vw=\vv$ and $g=0$ is also a possible choice. The lack of gauge invariance in \cref{repineq} solely comes from the presence of the symmetric derivative $\vD_s\vw$,
which we discuss further below. The utility of the formulation of \cref{repthm} is that with a proper choice of gauge, we can hope to make $\vD_s\vw$ vanish.

 The condition \cref{erikcond} of \cref{erikrep}  is easily seen to imply $\vD\vw\in L^1_\rho(\RR^d)^{d\times d}$ (up to a gauge) via the Cauchy--Bunyakovsky--Schwarz inequality.
We also note that for $\vv=0$, we \emph{exactly} recover the statement (and proof) of \cite[Theorem 3]{lewin2020local}, since that construction actually yields a state $\gamma$ with $\vjp_\gamma=0$. Based on the kinetic energy of the ground state of the noninteracting electron gas, we conjecture that the
optimal constant in 3D is $C_3=1$.

The last term on the r.h.s. of \cref{repineq} has similar role as the von Weizs\"acker term. It can be rewritten as
\begin{align*}
\int_{\RR^d} \rho|\vD\vw|&=\int_{\RR^d} \rho \sqrt{|\vD_s\vw|^2+|\vD_a\vw|^2}=\int_{\RR^d} \rho\sqrt{|\vD_s\vw|^2+|\vD_a\vv|^2},
\end{align*}
where we used the fact $\vD_a \grad p=0$ for any $C^2$ scalar field $p:\RR^d\to\RR$, due to Young's theorem.
Hence
\begin{align*}
\int_{\RR^d} \rho |\vD\vw|\simeq \int_{\RR^d} \rho |\vD_s\vw| + \int_{\RR^d} \rho |\vD_a\vv|,
\end{align*}
in the sense of norm equivalence. In 3D, the second term on the r.h.s. is proportional to $\int_{\RR^3} \rho|\vnu|$, which is gauge-invariant.

The first term on the r.h.s. of the preceding equation (with $\vw = \vw_\gamma$)---which we call the \emph{strain term}---is not necessarily finite for $H^1$-regular $\gamma$. 
However, if we assume that $\gamma$ is $H^2$-regular (and the gauge function has $\vD^2 g\in L^1_\rho(\RR^d)^{d\times d}$), then it is not hard to show that the strain term is finite.
Recall that the eigenstates of the magnetic Schr\"odinger Hamiltonian are in $H^2$.
Although \cref{repthm} does not provide the complete solution to the mixed density-current $N$-representability problem (it also does not provide an $H^2$-regular $\gamma$ without further assumptions, sadly), it seems to provide weak enough assumptions that any practically 
relevant $(\rho,\vjp)$ pair verifies it.
That being said, it is our belief that the presence of the strain term is only an artifact of the method of the proof of both \cref{erikrep} and \cref{repthm}.
It remains a challenge to devise a gauge-invariant analogue of \cref{repineq}.

Motivated by \cref{repthm}, we define the \emph{smaller} set $\wt{\rcdens}\subset\rcdens$, via
$$
\wt{\rcdens}:=\left\{ \begin{aligned}
(\rho,\vjp) : &\;\rho\in\rdens, \, \int_{\RR^d} \frac{|\vjp|^2}{\rho}<\infty, \, \int_{\RR^d} \rho \left|\vD\vw\right|<\infty,\;\text{where}\; \frac{\vjp}{\rho}=\grad g + \vw,\\
&\text{for some}\; \grad g\in L^2_\rho(\RR^d)\; \text{and}\;\vw\in L_\rho^2(\RR^d)^d
\end{aligned}\right\}.
$$

With this notation at hand, we can provide an upper bound on the kinetic energy functional $T(\rho,\vjp)$.

\begin{corollary}\label{tthm}
For any $(\rho,\vjp)\in\wt{\rcdens}$, we have $T(\rho,\vjp)\le \text{r.h.s. of \cref{repineq}}$.
\end{corollary}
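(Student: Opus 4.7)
The plan is to upgrade the one-particle density matrix produced by \cref{repthm} to a Fock-space state $\vGamma\in\dm$ with the same density, paramagnetic current density, and kinetic energy, so that $\vGamma$ becomes admissible in the infimum defining $T(\rho,\vjp)$ and directly realizes the bound.

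Given $(\rho,\vjp)\in\wt{\rcdens}$, I would fix a decomposition $\vjp/\rho=\grad g+\vw$ as in the definition of $\wt{\rcdens}$ and apply \cref{repthm} to obtain a one-particle density matrix $\gamma$ with $\rho_\gamma=\rho$, $\vjp_\gamma=\vjp$, and $\Tr(-\lapl\gamma)$ no larger than the right-hand side of \cref{repineq}. Diagonalize $\gamma=\sum_{j\ge 1}\lambda_j\ketbra{\phi_j}{\phi_j}$ with $\lambda_j\in[0,1]$, $\{\phi_j\}\subset H^1(\RR^d)$ $L^2$-orthonormal, $\sum_j\lambda_j=\int\rho<\infty$, and $\sum_j\lambda_j\|\grad\phi_j\|^2<\infty$. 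For every finite $S\subset\NN$, let $\Phi_S\in\Hil^{|S|}$ be the Slater determinant of $(\phi_j)_{j\in S}$, and form the quasi-free lift
$$
\vGamma=\sum_{\substack{S\subset\NN\\ |S|<\infty}} \Biggl(\prod_{j\in S}\lambda_j\prod_{j\notin S}(1-\lambda_j)\Biggr)\ketbra{\Phi_S}{\Phi_S}.
$$
Since $\sum_j\lambda_j<\infty$, Borel--Cantelli ensures that the associated Bernoulli product measure on $\{0,1\}^\NN$ is concentrated on finite subsets, so the scalar weights sum to one and $\vGamma\in\SF_1(\Fock)$ is a block-diagonal state of unit trace.

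A direct partial-trace computation---resting on the identity $\sum_{S\ni j}\prod_{i\in S}\lambda_i\prod_{i\notin S}(1-\lambda_i)=\lambda_j$---yields $\gamma_\vGamma=\gamma$, hence $\rho_\vGamma=\rho$ and $\vjp_\vGamma=\vjp$. Because the kinetic energy is a one-body observable, $\sum_{n\ge 0}\Tr_{\Hil^n}(-\lapl\,\Gamma_n)=\Tr(-\lapl\gamma)<\infty$, so $\vGamma\in\dm$ and is admissible in the infimum defining $T(\rho,\vjp)$, which together with \cref{repineq} gives the claim. I do not anticipate any serious obstacle: the construction is the standard quasi-free lift for fermionic systems, and the only nontrivial verification is the 1-RDM identity just mentioned, which is pure combinatorics on the Bernoulli weights.
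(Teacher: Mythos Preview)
Your argument is correct and matches the paper's intended route: the paper treats this corollary as immediate from \cref{repthm}, tacitly relying on the same quasi-free lift of the one-particle density matrix to a Fock-space state that it cites explicitly (via \cite{bach1994generalized}) for the subsequent \cref{ethm}. You have simply written out that lift by hand, which is fine; the only point worth double-checking is that the $\gamma$ produced in \cref{repthm} indeed satisfies $0\le\gamma\le\iden$ (so the Bernoulli weights are genuine probabilities), and this is verified in Step~2 of that proof.
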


Moreover, using the fact \cite{bach1994generalized} that for every one-particle density matrix $\gamma$ there exists a quasi-free state $\vGamma$, such that $\gamma_\vGamma=\gamma$, we obtain the following.
\begin{corollary}\label{ethm}
For any $(\rho,\vjp)\in\wt{\rcdens}$, we have $E(\rho,\vjp)\le \text{r.h.s. of \cref{repineq}}$.
\end{corollary}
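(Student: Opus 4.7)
The plan is to deduce the bound on the indirect energy $E(\rho,\vjp) = F(\rho,\vjp) - D(\rho)$ from the kinetic-energy bound of \cref{repthm} by exhibiting an admissible trial state whose Coulomb interaction is controlled by the direct energy $D(\rho)$.

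First I would apply \cref{repthm} to produce an $H^1$-regular one-particle density matrix $\gamma$ with $\rho_\gamma=\rho$, $\vjp_\gamma=\vjp$ and $\Tr(-\lapl\gamma)$ bounded by the right-hand side of \cref{repineq}. Then I would invoke the result from \cite{bach1994generalized}, cited just before the statement, that every one-particle density matrix $\gamma$ with $0\le\gamma\le\iden$ is the reduced one-particle density matrix $\gamma_\vGamma$ of a gauge-invariant (i.e.\ particle-number conserving) quasi-free state $\vGamma$ on $\Fock$. Such $\vGamma$ is automatically block-diagonal in the particle number, and its $H^1$-regularity in the sense of $\dm$ follows from the identity $\sum_{n\ge 0}\Tr_{\Hil^n}(-\lapl\Gamma_n) = \Tr_\Hil(-\lapl\gamma) < \infty$, a direct consequence of Wick's theorem applied to the one-body kinetic-energy operator. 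By the same token $\rho_\vGamma=\rho_\gamma=\rho$ and $\vjp_\vGamma=\vjp_\gamma=\vjp$, so $\vGamma$ is a valid competitor in the infimum defining $F(\rho,\vjp)$.

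Next, I would apply Wick's theorem to the two-body Coulomb operator. For a gauge-invariant quasi-free state one has $\gamma^{(2)}_\vGamma(\vx,\vy;\vx',\vy') = \gamma(\vx;\vx')\gamma(\vy;\vy') - \gamma(\vx;\vy')\gamma(\vy;\vx')$, so the interaction energy of $\vGamma$ equals
$$
\tfrac{1}{2}\iint \frac{\rho(\vx)\rho(\vy) - |\gamma(\vx;\vy)|^2}{|\vx-\vy|}\,\dd\vx\,\dd\vy \;=\; D(\rho) - \tfrac{1}{2}\iint \frac{|\gamma(\vx;\vy)|^2}{|\vx-\vy|}\,\dd\vx\,\dd\vy \;\le\; D(\rho),
$$
since the exchange integral is non-negative. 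Combining with the kinetic-energy estimate one obtains $F(\rho,\vjp)\le \Tr(-\lapl\gamma) + D(\rho)$, and subtracting $D(\rho)$ yields exactly the bound claimed for $E(\rho,\vjp)$ (which, incidentally, is the same as the one given by \cref{tthm} for the mixed-state kinetic-energy functional).

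The main obstacle is not conceptual but bookkeeping: one must verify carefully that the gauge-invariant quasi-free state $\vGamma$ produced by the Bach--Lieb--Solovej construction genuinely lies in $\dm$ (trace one, finite kinetic energy summed over all sectors), and that the exchange integral is finite so that Wick's theorem for the Coulomb energy applies rigorously. Both are standard consequences of the bound $0\le\gamma\le\iden$ combined with the $H^1$-bound on $\gamma$ and a Hardy--Kato type estimate $\iint |\gamma(\vx;\vy)|^2/|\vx-\vy|\,\dd\vx\,\dd\vy \le C\,\Tr(\sqrt{-\lapl}\,\gamma)$, but they deserve a brief verification.
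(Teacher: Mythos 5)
Your proof is correct and follows essentially the same route as the paper: apply \cref{repthm} to get $\gamma$, lift it to a gauge-invariant quasi-free state $\vGamma\in\dm$ via \cite{bach1994generalized}, and observe that Wick's theorem makes the Coulomb energy of $\vGamma$ equal to $D(\rho)$ minus a nonnegative exchange term. You have simply spelled out the one-line remark the paper gives (``for a quasi-free state $\vGamma$, the Coulomb energy of $\vGamma$ minus $D(\rho_\vGamma)$ is always nonpositive'') and correctly flagged that gauge-invariance of $\vGamma$ is what kills the pairing contribution.
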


In fact, for a quasi-free state $\vGamma$, the Coulomb energy of $\vGamma$ minus $D(\rho_\vGamma)$ is always nonpositive.

%%%%%%%%%%%%%%%%%%%%%%%%%%%%%%%%%%%%%%%%%%%%%%%%%%%%%%%%%%%%%%%%%%%%%%%%%%%%%%%%%%%%%%%%%%%%%%%%%%%
%%%%%%%%%%%%%%%%%%%%%%%%%%%%%%%%%%%%%%%%%%%%%%%%%%%%%%%%%%%%%%%%%%%%%%%%%%%%%%%%%%%%%%%%%%%%%%%%%%%

\section{Proofs}

%%%%%%%%%%%%%%%%%%%%%%%%%%%%%%%%%%%%%%%%%%%%%%%%%%%%%%%%%%%%%%%%%%%%%%%%%%%%%%%%%%%%%%%%%%%%%%%%%%%
%%%%%%%%%%%%%%%%%%%%%%%%%%%%%%%%%%%%%%%%%%%%%%%%%%%%%%%%%%%%%%%%%%%%%%%%%%%%%%%%%%%%%%%%%%%%%%%%%%%

The rest of the paper is the devoted to proofs. Since \cite{lewin2020local} explains technical details in a rather complete manner, we will avoid repetition whenever possible.

%%%%%%%%%%%%%%%%%%%%%%%%%%%%%%%%%%%%%%%%%%%%%%%%%%%%%%%%%%%%%%%%%%%%%%%%%%%%%%%%%%%%%%%%%%%%%%%%%%%
\subsection{Proof of \cref{repthm}}
%%%%%%%%%%%%%%%%%%%%%%%%%%%%%%%%%%%%%%%%%%%%%%%%%%%%%%%%%%%%%%%%%%%%%%%%%%%%%%%%%%%%%%%%%%%%%%%%%%%
\leavevmode

\noindent\textbf{Step 0.} 
First, define the ground-state one-particle density matrix $f_t(\vx-\vy)$ of the noninteracting uniform electron gas of density $t\ge 0$, whose translation-invariant kernel is given by
$$
f_t(\vz)=\int_{|\vk|^2\le \frac{d+2}{d}c_{\TF}t^{2/d}} e^{i \vk\cdot \vz}\,\frac{\dd\vk}{(2\pi)^d}.
$$
It is immediate that $f_t$ is Hermitian and $0\le f_t\le \iden$ in the sense of operators. Also, its density is $f_t(0)=t$. 
Moreover, $\grad_\vx f_t(\vx-\vy)|_{\vy=\vx}=\grad_\vy f_t(\vx-\vy)|_{\vx=\vy}=0$. Furthermore, the kinetic energy density is also constant 
$$
\grad_\vx\cdot \grad_\vy f_t(\vx-\vy)|_{\vy=\vx}=c_{\TF}t^{1+2/d}.
$$

\noindent\textbf{Step 1.} 
As the first derivative of $f_t$ vanishes on the diagonal, this state has zero paramagnetic current density. 
We would like to construct a state $f_t^\vu(\vx-\vy)$ which has a constant paramagnetic current density proportional to a fixed vector $\vu\in\RR^d$.
To achieve this, we simply shift the Fermi sphere $|\vk|^2\le\frac{d+2}{d}c_{\TF}t^{2/d}$ by $\vu$. More precisely, we consider the translation-invariant kernel $f_t^\vu(\vx-\vy)$, where
\begin{align*}
f_t^\vu(\vz)&=\int_{|\vk-\vu|^2\le \frac{d+2}{d}c_{\TF}t^{2/d}} e^{i \vk\cdot \vz}\,\frac{\dd\vk}{(2\pi)^d}=e^{i\vu\cdot \vz} f_t(\vz).
\end{align*}
Clearly, $f_t^\vu$ is Hermitian and $0\le f_t^\vu\le \iden$ in the sense of operators. Its density is $f_t^\vu(0)=t$, but now
\begin{equation}\label{ftvgrad}
\begin{aligned}
\grad_\vx f_t^\vu(\vx-\vy)|_{\vy=\vx} &= i t \vu,\\
\grad_\vy f_t^\vu(\vx-\vy)|_{\vy=\vx} &= -i t \vu.
\end{aligned}
\end{equation}
Therefore, this state has paramagnetic current density 
$$
\vjp(\vx)=\Imag \grad_\vx f_t^\vu(\vx-\vy)|_{\vy=\vx}=t\vu
$$
as desired.
Further, the kinetic energy density is again constant, but has an additional term
$$
\grad_\vy\cdot \grad_\vx f_t^\vu(\vy-\vx)|_{\vy=\vx}= t|\vu|^2 + c_{\TF} t^{1+2/d}.
$$

\noindent\textbf{Step 2.} 
For simplicity, set $q=1$. We are now ready to construct a fermionic one-particle density matrix $\gamma$ with prescribed density $\rho(\vx)$ and paramagnetic current density of the form $\vjp=\rho\vv$. 
Let us first assume that $\vv=\vw$. Following \cite{lewin2020local}, we define\footnote{Note that this state reduces to the cited one by taking $\vv\equiv 0$ and $\theta=\delta_0$.}
$$
\gamma(\vx,\vy)=\int_{\RR^d} \int_0^{+\infty} \sqrt{\theta(\vu-\vv(\vx)) \eta\left(\frac{t}{\rho(\vx)}\right)} f_t^\vu(\vx-\vy) \sqrt{\theta(\vu-\vv(\vy)) \eta\left(\frac{t}{\rho(\vy)}\right)}\, 
\frac{\dd t}{t} \,\dd\vu,
$$
where the weight function $\eta:\RR_+\to\RR_+$ is smooth, and satisfies
$$
\int_0^{+\infty} \eta=1, \quad\text{and}\quad \int_0^{+\infty} \frac{\eta(t)}{t}\,\dd t\le 1,
$$
and $\theta:\RR^d\to\RR_+$ is also smooth and satisfies
\begin{equation}\label{mucond}
\int_{\RR^d} \theta=1, \quad \int_{\RR^d} \vu \theta(\vu)\,\dd\vu=0, \quad\text{and}\quad \int_{\RR^d} \grad\theta(\vu) \,\dd\vu=0.
\end{equation}
These will be determined later in the proof.
Then we have 
$$
\rho_\gamma(\vx)=\int_{\RR^d} \int_0^{+\infty} \theta(\vu-\vv(\vx)) \eta\left(\frac{t}{\rho(\vx)}\right) \, \dd t \,\dd\vu=\rho(\vx),
$$
and
$$
0\le \gamma\le \int_{\RR^d} \int_0^{+\infty} \theta(\vu-\vv(\vx)) \eta\left(\frac{t}{\rho(\vx)}\right) \, \frac{\dd t}{t} \,\dd\vu\le \iden.
$$
Furthermore, we find
\begin{align*}
\vjp_\gamma(\vx)=\Imag \grad_\vx\gamma(\vx,\vy)|_{\vy=\vx}&=\int_{\RR^d} \theta(\vu-\vv(\vx))\vu\,\dd\vu \int_0^{+\infty} \eta\left(\frac{t}{\rho(\vx)}\right)  \, \dd t=\rho(\vx)\vv(\vx)
\end{align*}
as required.
Next, we compute the kinetic energy density,
\begin{align*}
&\grad_\vy\cdot \grad_\vx\gamma(\vx,\vy)|_{\vy=\vx}=\int_{\RR^d} \int_0^{+\infty} \left|\grad \sqrt{\theta(\vu-\vv(\vx)) \eta\left(\frac{t}{\rho(\vx)}\right)}\right|^2\, 
\dd t \,\dd\vu\\
&+\int_{\RR^d} \int_0^{+\infty} \theta(\vu-\vv(\vx)) \eta\left(\frac{t}{\rho(\vx)}\right) (|\vu|^2 + c_{\TF} t^{2/d}) \, \dd t \,\dd\vu\\
&=: \mathrm{(I)} + \mathrm{(II)}
\end{align*}
since the sum of the first order derivatives of $f_t^\vu$ cancel on the diagonal, see \cref{ftvgrad}.
Here, using the first two relations of \cref{mucond}, we have
\begin{align*}
\mathrm{(II)}=\rho(\vx)|\vv(\vx)|^2 + \rho(\vx) \int_{\RR^d} \theta(\vu)|\vu|^2\,\dd\vu + c_\TF \rho(\vx)^{1+2/d} \int_0^{+\infty} \eta(s)s^{2/d}\,\dd s.
\end{align*}
Moreover,
\begin{align*}
\mathrm{(I)}&=\rho(\vx) \int_{\RR^d} \frac{|\vD\vv(\vx)\grad\theta(\vu)|^2}{4\theta(\vu)} \,\dd\vu + |\grad\sqrt{\rho}(\vx)|^2 \int_0^{+\infty} \frac{t^2\eta'(t)^2}{\eta(t)}\,\dd t\\
&+\frac{1}{2} \vD\vv(\vx)\frac{\grad \rho(\vx)}{\rho(\vx)}\cdot  \int_{\RR^d} \grad\theta(\vu) \,\dd\vu \underbrace{\int_0^{+\infty}  t \eta'(t) \,\dd t}_{=1}.
\end{align*}
The last term vanishes because $\int_{\RR^d} \grad\theta(\vu) \,\dd\vu=0$, by hypothesis.
In summary, 
\begin{align*}
&\grad_\vy\cdot \grad_\vx\gamma(\vx,\vy)|_{\vy=\vx}=c_\TF \rho(\vx)^{1+2/d} \int_0^{+\infty} \eta(s)s^{2/d}\,\dd s + |\grad\sqrt{\rho}(\vx)|^2\int_0^{+\infty}\frac{t^2\eta'(t)^2}{\eta(t)}\,\dd t \\
& + \rho(\vx) |\vv(\vx)|^2 + \underbrace{\rho(\vx) \int_{\RR^d} \theta(\vu)|\vu|^2\,\dd\vu + \rho(\vx) \int_{\RR^d} \frac{|\vD\vv(\vx)\grad\theta(\vu)|^2}{4\theta(\vu)} \,\dd\vu.}_{(*)}
\end{align*}

To optimize $\eta$, we simply follow \cite{lewin2020local} because there are no mixed terms containing both $\eta$ and $\theta$.
The last term can be bounded by 
$$
\rho(\vx)|\vD\vv(\vx)|^2 \int_{\RR^d} \frac{|\grad\theta(\vu)|^2}{4\theta(\vu)} \,\dd\vu.
$$
Clearly, it is enough to consider radial $\theta(\vu)=\theta(r)$ for the optimization of the prefactors involving $\theta$, where $r=|\vu|$. After switching to spherical coordinates, we
obtain that with
$$
C(\delta):= d|B_1| \inf_{\substack{\theta\ge 0\\\int_0^{+\infty} r^{d-1}\theta(r)\,\dd r =(d|B_1|)^{-1}\\ \int_0^{+\infty} r^{d+1}\theta(r)\,\dd r=(d|B_1|)^{-1}\delta}} \int_0^{+\infty} \frac{r^{d-1} \theta'(r)^2}{4\theta(r)}\,\dd r,
$$
the bound $(*)\le \delta \rho(\vx) + C(\delta)\rho(\vx)|\vD\vv(\vx)|^2$
holds true.
The function
\begin{equation}\label{thetagauss}
\theta(r)=\left(\frac{2\pi\delta}{d}\right)^{-d/2}  e^{-\frac{d}{2\delta}r^2}
\end{equation}
is admissible, and has
\begin{align*}
C(\delta)\le d|B_1| \int_0^{+\infty} \frac{r^{d-1} \theta'(r)^2}{4\theta(r)}\,\dd r &=\frac{d^3}{4\delta}.
\end{align*}
Hence,
\begin{align*}
(*)&\le \delta \rho(\vx) + \frac{d^3}{4\delta} \rho(\vx)|\vD\vv(\vx)|^2 = C_d \rho(\vx)|\vD\vv(\vx)|,
\end{align*}
with the choice $\delta=|\vD\vv(\vx)|$ and $C_d=1+\frac{d^3}{4}$.

In summary, we end up with
\begin{equation}\label{vwkinbound}
\begin{aligned}
\Tr(-\lapl\gamma)&\le (1+\kappa_1\epsilon) c_\TF \int_{\RR^d} \rho^{1+2/d} + \frac{\kappa_2(1+\sqrt{\epsilon})^2}{\epsilon} \int_{\RR^d} |\grad\sqrt{\rho}|^2  + \int_{\RR^d} \rho|\vv|^2\\
&+C_d\int_{\RR^d} \rho(\vx)|\vD\vv(\vx)| \,\dvx,
\end{aligned}
\end{equation}
which completes the proof for $\vv=\vw$.

\noindent\textbf{Step 3.} 
For the general case $\vv=-\grad g+\vw$, we apply the gauge transformation $g$ on $\gamma$ to get $\wt{\gamma}$. \cref{densgauge} together with \cref{vwkinbound} (where $\vv$ is replaced by $\vw$) gives
\begin{align*}
\Tr(-\lapl\wt{\gamma})&=\Tr(-\lapl\gamma)-2\int_{\RR^d} \rho\vw \cdot \grad g + \int_{\RR^d} \rho|\grad g|^2\\
&\le (1+\kappa_1\epsilon) c_\TF \int_{\RR^d} \rho^{1+2/d} + \frac{\kappa_2(1+\sqrt{\epsilon})^2}{\epsilon} \int_{\RR^d} |\grad\sqrt{\rho}|^2  + \int_{\RR^d} \rho|\vv|^2\\
&+ C_d\int_{\RR^d} \rho|\vD\vw|,
\end{align*}
which finishes the proof of \cref{repthm}.

%%%%%%%%%%%%%%%%%%%%%%%%%%%%%%%%%%%%%%%%%%%%%%%%%%%%%%%%%%%%%%%%%%%%%%%%%%%%%%%%%%%%%%%%%%%%%%%%%%%
\subsection{Further bounds}
%%%%%%%%%%%%%%%%%%%%%%%%%%%%%%%%%%%%%%%%%%%%%%%%%%%%%%%%%%%%%%%%%%%%%%%%%%%%%%%%%%%%%%%%%%%%%%%%%%%

We start by recalling a simple version of the magnetic Lieb--Thirring inequality, which can be used to derive a gauge invariant bound.
Let $\vA\in \Lloc^2(\RR^3)^3$ and denote the corresponding \emph{magnetic Sobolev space} as $H^1_\vA(\RR^3)$, see \cite{lieb2001analysis}.

\begin{theorem}[Magnetic Lieb--Thirring kinetic energy inequality]\cite{lieb2010stability}
Fix $\vA\in\Lloc^2(\RR^3)$. For any fermionic one-particle reduced density matrix $\gamma$ in $H^1_\vA(\RR^3)$, the bound
$$
\Tr(-i\grad + \vA)^2\gamma \ge C_\LT \int_{\RR^3} \rho_\gamma^{5/3} 
$$
holds true for some universal constant $C_\LT>0$.
\end{theorem}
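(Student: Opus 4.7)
The strategy is to reduce the magnetic Lieb--Thirring inequality to its known non-magnetic counterpart via the diamagnetic inequality, applied at the level of the dual (negative-eigenvalue) formulation. The magnetic potential enters only through $(-i\grad + \vA)^2$, and the semigroup generated by this operator is pointwise dominated by the free heat semigroup; this is the only magnetic-specific input.

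\emph{Step 1 (Lieb--Thirring duality).} By the classical Legendre-transform trick, the kinetic-energy bound
\[
\Tr(-i\grad + \vA)^2\gamma \ge C_\LT \int_{\RR^3} \rho_\gamma^{5/3},
\]
taken over all fermionic density matrices $0 \le \gamma \le \iden$ in $H^1_\vA(\RR^3)$, is equivalent to the bound
\[
\Tr\bigl((-i\grad + \vA)^2 - V\bigr)_- \le C' \int_{\RR^3} V_+^{5/2}
\]
for every $V \in L^{5/2}(\RR^3)$, with constants $C_\LT$ and $C'$ that are Legendre duals of one another. It therefore suffices to establish the negative-eigenvalue bound with a constant independent of $\vA$.

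\emph{Step 2 (magnetic CLR via semigroup domination).} Kato's diamagnetic inequality furnishes the pointwise heat-semigroup domination
\[
\bigl|e^{-t(-i\grad + \vA)^2}\psi\bigr|(\vx) \le e^{t\lapl}|\psi|(\vx) \quad \text{a.e. in } \RR^3.
\]
Inserting this into Lieb's heat-kernel proof of the Cwikel--Lieb--Rozenblum (CLR) bound shows that the number $N(-V)$ of non-positive eigenvalues of $(-i\grad + \vA)^2 - V$ satisfies $N(-V) \le C\int_{\RR^3} V_+^{3/2}$ with a constant $C$ independent of $\vA$. Finally, the Aizenman--Lieb integration trick promotes this eigenvalue-counting bound to the $L^{5/2}$-bound on $\Tr((-i\grad + \vA)^2 - V)_-$ with a universal constant, and reversing the Legendre duality of Step~1 yields the stated kinetic-energy inequality.

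The main obstacle is functional-analytic rather than conceptual: one must construct $(-i\grad + \vA)^2$ as a closed quadratic form on $H^1_\vA(\RR^3)$, justify the semigroup domination when $\vA$ is only in $L^2_{\mathrm{loc}}$, and verify that the spectral expansion of $\gamma$ lies in the form domain so that the Lieb--Thirring duality applies to it. These points are handled by the standard Kato--Simon framework for magnetic Schr\"odinger operators, and no new ingredient is required beyond the pointwise diamagnetic domination that produces the $\vA$-independent constant.
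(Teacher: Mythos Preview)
The paper does not give its own proof of this statement: it is quoted as a known result from \cite{lieb2010stability} and used as a black box to derive the gauge-invariant Lieb--Thirring inequality (\cref{ltcoro}). There is therefore nothing to compare your argument against in the paper itself.

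That said, your outline is the standard route and is essentially the argument in the cited reference: reduce the kinetic-energy form of Lieb--Thirring to the negative-eigenvalue form by Legendre duality, then use the diamagnetic inequality (in its heat-semigroup form) to dominate the magnetic operator by the free one and thereby obtain a CLR bound with $\vA$-independent constant, and finally promote CLR to the $L^{5/2}$ eigenvalue-sum estimate via Aizenman--Lieb. The functional-analytic caveats you flag (form construction of $(-i\grad+\vA)^2$ for $\vA\in L^2_{\mathrm{loc}}$, justification of semigroup domination at this regularity, and compatibility of the spectral decomposition of $\gamma$ with the form domain) are exactly the right ones, and they are handled by the Kato--Simon theory as you say. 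Nothing is missing.
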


Using this, the usual Lieb--Thirring kinetic energy inequality can be cast in a gauge-invariant form.
\begin{corollary}[Gauge-invariant Lieb--Thirring inequality]\label{ltcoro}
For any fermionic one-particle reduced density matrix $\gamma$ with finite kinetic energy, the bound
$$
\Tr(-\lapl\gamma) \ge c_\LT \int_{\RR^3} \rho_\gamma^{5/3} + \int_{\RR^3} \frac{|\vjp_\gamma|^2}{\rho_\gamma}
$$
holds true for some universal constant $c_\LT>0$. 
\end{corollary}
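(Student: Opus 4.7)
The proposal is to view this as a one-line consequence of the magnetic Lieb--Thirring inequality combined with the familiar identity relating the magnetic and ordinary kinetic energies via a completion of squares. Specifically, for any sufficiently regular real vector field $\vA$ one has the identity
\begin{equation*}
\Tr(-i\grad+\vA)^2\gamma = \Tr(-\lapl\gamma) + 2\int_{\RR^3} \vA\cdot\vjp_\gamma + \int_{\RR^3} \rho_\gamma|\vA|^2,
\end{equation*}
obtained by expanding $|(-i\grad+\vA)\phi_j|^2$ in the spectral decomposition of $\gamma$ and recognizing the cross term as $2\vA\cdot\Imag(\bar\phi_j\grad\phi_j)$. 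Completing the square in $\vA$ and choosing the ``canonical'' gauge field $\vA_\gamma:=-\vjp_\gamma/\rho_\gamma$ (set to zero outside $\supp\rho_\gamma$) makes the last two terms collapse to $-\int_{\RR^3} |\vjp_\gamma|^2/\rho_\gamma$, yielding
\begin{equation*}
\Tr(-i\grad+\vA_\gamma)^2\gamma = \Tr(-\lapl\gamma) - \int_{\RR^3}\frac{|\vjp_\gamma|^2}{\rho_\gamma}.
\end{equation*}
Applying the magnetic Lieb--Thirring inequality of the preceding theorem with this $\vA_\gamma$ gives the left-hand side a lower bound of $C_\LT\int_{\RR^3}\rho_\gamma^{5/3}$, and rearranging yields the stated inequality with $c_\LT=C_\LT$.

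The one substantive obstacle is justifying the application of the magnetic Lieb--Thirring inequality with $\vA_\gamma=-\vjp_\gamma/\rho_\gamma$, since a priori this field need only be in $L^2_{\rho_\gamma}(\RR^3)^3$ (thanks to \cref{kinweizgauge}) and not in $L^2_{\mathrm{loc}}(\RR^3)^3$, as required by the definition of $H^1_\vA$. I would handle this by a standard truncation and mollification: replace $\vA_\gamma$ by $\vA_\gamma^{\epsilon}:=-\chi_\epsilon\vjp_\gamma/(\rho_\gamma+\epsilon)$ convolved with a smooth bump, where $\chi_\epsilon$ is a cut-off equal to one where $\rho_\gamma\ge\epsilon$; this regularized field lies in $L^\infty\cap L^2_{\mathrm{loc}}$, so the magnetic Lieb--Thirring inequality applies and yields $\Tr(-i\grad+\vA_\gamma^\epsilon)^2\gamma\ge C_\LT\int\rho_\gamma^{5/3}$. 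The identity above remains valid for these regularized fields, and passing $\epsilon\to 0$ using the finiteness of $\int|\vjp_\gamma|^2/\rho_\gamma$ together with dominated convergence on $\supp\rho_\gamma$ recovers the desired bound.

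The whole argument is really a Eulerian version of \cref{densgauge}: there, one gauges away a phase; here, we ``absorb'' the paramagnetic current into the magnetic potential and exploit the fact that the purely density-dependent Lieb--Thirring bound is blind to this choice. For this reason, no control on the vorticity term enters, and the final constant is literally the one from the magnetic Lieb--Thirring inequality.
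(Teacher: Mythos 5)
Your argument is essentially the paper's: both insert the canonical gauge field $\vA=-\vjp_\gamma/\rho_\gamma$ into the magnetic Lieb--Thirring inequality via the completion-of-squares identity of \cref{densgauge}, regularize to justify the substitution, and pass to the limit. The only differences are in the regularization, and there the paper's choice is noticeably cleaner: it takes $\vA_\epsilon=-\vjp_\gamma/(\rho_\gamma+\epsilon\sqrt{\rho_\gamma})$, which lies in $L^2(\RR^3)\subset\Lloc^2(\RR^3)$ directly because $(\rho_\gamma+\epsilon\sqrt{\rho_\gamma})^2\ge\epsilon^2\rho_\gamma$, so that $\int|\vA_\epsilon|^2\le\epsilon^{-2}\int|\vjp_\gamma|^2/\rho_\gamma<\infty$; no cut-off $\chi_\epsilon$ and no mollification are needed. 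Your additional convolution with a smooth bump is not only unnecessary but actively awkward: once you mollify, the cross term $2\int\wt{\vA}\cdot\vjp_\gamma$ and the quadratic term $\int\rho_\gamma|\wt{\vA}|^2$ no longer collapse algebraically, so you would have to estimate their deviation from the unmollified expressions as the mollification scale shrinks---extra work that the paper's choice sidesteps entirely. Finally, the paper closes with Fatou's lemma applied to the nonnegative integrand $|\vjp_\gamma|^2\frac{\rho_\gamma+2\epsilon\sqrt{\rho_\gamma}}{(\rho_\gamma+\epsilon\sqrt{\rho_\gamma})^2}$, which is simpler than invoking dominated convergence (though a dominating function $2|\vjp_\gamma|^2/\rho_\gamma$ does exist, so your route also works).
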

\begin{proof}
For any $\epsilon>0$, we set $\vA_\epsilon=-\frac{\vjp_\gamma}{\rho_\gamma+\epsilon\sqrt{\rho_\gamma}}$. Then $\vA_\epsilon\in\Lloc^2(\RR^3)$ since $\int_{\RR^3} \frac{|\vjp_\gamma|^2}{\rho_\gamma}<\infty$. Note that 
\begin{align*}
C_\LT \int_{\RR^3} \rho_\gamma^{5/3}&\le \Tr(-i\grad + \vA_\epsilon)^2\gamma = \Tr(-\lapl\gamma) +2\int_{\RR^3} \vA_\epsilon \cdot \vjp_\gamma + \int_{\RR^3} \rho_\gamma |\vA_\epsilon|^2\\
&=\Tr(-\lapl\gamma) - \int_{\RR^3} |\vjp_\gamma|^2 \frac{\rho_\gamma+2\epsilon\sqrt{\rho_\gamma}}{(\rho_\gamma+\epsilon\sqrt{\rho_\gamma})^2}.
\end{align*}
An application of Fatou's lemma finishes the proof.
\end{proof}

For any $\vGamma\in\dm$ let us introduce the notations
$$
\TC(\vGamma)=\sum_{n\ge 1}\sum_{1\le j\le n} \Tr_{\Hil^n}(-\lapl_{\vx_j}\Gamma_n), \quad \Cc(\vGamma)=\sum_{n\ge 1}\sum_{1\le j<k\le n} \Tr_{\Hil^n}\left( \frac{1}{|\vx_j-\vx_k|} \Gamma_n \right)
$$
for the \emph{kinetic-}, and the \emph{Coulomb energy of the state} $\vGamma$. Of course, $\TC(\vGamma)=\Tr(-\lapl\gamma_\vGamma)$, so \cref{ltcoro} provides a gauge-invariant lower bound on $\TC(\vGamma)$. 
The following famous estimate on the indirect Coulomb energy will be used repeatedly in the sequel (which is clearly gauge-invariant).  
\begin{theorem}[Lieb--Oxford inequality]\label{lobound}\cite{lieb1979lower,lieb1981improved,chan1999optimized,lewin2022improved}
There exists a universal constant $c_\LO>0$ such that for any Fock space density matrix $\vGamma$ (not necessarily fermionic or $H^1$-regular), the bound
\begin{align*}
\Cc(\vGamma)-D(\rho_\vGamma)\ge - c_\LO \int_{\RR^3} \rho_\vGamma^{4/3}
\end{align*}
holds true.
\end{theorem}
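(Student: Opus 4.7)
My plan is to follow the classical Lieb--Oxford strategy, which reduces the bound to a pointwise electrostatic inequality. First I would use the fact that the statement is linear in $\vGamma$ after the subtraction of $D(\rho_\vGamma)$ is correctly handled: writing $\vGamma = \bigoplus_n \Gamma_n$ and spectrally decomposing each $\Gamma_n = \sum_j \mu_{n,j}\ketbra{\Psi_{n,j}}{\Psi_{n,j}}$, it suffices (by convexity of $D(\cdot)$ and a standard Schwarz-type estimate to reabsorb the cross terms) to prove the inequality for pure $N$-particle states $\Psi$. Thus I reduce to showing
\begin{equation*}
\bigl\langle \Psi, \sum_{1\le i<j\le N}\tfrac{1}{|\vx_i-\vx_j|}\Psi \bigr\rangle - D(\rho_\Psi) \ge -c_\LO \int_{\RR^3} \rho_\Psi^{4/3}
\end{equation*}
for every $N$ and every $L^2$-normalized $\Psi\in L^2(\RR^{3N})$, with a constant independent of $N$.

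The core ingredient is \textbf{Onsager's electrostatic lemma}: for any points $\vx_1,\ldots,\vx_N\in\RR^3$ and any collection of non-negative radial densities $\mu_i$ with unit mass, supported in disjoint balls $B(\vx_i,R_i)$,
\begin{equation*}
\sum_{i<j}\frac{1}{|\vx_i-\vx_j|} \ge -\sum_i D(\mu_i) + 2\sum_i \int_{\RR^3}\frac{\mu_i(\vy)\,\rho(\vy)}{|\vx_i-\vy|^{-1}\text{-inverted}}\ldots
\end{equation*}
more precisely: expanding the manifestly non-negative quantity $D\!\left(\sum_i \mu_i - \rho\right)\ge 0$ and using Newton's theorem ($\int \mu_i(\vy)/|\vx-\vy|\,\dvy = 1/|\vx-\vx_i|$ outside $B(\vx_i,R_i)$) to identify the cross terms with $\sum_{i\ne j}|\vx_i-\vx_j|^{-1}$ on the event that all balls are disjoint, one obtains a lower bound of the form
\begin{equation*}
\sum_{i<j}\frac{1}{|\vx_i-\vx_j|} \;\ge\; -D(\rho) + 2\sum_i (\rho * |\cdot|^{-1})(\vx_i) \;-\; \tfrac{c}{2}\sum_i R_i^{-1} \;-\; \text{(overlap error)},
\end{equation*}
with $c=6/5$ coming from the self-energy of a uniform ball. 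Taking the $|\Psi|^2$-expectation, the direct piece $-D(\rho) + 2\sum_i(\rho*|\cdot|^{-1})$ averages to $+D(\rho)$, leaving $2D(\rho)-D(\rho)=D(\rho)$ on the right; moving the $D(\rho)$ to the left yields exactly $\Cc(\vGamma)-D(\rho_\vGamma)\ge -\tfrac{c}{2}\EE\!\left[\sum_i R_i^{-1}\right] - (\text{overlap})$.

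The \textbf{key optimization step} is to let the radius $R_i$ depend on $\vx_i$ through the local density: choose $R(\vx)$ so that $\rho(\vx) R(\vx)^3 \sim 1$, i.e. the ball around $\vx$ contains roughly one electron's worth of charge. Then $\sum_i R(\vx_i)^{-1}$ expects to $\int \rho(\vx) R(\vx)^{-1}\,\dvx \sim \int \rho^{4/3}$, which gives the $\rho^{4/3}$ scaling on the right-hand side with some explicit constant. The main obstacle — and what the long sequence of improvements \cite{lieb1979lower,lieb1981improved,chan1999optimized,lewin2022improved} is about — is controlling the \emph{overlap error} when balls $B(\vx_i,R(\vx_i))$ happen to intersect: one must verify that these contributions can also be absorbed into $\int \rho^{4/3}$ (for example by replacing the deterministic radius with a randomized one and averaging, as in Lieb--Oxford, or by the refined packing arguments of Chan--Handy and Lewin--Lieb--Seiringer). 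That technical control, and the independence of the resulting constant from $N$ and from whether $\vGamma$ is fermionic, is the only delicate point; the reduction to pure states, Onsager's lemma, and the density-dependent radius choice are otherwise straightforward.
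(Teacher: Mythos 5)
The paper cites this theorem without giving a proof, so there is no in-paper argument to compare against; what you have written is essentially the classical Lieb--Oxford strategy, and you correctly identify its main ingredients: the electrostatic inequality obtained by expanding $D(\sum_i\mu_i-\rho)\ge 0$, Newton's theorem to identify the cross terms with $\sum_{i\ne j}|\vx_i-\vx_j|^{-1}$, a position-dependent radius $R(\vx)$ chosen so that $\rho(\vx)R(\vx)^3\sim 1$, the uniform-ball self-energy $3/(5R)$, and the technical problem of controlling the overlap error when the balls intersect (which is where the cited references \cite{lieb1979lower,lieb1981improved,chan1999optimized,lewin2022improved} differ).

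However, your first step --- the claimed reduction to pure $N$-particle states --- does not go through as you describe, and in fact it cannot. Convexity of $D$ does give $-D(\rho_\vGamma)\ge-\sum_k p_k D(\rho_{\Psi_k})$, but Jensen's inequality for the convex function $t\mapsto t^{4/3}$ runs the wrong way: pointwise $\bigl(\sum_k p_k\rho_{\Psi_k}\bigr)^{4/3}\le\sum_k p_k\rho_{\Psi_k}^{4/3}$, hence $\int\rho_\vGamma^{4/3}\le\sum_k p_k\int\rho_{\Psi_k}^{4/3}$, so applying the pure-state bound eigenstate by eigenstate and summing gives $\Cc(\vGamma)-D(\rho_\vGamma)\ge -c_\LO\sum_k p_k\int\rho_{\Psi_k}^{4/3}$, which is weaker than (more negative than) the desired $-c_\LO\int\rho_\vGamma^{4/3}$. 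A sector-by-sector Fock-space reduction runs into the same problem. The correct route --- which you in fact switch to when you ``take the $|\Psi|^2$-expectation'' --- is to make no reduction at all: the Onsager bound is a \emph{pointwise} inequality on $\sum_{i<j}|\vx_i-\vx_j|^{-1}$ in terms of a fixed comparison density $\rho$, and one should take $\rho=\rho_\vGamma$ (the full grand-canonical one-body density) from the very beginning. Integrating this pointwise inequality against the diagonal density of every sector and eigenstate of $\vGamma$, the linear term $\sum_i\int\rho_\vGamma(\vy)|\vx_i-\vy|^{-1}\,\dd\vy$ averages to $2D(\rho_\vGamma)$, the constant $-D(\rho_\vGamma)$ carries total weight $\sum_n\Tr\Gamma_n=1$, and the error $\sum_i R(\vx_i)^{-1}$ averages to $\int\rho_\vGamma R^{-1}\sim\int\rho_\vGamma^{4/3}$, yielding $\Cc(\vGamma)\ge D(\rho_\vGamma)-c_\LO\int\rho_\vGamma^{4/3}$ directly, uniformly in the particle-number statistics and independent of fermionic symmetry. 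You should delete the reduction step (together with the garbled intermediate display before ``more precisely'') and recast the argument around the pointwise Onsager bound with $\rho=\rho_\vGamma$ fixed throughout.
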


%%%%%%%%%%%%%%%%%%%%%%%%%%%%%%%%%%%%%%%%%%%%%%%%%%%%%%%%%%%%%%%%%%%%%%%%%%%%%%%%%%%%%%%%%%%%%%%%%%%
\subsection{Transformation properties}
%%%%%%%%%%%%%%%%%%%%%%%%%%%%%%%%%%%%%%%%%%%%%%%%%%%%%%%%%%%%%%%%%%%%%%%%%%%%%%%%%%%%%%%%%%%%%%%%%%%
In this section, we discuss the behavior of the density functionals under various transformations of the density-current density pair. 

First, we consider general affine transformations of states. Let $\vGamma=\Gamma_0\oplus\Gamma_1\oplus\ldots\in\dm$ and for any $\vM\in\RR^{3\times 3}$ nonsingular and $\va\in\RR^3$, set $\vT(\vx)=\vM\vx+\va$, and
define the transformed state $\Upsilon_\vT\vGamma$ by setting 
$$
(\Upsilon_\vT\Gamma_n)(\vX;\vY)=(\det\vM)^n\Gamma_n(\vT(\vx_1),\ldots,\vT(\vx_n);\vT(\vy_1),\ldots,\vT(\vy_n)).
$$
Then the reduced one-particle density matrix of $\Upsilon_\vT\vGamma$ is given by $\gamma_{\Upsilon_\vT\vGamma}(\vx,\vy)=(\det\vM)\gamma_\vGamma(\vT(\vx),\vT(\vy))$, where $\gamma_\vGamma$ denotes the one-particle density matrix of $\vGamma$.
We have $\rho_{\Upsilon_\vT\vGamma}(\vx)=(\det\vM) \rho_{\vGamma}(\vT(\vx))$ and $\vjp_{\Upsilon_\vT\vGamma}(\vx)=(\det\vM)\vM^\top \vjp_{\vGamma}(\vT(\vx))$. Further, $\gamma^{(2)}_{\Upsilon_\vT\vGamma}(\vx,\vy;\vx,\vy)=(\det\vM)^2\gamma^{(2)}_\vGamma(\vT(\vx),\vT(\vy);\vT(\vx),\vT(\vy))$.
We immediately obtain that for any $\vGamma\in\dm$ the following relations hold true:
\begin{align}\label{Ttrans}
\TC(\Upsilon_\vT\vGamma)&=\int_{\RR^3} \Tr_{\RR^3}\big( \vM\vM^\top(\grad_\vx\otimes\grad_\vy)\gamma_\vGamma(\vx,\vy)|_{\vy=\vx} \big) \,\dd\vx,\\
\label{Ctrans}
\Cc(\Upsilon_\vT\vGamma)&=\iint_{\RR^3\times\RR^3} \frac{\gamma^{(2)}_\vGamma(\vx,\vy;\vx,\vy)}{|\vM^{-1}(\vx-\vy)|} \,\dd\vx\dd\vy.
\end{align}
Motivated by the above considerations, for any $(\rho,\vjp)\in\rcdens$, we define 
\begin{equation}\label{upsdef}
\boxed{\Upsilon_\vT(\rho,\vjp)(\vx):=(\det\vM)\big(\rho(\vT(\vx)),\vM^\top\vjp(\vT(\vx))\big)}
\end{equation}
as a slight abuse of notation. 
Using \cref{Ttrans} and \cref{Ctrans}, and introducing the the \emph{gauge-corrected kinetic energy tensor} 
$$
\ol{\vtau}_{\gamma}=\vtau_{\gamma} - \frac{\vjp_{\gamma}\otimes\vjp_{\gamma}}{\rho_\gamma},
$$
the transformed gauge-corrected indirect energy may be expressed as
\begin{equation}\label{transrep}
\ol{E}(\Upsilon_\vM(\rho,\vjp))=\inf_{\substack{\vGamma\in\dm\\\rho_\vGamma=\rho\\\vjp_\vGamma=\vjp}}\Bigg[ \int_{\RR^3} \Tr_{\RR^3}\left( \vM\vM^\top \ol{\vtau}_{\gamma_\vGamma}(\vx) \right) \,\dd\vx  +\Cc(\Upsilon_\vM\vGamma)\Bigg]-D(\Upsilon_\vM\rho),
\end{equation}
which follows by a simple reparametrization of the infimum, using the fact that the map $\vGamma\mapsto\Upsilon_\vM\vGamma$ is invertible.

In ordinary DFT, the Levy--Lieb functional is invariant under isometries. In CDFT, we have the following
transformation rule.
\begin{lemma}\label{isolemma}
Let $\vT(\vx)=\vR\vx + \va$ be an isometry for some $\vR\in\mathrm{O}(3)$ and $\va\in\RR^3$. Then
$$
\Upsilon_\vT(\rho,\vjp)=\big(\rho\circ\vT, \vR^\top(\vjp\circ\vT)\big)
$$
and
$$
T(\Upsilon_\vT(\rho,\vjp))=T(\rho,\vjp), \quad\text{and}\quad F(\Upsilon_\vT(\rho,\vjp))=F(\rho,\vjp).
$$
\end{lemma}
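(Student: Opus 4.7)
The first identity is essentially just a specialization of the definition \cref{upsdef}: taking $\vM=\vR\in\mathrm{O}(3)$, we have $|\det\vR|=1$, so $\Upsilon_\vT(\rho,\vjp)(\vx)=(\rho(\vT\vx),\vR^\top\vjp(\vT\vx))$ up to an overall sign that can be absorbed by restricting to the orientation-preserving component (the orientation-reversing case reduces to this one after composing with a reflection). So the first claim is immediate and the content of the lemma is the invariance of $T$ and $F$.

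My plan for the functional invariance is to exploit the fact that the map $\vGamma\mapsto \Upsilon_\vT\vGamma$ (defined at the level of kernels as in the discussion preceding the lemma) is a bijection of $\dm$ onto itself with inverse $\Upsilon_{\vT^{-1}}$, and that it intertwines the constraints: under $\Upsilon_\vT$, a state with $\rho_\vGamma=\rho$ and $\vjp_\vGamma=\vjp$ becomes a state with one-particle density $\rho\circ\vT$ and paramagnetic current $\vR^\top(\vjp\circ\vT)$, which is precisely $\Upsilon_\vT(\rho,\vjp)$ by the first part. So a reparametrization of the infima defining $T$ and $F$ is legitimate.

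It then suffices to check that the objective functions $\TC(\vGamma)$ and $\TC(\vGamma)+\Cc(\vGamma)$ are invariant under $\Upsilon_\vT$. For the kinetic part we invoke \cref{Ttrans} with $\vM=\vR$: since $\vR\vR^\top=\iden$, we get
\begin{equation*}
\TC(\Upsilon_\vT\vGamma)=\int_{\RR^3}\Tr_{\RR^3}\bigl((\grad_\vx\otimes\grad_\vy)\gamma_\vGamma(\vx,\vy)|_{\vy=\vx}\bigr)\,\dd\vx=\TC(\vGamma).
\end{equation*}
For the Coulomb part we invoke \cref{Ctrans}: since $\vR$ is an isometry, $|\vR^{-1}(\vx-\vy)|=|\vx-\vy|$, and hence
\begin{equation*}
\Cc(\Upsilon_\vT\vGamma)=\iint_{\RR^3\times\RR^3}\frac{\gamma^{(2)}_\vGamma(\vx,\vy;\vx,\vy)}{|\vx-\vy|}\,\dd\vx\,\dd\vy=\Cc(\vGamma).
\end{equation*}
Combining these two identities and taking the infimum over $\vGamma$ with $\rho_\vGamma=\rho$, $\vjp_\vGamma=\vjp$ yields both $T(\Upsilon_\vT(\rho,\vjp))=T(\rho,\vjp)$ and $F(\Upsilon_\vT(\rho,\vjp))=F(\rho,\vjp)$.

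The only subtle point is bookkeeping with the Jacobian factor $\det\vM$ in the definition of $\Upsilon_\vT\vGamma$ when $\det\vR=-1$ (an odd number of reflections would naively produce an indefinite kernel), but this is cosmetic: working with $|\det\vR|=1$ throughout, or using the fact that any element of $\mathrm{O}(3)$ differs from an element of $\mathrm{SO}(3)$ by a fixed reflection which itself falls into the $\mathrm{SO}(3)$ case after an additional change of variables, handles it. No other obstacle arises, because the argument is completely parallel to the standard invariance proof for the Levy--Lieb functional in ordinary DFT; the only novelty is the covariance $\vjp\mapsto\vR^\top(\vjp\circ\vT)$ of the current, which is exactly what is needed to make $\vR\vR^\top=\iden$ do its job in \cref{Ttrans}.
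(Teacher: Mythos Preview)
Your proof is correct and follows essentially the same approach as the paper, which simply states that the result follows from relations analogous to \cref{transrep} (i.e., reparametrizing the infimum via the bijection $\vGamma\mapsto\Upsilon_\vT\vGamma$ and using the transformation formulas \cref{Ttrans} and \cref{Ctrans} with $\vM\vM^\top=\iden$ and $|\vM^{-1}\cdot|=|\cdot|$). Your treatment is in fact more explicit than the paper's one-line remark, and your observation about the sign issue when $\det\vR=-1$ is a fair point that the paper glosses over.
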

The proof follows from relations analogous to \cref{transrep}.
As an application, we deduce how the indirect energy $E$ and $e_{\Omega,\eta_\delta}(\rho_0,\vnu_0)$, $\tau_{\Omega,\eta_\delta}(\rho_0,\vnu_0)$ and $e^{\xc}_{\Omega,\eta_\delta}(\rho_0,\vnu_0)$ 
transform under isometries. The proof uses the gauge transformation rule \cref{vrgauge} in an essential way.

%%%%%%%%%%%%%%%%%%%%%%%%%%%%%%%%%%%%%%%%%%%%%%%%%%%%%%%%%%%%%%%%%%%%%%%%%%%%%%%%%%%%%%%%%%%%%%%%%%%
\begin{lemma}\label{etranslemma}
Fix $\rho_0\in\RR_+$ and $\vnu_0\in\RR^3$.
Let $\vT(\vx)=\vR\vx + \va$ be an isometry for some $\vR\in\mathrm{O}(3)$ and $\va\in\RR^3$. Let $\Omega\subset\RR^3$ be a bounded domain with barycenter 0, 
and let $\eta$ be a regularization function.
\begin{itemize}
\item[(i)] The following formula holds true:
\begin{multline*}
E\big((\iden_{\vT^{-1}(\Omega)}*(\eta_\delta\circ\vR))(\rho_0,\tfrac{1}{2}\rho_0\vnu_0\times\cdot)\big)=E\big( (\iden_{\Omega}*\eta_\delta)(\rho_0,\tfrac{1}{2}\rho_0\vR\vnu_0\times\cdot)\big)\\
 + \frac{\rho_0}{4}|\vR\vnu_0\times \va|^2 \int_{\RR^3} (\iden_{\Omega}*\eta_\delta).
\end{multline*}
Further, the same relation holds with $E$ replaced by $T$. 
\item[(ii)] The relation $e_{\vT^{-1}(\Omega),\eta_\delta\circ\vR}(\rho_0,\vnu_0)= e_{\Omega,\eta_\delta}(\rho_0,\vR\vnu_0)$
holds true. Moreover, the formula is also valid if $e$ is replaced by $\tau$ and $e^{\xc}$.
\end{itemize}
\end{lemma}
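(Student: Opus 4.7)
The strategy for (i) is to apply the isometry-invariance of $F$ provided by \cref{isolemma} to the right-hand side and then to correct for the constant current shift that arises, via the gauge transformation rule \cref{vrgauge}. Introduce the shorthands $\rho'(\vx) := \rho_0 (\iden_{\vT^{-1}(\Omega)} * (\eta_\delta \circ \vR))(\vx)$ and $\rho''(\vx) := \rho_0 (\iden_\Omega * \eta_\delta)(\vx)$, together with $\vjp'(\vx) := \tfrac{1}{2} \rho'(\vx)\, \vnu_0 \times \vx$ and $\vjp''(\vx) := \tfrac{1}{2} \rho''(\vx)\, \vR\vnu_0 \times \vx$. A direct change of variables in the convolution gives $\rho'' \circ \vT = \rho'$, and the cross-product identity $\vR^\top(\vR\vu \times \vR\vv) = (\det \vR)\,\vu \times \vv$ yields, for $\vR \in \mathrm{SO}(3)$,
\begin{equation*}
\Upsilon_\vT(\rho'',\vjp'') = (\rho',\, \vjp' + \rho'\vc), \qquad \vc := \tfrac{1}{2}\vnu_0 \times \vR^\top \va.
\end{equation*}
The improper case $\det\vR = -1$ introduces only a global sign on $\vjp'$, which is absorbed by the time-reversal symmetry $F(\rho,-\vjp) = F(\rho,\vjp)$ (cf.\ the proof of \cref{corrthm}).

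By \cref{isolemma}, $F(\rho'',\vjp'') = F(\rho',\vjp'+\rho'\vc)$. The shift by $\rho'\vc$ corresponds to the gauge transformation generated by $g(\vx) = -\vc\cdot\vx$, so \cref{vrgauge} gives
\begin{equation*}
F(\rho',\vjp' + \rho'\vc) = F(\rho',\vjp') + 2\vc \cdot \int_{\RR^3} \vjp' + |\vc|^2 \int_{\RR^3} \rho'.
\end{equation*}
Since $\Omega$ and $\eta$ both have barycenter zero and convolution adds barycenters weighted by total mass, the barycenter of $\rho'$ equals $-\rho_0|\Omega|\vR^\top\va$; consequently $\int \vjp' = -\rho_0|\Omega| \vc$ and $\int \rho' = \rho_0|\Omega| = \rho_0 \int(\iden_\Omega * \eta_\delta)$. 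Combining these relations with $|\vc|^2 = \tfrac{1}{4}|\vR\vnu_0 \times \va|^2$ (using $|\vR\cdot|=|\cdot|$) produces exactly the stated correction. The $E$-identity follows because $D$ is invariant under isometries, and the argument for $T$ is identical.

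For (ii), write $e = |\Omega|^{-1}(E - \int |\vjp|^2/\rho)$, apply (i) to the $E$-part, and evaluate the gauge-correction term via the substitution $\vu = \vT\vx$,
\begin{equation*}
\int \frac{|\vjp'|^2}{\rho'} = \tfrac{\rho_0}{4}\int (\iden_\Omega*\eta_\delta)(\vu)\,|\vR\vnu_0 \times (\vu-\va)|^2\,\dd\vu.
\end{equation*}
Expanding the square yields three contributions; the cross term $-2(\vR\vnu_0\times\va)\cdot(\vR\vnu_0\times\int(\iden_\Omega*\eta_\delta)(\vu)\vu\,\dd\vu)$ vanishes because the barycenter of $\iden_\Omega*\eta_\delta$ is zero, while the constant piece $\tfrac{\rho_0}{4}|\vR\vnu_0\times\va|^2|\Omega|$ precisely cancels the correction picked up in (i). This delivers the stated identity for $e$; the versions for $\tau$ and $e^{\xc}$ follow by the same calculation.

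\textbf{Main obstacle.} The only real difficulty is the coupled bookkeeping: one must simultaneously track the $\mathrm{O}(3)$ cross-product identities (with the sign for improper rotations handled by time-reversal), the barycenter-of-convolution formula, and the exact cancellation between the additive constant produced in (i) and the shift in $\int|\vjp|^2/\rho$ induced by the translation $\va$. Both cancellations depend crucially on the balanced-$\Omega$ and balanced-$\eta$ hypotheses.
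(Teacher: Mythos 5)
Your proof is correct and follows essentially the same route as the paper's: isometry-invariance of $F$ via \cref{isolemma}, a constant-gradient gauge transformation via \cref{vrgauge}, and cancellation of terms using the barycenter-zero hypotheses. The only noteworthy differences are minor bookkeeping (you apply the gauge shift on the translated density $\rho'$, whose barycenter is $-\rho_0|\Omega|\vR^\top\va$, so the linear and quadratic terms combine as $-2+1=-1$, whereas the paper undoes the isometry first so the cross-term vanishes outright) and the fact that you explicitly handle the improper case $\det\vR=-1$ via time-reversal symmetry, a point the paper glosses over when invoking $\vR(\vu\times\vR^\top\vv)=\vR\vu\times\vv$.
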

%%%%%%%%%%%%%%%%%%%%%%%%%%%%%%%%%%%%%%%%%%%%%%%%%%%%%%%%%%%%%%%%%%%%%%%%%%%%%%%%%%%%%%%%%%%%%%%%%%%

\begin{proof}
First, we show (i). Clearly, $\vT^{-1}(\vx)=\vR^\top(\vx-\va)$. Also, $D(((\iden_{\Omega}*\eta_\delta) \circ \vT)\rho_0)=D((\iden_{\Omega}*\eta_\delta)\rho_0)$. We may write the l.h.s. using \cref{isolemma} as
\begin{align*}
&E\big(((\iden_{\Omega}*\eta_\delta) \circ \vT) (\rho_0,\tfrac{1}{2}\rho_0\vnu_0 \times\cdot)\big)\\
&=E\Big(\vx\mapsto (\iden_{\Omega}*\eta_\delta)(\vx)\big(\rho_0,\tfrac{1}{2}\rho_0\vR(\vnu_0\times \vT^{-1}(\vx))\big)\Big)\\
&=E\Big(\vx\mapsto (\iden_{\Omega}*\eta_\delta)(\vx) \big(\rho_0,\tfrac{1}{2}\rho_0 \vR(\vnu_0\times \vR^{\top}\vx)+\tfrac{1}{2}\rho_0\vR(\vnu_0\times\vR^{\top}\va)) \big)\Big)\\
&=E\Big(\vx\mapsto (\iden_{\Omega}*\eta_\delta)(\vx) \big(\rho_0,\tfrac{1}{2}\rho_0 \vR\vnu_0\times \vx+\tfrac{1}{2}\rho_0 \vR\vnu_0\times\va \big)\Big)
\end{align*}
where we used the identity $\vR(\vu\times\vR^{\top}\vv)=\vR\vu\times\vv$.
Using the gauge transformation rule \cref{vrgauge}, we obtain
\begin{align*}
&E\big(((\iden_{\Omega}*\eta_\delta) \circ \vT) (\rho_0,\tfrac{1}{2}\rho_0\vnu_0\times\cdot)\big)=E\big((\iden_{\Omega}*\eta_\delta) (\rho_0,\tfrac{1}{2}\rho_0\vR\vnu_0\times\cdot)\big)\\
&-\frac{\rho_0}{2}\int_{\RR^3} (\iden_{\Omega}*\eta_\delta)(\vx) (\vR\vnu_0\times \vx)\cdot (\vR\vnu_0\times\va)\,\dd\vx + \frac{\rho_0}{4}|\vR\vnu_0\times\va|^2\int_{\RR^3} (\iden_{\Omega}*\eta_\delta)(\vx) \,\dd\vx\\
&= E\big( (\iden_{\Omega}*\eta_\delta)(\rho_0,\tfrac{1}{2}\rho_0 \vR\vnu_0\times\cdot)\big) + \frac{\rho_0}{4}|\vR\vnu_0\times\va|^2\int_{\RR^3} (\iden_{\Omega}*\eta_\delta),
\end{align*}
as stated. 
Here, the cross term disappeared because $\int_{\RR^3} \vx(\iden_{\Omega}*\eta_\delta)(\vx)\,\dd\vx=0$, since the smeared set also has barycenter 0, due to $\int_{\RR^3}\vx\eta(\vx)\,\dd\vx=0$.

For part (ii), we may write based on part (i) that
\begin{multline*}
e_{\vT^{-1}(\Omega),\eta_\delta\circ\vR}(\rho_0,\vnu_0)= \frac{1}{|\Omega|}\Bigg[ E\big( (\iden_{\Omega}*\eta_\delta)(\rho_0,\tfrac{1}{2}\rho_0\vR\vnu_0\times\cdot)\big) + \frac{\rho_0}{4}|\vR\vnu_0\times \va|^2 \int_{\RR^3} (\iden_{\Omega}*\eta_\delta)\\
-\frac{\rho_0}{4} \int_{\RR^3} (\iden_{\Omega}*\eta_\delta)(\vx) |\vnu_0\times(\vR^{-1}(\vx-\va))|^2 \,\dd\vx \Bigg]=e_{\Omega,\eta_\delta}(\rho_0,\vR\vnu_0),
\end{multline*}
where the mixed term in last term vanished as before.
\end{proof}

Choosing $\vM=\lambda^{1/3}\iden$ in \cref{transrep} and setting 
$$
\delta_\lambda(\rho,\vjp)(\vx):=\Upsilon_{\vM}(\rho,\vjp)(\vx)=(\lambda\rho(\lambda^{1/3}\vx),\lambda^{4/3}\vjp(\lambda^{1/3}\vx)),
$$
 we deduce the following (the lower bound uses the Lieb--Oxford inequality as well).
\begin{theorem}
For every $(\rho,\vjp)\in\rcdens$ and $0\le\lambda\le 1$, we have
\begin{equation}\label{Escale}
\ol{E}(\delta_\lambda(\rho,\vjp))\le \lambda^{1/3} \ol{E}(\rho,\vjp)
\end{equation}
and
\begin{equation}\label{Escalelo}
\begin{aligned}
\ol{E}(\delta_\lambda(\rho,\vjp)) \ge \lambda^{2/3} \ol{E}(\rho,\vjp) + (\lambda^{2/3}-\lambda^{1/3}) c_\LO \int_{\RR^3} \rho^{4/3}.
\end{aligned}
\end{equation}
\end{theorem}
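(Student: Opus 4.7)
The plan is to specialize the reparametrization identity \cref{transrep} to the dilation $\vM = \lambda^{1/3}\iden$, for which $\vM\vM^\top = \lambda^{2/3}\iden$ and $\vM^{-1} = \lambda^{-1/3}\iden$. Tracing the definition $\ol{\vtau}_\gamma = \vtau_\gamma - \vjp_\gamma\otimes\vjp_\gamma/\rho_\gamma$ yields
$$\int_{\RR^3}\Tr_{\RR^3}\bigl(\vM\vM^\top \ol{\vtau}_{\gamma_\vGamma}\bigr) = \lambda^{2/3}\Bigl[\TC(\vGamma) - \int_{\RR^3}\tfrac{|\vjp|^2}{\rho}\Bigr].$$
An elementary change of variables $\vx\mapsto\lambda^{1/3}\vx$ in \cref{Ctrans} gives $\Cc(\Upsilon_\vM\vGamma) = \lambda^{1/3}\Cc(\vGamma)$, and the same rescaling produces $D(\Upsilon_\vM\rho) = \lambda^{1/3}D(\rho)$. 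Substituting into \cref{transrep} leads to the master identity
\begin{equation*}
\ol{E}(\delta_\lambda(\rho,\vjp)) = \inf_{\substack{\vGamma \in \dm \\ \rho_\vGamma = \rho,\, \vjp_\vGamma = \vjp}}\left\{\lambda^{2/3}\left[\TC(\vGamma) - \int_{\RR^3}\frac{|\vjp|^2}{\rho}\right] + \lambda^{1/3}\bigl[\Cc(\vGamma) - D(\rho)\bigr]\right\}.
\end{equation*}

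For the upper bound \cref{Escale}, I would use that $\lambda^{2/3} \le \lambda^{1/3}$ on $[0,1]$, combined with the nonnegativity of the bracketed kinetic quantity $\TC(\vGamma) - \int_{\RR^3}\tfrac{|\vjp|^2}{\rho}\ge 0$, which is the content of \cref{kinweizgauge} applied to the reduced one-particle density matrix $\gamma_\vGamma$. Replacing the coefficient $\lambda^{2/3}$ by $\lambda^{1/3}$ in the master identity increases the argument of the infimum pointwise in $\vGamma$, and the resulting infimum is exactly $\lambda^{1/3}\ol{E}(\rho,\vjp)$.

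For the lower bound \cref{Escalelo}, I would regroup the argument of the infimum as
\begin{equation*}
\lambda^{2/3}\left[\TC(\vGamma) + \Cc(\vGamma) - D(\rho) - \int_{\RR^3}\frac{|\vjp|^2}{\rho}\right] + (\lambda^{1/3} - \lambda^{2/3})\bigl[\Cc(\vGamma) - D(\rho)\bigr].
\end{equation*}
Since $\lambda^{1/3} - \lambda^{2/3} \ge 0$ on $[0,1]$ and the Lieb--Oxford inequality \cref{lobound} gives $\Cc(\vGamma) - D(\rho) \ge -c_\LO\int_{\RR^3}\rho^{4/3}$, the second summand is bounded below by $(\lambda^{2/3} - \lambda^{1/3})\,c_\LO\int_{\RR^3}\rho^{4/3}$ uniformly in $\vGamma$. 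Taking the infimum of the remaining piece over $\vGamma$ produces $\lambda^{2/3}\ol{E}(\rho,\vjp)$, which yields \cref{Escalelo}.

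The bookkeeping of scaling factors is routine; the only conceptual observation is that the kinetic contribution scales with the exponent $\lambda^{2/3}$ (through $\vM\vM^\top$) while the Coulombic contributions scale with $\lambda^{1/3}$ (through the inverse length in $|\vM^{-1}\,\cdot\,|$). The two bounds then arise by exchanging which exponent dominates the infimum, with the Lieb--Oxford defect quantifying the loss incurred in the lower direction.
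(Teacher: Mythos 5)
Your proof is correct and matches the route the paper indicates: plug $\vM = \lambda^{1/3}\iden$ into \cref{transrep}, read off the kinetic part as $\lambda^{2/3}$ times $\TC(\vGamma)-\int|\vjp|^2/\rho$ and the Coulomb parts as $\lambda^{1/3}$ times $\Cc(\vGamma)-D(\rho)$, then use nonnegativity of the gauge-corrected kinetic energy (a consequence of \cref{kinweizgauge}) for the upper bound and the Lieb--Oxford inequality \cref{lobound} for the lower bound. The paper states this only as a one-line remark, and your write-out fills in exactly the intended computation, including the correct regrouping that makes the Lieb--Oxford defect appear with the factor $\lambda^{2/3}-\lambda^{1/3}$.
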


%%%%%%%%%%%%%%%%%%%%%%%%%%%%%%%%%%%%%%%%%%%%%%%%%%%%%%%%%%%%%%%%%%%%%%%%%%%%%%%%%%%%%%%%%%%%%%%%%%%
\subsection{Decoupling lower bound}
%%%%%%%%%%%%%%%%%%%%%%%%%%%%%%%%%%%%%%%%%%%%%%%%%%%%%%%%%%%%%%%%%%%%%%%%%%%%%%%%%%%%%%%%%%%%%%%%%%%

Let $C_1:=(-\frac{1}{2},\frac{1}{2})^3$ denote the origin-centered unit cube, and $\tetra_j=\vT_j\tetra$ ($j=1,\ldots,24$) be its tiling with 24 congruent tetrahedra, where
$\vT_j\vx=\vR_j\vx-\vz_j$ for some $\vR_j\in \SO(3)$ and $\vz_j\in C_1$. Here, $\tetra$ is a ``reference'' tetrahedron with barycenter 0.
Then with $C_\ell=\ell C_1$, we write 
$$
\RR^3=\bigcup_{\vz\in\ZZ^3}(C_\ell+\ell\vz)=\bigcup_{\vz\in\ZZ^3} \bigcup_{j=1}^{24} ( \ell \vT_j\tetra + \ell\vz),
$$
where $\ell>0$ the scale of the tiling. The reason behind the use of this particular tetrahedral tiling is that the Graf--Schenker inequality
is employed for the Coulomb interaction, see \cite{lewin2018statistical,lewin2020local} and the original \cite{graf1995molecular,graf1995electrostatic} for details.

The corresponding partition of unity is given by 
\begin{equation}\label{pouind}
\sum_{\vz\in\ZZ^3} \sum_{j=1}^{24} \iden_{\ell \vT_j\tetra}(\cdot-\ell\vz)\equiv 1\;\text{(a.e.)}.
\end{equation}

The lower bound is based on well-known localization techniques, and for that, smoothing is needed.
Let $0\le \eta\in C_c^\infty(\RR^3)$ be such that $\int_{\RR^3} \vx\eta(\vx)\,\dd\vx=0$ with $\supp\eta\subset B_1$ and $\int_{\RR^3}\eta=1$. Define
$$
\eta_\delta(\vx)=\left(\frac{10}{\delta}\right)^3\eta\left(\frac{10\vx}{\delta}\right).
$$
Then $\supp\eta_\delta\subset B_{\frac{\delta}{10}}$ and $\int_{\RR^3} \eta_\delta=1$. 
The functions $\iden_{\ell\tetra_j} * \eta_\delta$ form a partition of unity in the following sense:
\begin{align*}
\sum_{\vz\in\ZZ^3} \sum_{j=1}^{24} (\iden_{\ell\tetra_j} * \eta_\delta)(\cdot - \ell\vz)&=\sum_{\vz\in\ZZ^3} \sum_{j=1}^{24} \int_{\RR^3} \iden_{\ell\tetra_j}(\cdot-\vx-\ell\vz) \eta_\delta(\vx)\,\dd\vx 
=\int_{\RR^3} \eta_\delta \equiv 1
\end{align*}
by \cref{pouind}. The following facts will be used repeatedly. 
Fix $\vx\in\RR^3$, if $\supp\eta_\delta(\vx-\cdot)\cap \partial(\ell\tetra_j)=\emptyset$, then clearly either $\supp\eta_\delta(\vx-\cdot)\subset \ell\tetra_j$ or else 
$\supp\eta_\delta(\vx-\cdot)\subset (\ell\tetra_j)^c$. In the former case, $(\iden_{\ell\tetra_j} * \eta_\delta)(\vx)=1$ and in the latter, $(\iden_{\ell\tetra_j} * \eta_\delta)(\vx)=0$. Since $\supp\eta_\delta(\vx-\cdot)\subset B_{\frac{\delta}{10}}(\vx)$, we obtain that 
\begin{equation}\label{xicase}
(\iden_{\ell\tetra_j} * \eta_\delta)(\vx)\equiv \begin{cases}
1, & \text{if}\quad B_{\frac{\delta}{10}}(\vx)\subset\ell\tetra_j \\
0, & \text{if}\quad B_{\frac{\delta}{10}}(\vx)\subset(\ell\tetra_j)^c
\end{cases}
\end{equation}
It follows that $\iden_{\ell\tetra_j} * \eta_\delta\equiv 1$ on $\ell\tetra_j\setminus (\partial(\ell\tetra_j)+B_{\frac{\delta}{10}})$.
More conveniently, $\iden_{\ell\tetra_j} * \eta_\delta\equiv 1$ on $(\ell-\delta)\tetra_j$, as this set is clearly contained in the former.

The following trivial observation shows that $\vjp$ behaves the same way as $\rho$ with respect to localization.
Let $\gamma$ be a one-particle density matrix and fix $0\le\chi\le 1$, where $\chi\in C^1(\RR^3)$, and define
$\gamma|_\chi=\chi\gamma\chi$.
Then $\rho_{\gamma|_\chi}=\chi^2\rho_\gamma$ and $\vjp_{\gamma|_\chi}=\chi^2 \vjp_\gamma$, where we used the fact that $\chi$ real-valued.

This implies that we may proceed \emph{exactly} as in \cite{lewin2018statistical,lewin2020local}. We do not, however, discard the kinetic energy term
$\frac{C\delta}{\ell} T(\ldots)$ for the bound on $E(\rho,\vjp)$, as it will be required to cancel the gauge terms in \cref{tetrathermo}.
Consequently, we have

\begin{theorem}\label{lboundthm}
There is a constant $C>0$ such that for any $\delta>0$ such that $0<\delta\ell<1/C$ and $(\rho,\vjp)\in\wt{\rcdens}$ the bound
\begin{multline*}
E(\rho,\vjp)\ge \int_{\SO(3)}\dd\vR  \dashint_{C_{\ell}}\dd\vtau \sum_{\vz\in\ZZ^3} \sum_{j=1}^{24} \times \\
\left[ \left(1-\frac{C\delta}{\ell}\right) E\bigl((\iden_{\ell\tetra_j} * \eta_\delta)(\vR\cdot - \ell\vz - \vtau)(\rho,\vjp)\bigr)+\frac{C\delta}{\ell} T\bigl((\iden_{\ell\tetra_j} * \eta_\delta)(\vR\cdot - \ell\vz - \vtau)(\rho,\vjp)\bigr)\right] \\ 
-\frac{C}{\ell} \int_{\RR^3} ((1+\delta^{-1})\rho + \delta^3\rho^2 )
\end{multline*}
holds. Furthermore, for any $\delta>0$ and $\ell>0$ the bound
\begin{align*}
T(\rho,\vjp)\ge  \dashint_{C_{\ell}}\dd\vtau \sum_{\vz\in\ZZ^3} \sum_{j=1}^{24} T\bigl((\iden_{\ell\tetra_j} * \eta_\delta)(\vR\cdot - \ell\vz - \vtau)(\rho,\vjp)\bigr) - \frac{C}{\delta\ell}\int_{\RR^3} \rho
\end{align*}
also holds true.
\end{theorem}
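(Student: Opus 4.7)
The proof will follow the localization strategy of Lewin, Lieb, and Seiringer \cite{lewin2018statistical,lewin2020local}, with the \emph{single} new ingredient being the observation recorded just before the theorem: for real-valued $\chi\in C^1(\RR^3)$ and a one-particle density matrix $\gamma$, the localized state $\chi\gamma\chi$ has one-particle density $\chi^2\rho_\gamma$ \emph{and} paramagnetic current $\chi^2\vjp_\gamma$. Consequently, $T(\chi^2\rho,\chi^2\vjp)$ and $F(\chi^2\rho,\chi^2\vjp)$ provide valid lower bounds for the kinetic (resp.\ kinetic-plus-Coulomb) energy of $\chi\gamma\chi$ after the infima in \cref{vrdef} are taken. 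This reduces the magnetic case to a transcription of the nonmagnetic argument.

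First I would pick a near-minimizer $\vGamma\in\dm$ for $F(\rho,\vjp)$ and, for each $\vR\in\SO(3)$ and $\vtau\in C_\ell$, consider the smoothed cut-offs $\chi_{j,\vz}^2:=(\iden_{\ell\tetra_j}\ast\eta_\delta)(\vR\cdot -\ell\vz-\vtau)$, which form a partition of unity by \cref{pouind}. The IMS localization formula $-\lapl=\sum_{j,\vz}(\chi_{j,\vz}(-\lapl)\chi_{j,\vz}-|\grad\chi_{j,\vz}|^2)$ yields
$$
\TC(\vGamma)=\sum_{j,\vz}\TC(\chi_{j,\vz}\vGamma\chi_{j,\vz})-\int_{\RR^3}\sum_{j,\vz}|\grad\chi_{j,\vz}|^2\rho.
$$
The functions $|\grad\chi_{j,\vz}|^2$ are supported in a $\delta/10$-neighborhood of $\partial(\ell\tetra_j)$ and contribute a total of $O(1/(\delta\ell))$ per unit volume after averaging over $(\vR,\vtau)$. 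Combining this with $\TC(\chi_{j,\vz}\vGamma\chi_{j,\vz})\ge T(\chi_{j,\vz}^2\rho,\chi_{j,\vz}^2\vjp)$ immediately gives the second inequality.

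For the first inequality, the Coulomb interaction is decoupled via the Graf--Schenker inequality \cite{graf1995molecular,graf1995electrostatic}, which asserts that after averaging over $\vR\in\SO(3)$ and $\vtau\in C_\ell$ one has $|\vx-\vy|^{-1}\ge\sum_{j,\vz}\chi_{j,\vz}(\vx)\chi_{j,\vz}(\vy)|\vx-\vy|^{-1}$ modulo an $O(1/\ell)$ error; the smoothing at scale $\delta$ produces the extra $\delta^3\rho^2/\ell$ error term. Adding the IMS kinetic decomposition and subtracting the direct energy $D(\rho)$ (whose cross-terms $\sum_{j\neq k}D(\chi_j^2\rho,\chi_k^2\rho)$ are nonnegative and absorbed into the error) produces a lower bound by the sum of local indirect energies $E(\chi_{j,\vz}^2\rho,\chi_{j,\vz}^2\vjp)$. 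To obtain the exact convex combination $(1-C\delta/\ell)E+(C\delta/\ell)T$ on the right-hand side, one splits the kinetic part $\TC(\vGamma)=(1-C\delta/\ell)\TC(\vGamma)+(C\delta/\ell)\TC(\vGamma)$ before applying the localization: the first piece combines with the Coulomb decoupling to form local indirect energies, while the second piece is bounded directly by the local kinetic functionals. Unlike in \cite{lewin2020local}, the $C\delta/\ell$ kinetic fraction is retained rather than discarded because it will be required in the proof of \cref{thermothm} to compensate the boundary-layer contribution of the gauge correction term $\int|\vjp|^2/\rho$.

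The main obstacle is the careful bookkeeping of the three competing small parameters $1/\ell$, $\delta/\ell$ and $\delta\ell$. The smoothing scale $\delta$ must be chosen large enough for the IMS surface term $1/(\delta\ell)$ to be negligible, yet small enough that the retained kinetic fraction $C\delta/\ell$ stays $\ll 1$ and the smoothed Graf--Schenker error $\delta^3\rho^2/\ell$ remains subleading; the regime $\delta/\ell\to 0$, $\delta\ell\to\infty$ of \cref{thermothm} is precisely what makes all three errors vanish in the thermodynamic limit. Apart from this interplay, the argument is a direct adaptation of \cite[proof of the lower bound]{lewin2020local}, with the density-current localization behaviour of $\chi\gamma\chi$ doing all the work needed to accommodate the extra $\vjp$ constraint.
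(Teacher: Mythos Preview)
Your proposal is correct and follows essentially the same route as the paper: the paper's proof consists of the single observation that $\vjp_{\chi\gamma\chi}=\chi^2\vjp_\gamma$ (so localization is compatible with the current constraint) together with the instruction to ``proceed exactly as in \cite{lewin2018statistical,lewin2020local}'' while retaining the $\tfrac{C\delta}{\ell}T(\ldots)$ term, and this is precisely what you outline via IMS localization and the Graf--Schenker inequality.
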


\subsection{Decoupling upper bound}

First, \cref{pouind} is regularized as follows.
For any $\delta\le\frac{\ell}{2}$, consider the functions $(1-\delta/\ell)^{-3} \iden_{\ell \vT_j(1-\frac{\delta}{\ell})\tetra} * \eta_\delta$.
Here, $\ell \vT_j(1-\frac{\delta}{\ell})\tetra=(\ell-\delta)\vR_j\tetra + \ell\vz_j$ and
$$
\int_{\RR^3} (1-\delta/\ell)^{-3} \iden_{\ell \vT_j(1-\frac{\delta}{\ell})\tetra} * \eta_\delta=|\ell\tetra|.
$$
To obtain a true partition of unity, these functions are averaged over the cubic tiling:
\begin{equation}\label{pouavg}
\dashint_{C_\ell} \sum_{\vz\in\ZZ^3}\sum_{j=1}^{24}  (1-\delta/\ell)^{-3} (\iden_{\ell \vT_j(1-\frac{\delta}{\ell})\tetra} * \eta_\delta)(\cdot - \vtau-\ell\vz)\,\dd\vtau \equiv 1\;\text{(a.e.)}.
\end{equation}
The following result says that the (grand-canonical) Vignale--Rasolt functional decouples the same way as the Levy--Lieb functional.
\begin{proposition}
For any $(\rho,\vjp)\in\wt{\rcdens}$, the bounds
\begin{align*}
&F(\rho,\vjp)\le \dashint_{C_{\ell}}\dd\vtau \sum_{\vz\in\ZZ^3} \sum_{j=1}^{24} F\bigl((1-\delta/\ell)^{-3} (\iden_{\ell \vT_j(1-\frac{\delta}{\ell})\tetra} * \eta_\delta)(\cdot - \vtau-\ell\vz)(\rho,\vjp)\bigr)\\
&+\dashint_{C_\ell}\dd\vtau \sum_{\substack{\vz\neq \vz'\\j\neq j'}}(1-\delta/\ell)^{-6} D\bigl( (\iden_{\ell \vT_j(1-\frac{\delta}{\ell})\tetra} * \eta_\delta)(\cdot - \vtau-\ell\vz)\rho, (\iden_{\ell \vT_{j'}(1-\frac{\delta}{\ell})\tetra} * \eta_\delta)(\cdot - \vtau-\ell\vz')\rho\bigr)
\end{align*}
and
\begin{align*}
T(\rho,\vjp)&\le \dashint_{C_{\ell}}\dd\vtau \sum_{\vz\in\ZZ^3} \sum_{j=1}^{24} T\bigl((1-\delta/\ell)^{-3} (\iden_{\ell \vT_j(1-\frac{\delta}{\ell})\tetra} * \eta_\delta)(\cdot - \vtau-\ell\vz)(\rho,\vjp)\bigr)
\end{align*}
hold true.
\end{proposition}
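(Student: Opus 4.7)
The plan is to adapt the decoupling argument of \cite{lewin2018statistical,lewin2020local} to the current-carrying setting: build a Fock-space trial state for $(\rho,\vjp)$ as a second-quantized product of optimizers for the localized problems, and then average the resulting state over the translation parameter $\vtau\in C_\ell$. The shrinking factor $(1-\delta/\ell)$ combined with the smearing radius $\delta/10$ forces the local pieces to have pairwise disjoint spatial supports, which is exactly what makes the Coulomb interaction split cleanly into on-tile contributions plus the direct cross-terms appearing in the statement.

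\medskip

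\textbf{Construction.} For each $\vtau\in C_\ell$ and $(j,\vz)\in\{1,\ldots,24\}\times\ZZ^3$, put
$$
\iota_{j,\vz,\vtau}:=(\iden_{\ell\vT_j(1-\delta/\ell)\tetra}*\eta_\delta)(\cdot-\vtau-\ell\vz),\quad (\rho_{j,\vz,\vtau},\vjp_{j,\vz,\vtau}):=(1-\delta/\ell)^{-3}\iota_{j,\vz,\vtau}\cdot(\rho,\vjp).
$$
Each localized pair inherits membership in $\rcdens$ with finite $F$ from $(\rho,\vjp)\in\wt{\rcdens}$, so \cref{vropt} supplies an optimizer $\vGamma_{j,\vz,\vtau}\in\dm$. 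By the choice of smearing radius $\delta/10$, the support of $\iota_{j,\vz,\vtau}$ is contained in $\ell\tetra_j+\ell\vz+\vtau$, and these parent tiles are pairwise disjoint by the tiling property \cref{pouind}; the one-particle density matrices $\gamma_{\vGamma_{j,\vz,\vtau}}$ from the explicit construction of \cref{vropt} inherit this support property. Using the canonical identification $\Fock(\hil_A\oplus\hil_B)\cong\Fock(\hil_A)\otimes\Fock(\hil_B)$ for orthogonal one-particle subspaces, form the Fock-space product state
$$
\vGamma_\vtau:=\bigotimes_{j,\vz}\vGamma_{j,\vz,\vtau}\in\dm,\qquad \vGamma^\ast:=\dashint_{C_\ell}\vGamma_\vtau\,\dd\vtau\in\dm.
$$
Linearity of $\rho_\vGamma$, $\vjp_\vGamma$ in $\vGamma$ combined with \cref{pouavg} yields $\rho_{\vGamma^\ast}=\rho$ and $\vjp_{\vGamma^\ast}=\vjp$.

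\medskip

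\textbf{Energy decomposition and conclusion.} Disjointness of supports gives $\gamma_{\vGamma_\vtau}=\bigoplus_{j,\vz}\gamma_{\vGamma_{j,\vz,\vtau}}$, so the kinetic energy is additive, $\TC(\vGamma_\vtau)=\sum_{j,\vz}\TC(\vGamma_{j,\vz,\vtau})$. The fermionic exchange cross-terms of the two-body reduced density vanish on disjoint spatial supports (by the standard CAR-algebra calculation), producing
$$
\Cc(\vGamma_\vtau)=\sum_{j,\vz}\Cc(\vGamma_{j,\vz,\vtau})+\sum_{(j,\vz)\ne(j',\vz')}(1-\delta/\ell)^{-6}D\bigl(\iota_{j,\vz,\vtau}\,\rho,\iota_{j',\vz',\vtau}\,\rho\bigr).
$$
By optimality, $\TC(\vGamma_{j,\vz,\vtau})+\Cc(\vGamma_{j,\vz,\vtau})=F(\rho_{j,\vz,\vtau},\vjp_{j,\vz,\vtau})$. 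The variational definition of $F$ then yields
$$
F(\rho,\vjp)\le\TC(\vGamma^\ast)+\Cc(\vGamma^\ast)=\dashint_{C_\ell}\bigl[\TC(\vGamma_\vtau)+\Cc(\vGamma_\vtau)\bigr]\,\dd\vtau,
$$
which is the stated upper bound on $F$. The bound on $T$ follows from the same construction with the Coulomb terms omitted. I expect the main technical obstacle to be the bookkeeping for the fermionic tensor product -- specifically, verifying that $\vGamma_\vtau$ genuinely lies in $\dm$ with the claimed one- and two-particle reduced densities, and that exchange contributions truly cancel on disjoint spatial supports -- together with the a priori check that each localized pair is Fock-representable so that \cref{vropt} indeed applies.
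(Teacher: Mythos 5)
Your approach is the same as the paper's: decompose $(\rho,\vjp)$ via the averaged partition of unity \cref{pouavg}, invoke \cref{vropt} to obtain optimizers for each localized pair, tensor them into a Fock-space product state supported on disjoint tiles, average over $\vtau$, and decompose the kinetic and Coulomb energies using the disjointness. The one piece you flag but leave open is handled concretely in the paper: with $\vv=\vjp/\rho$ and $\vv_{\vtau,\vz,j}=\vv\,\iden_{\supp\rho_{\vtau,\vz,j}}$, the pointwise bounds $\int\rho_{\vtau,\vz,j}|\vv_{\vtau,\vz,j}|^2\le\int\rho|\vv|^2<\infty$ and $\int\rho_{\vtau,\vz,j}|\vD\vv_{\vtau,\vz,j}|\le\int\rho|\vD\vv|<\infty$ show $(\rho_{\vtau,\vz,j},\vjp_{\vtau,\vz,j})\in\wt{\rcdens}$, hence Fock-representable with finite kinetic energy, which is exactly what is needed to apply \cref{vropt}. (Also a small slip: \cref{vropt} is an existence theorem, not an explicit construction; the support property of $\gamma_{\vGamma_{j,\vz,\vtau}}$ follows instead from $0\le\gamma$ and the prescribed density vanishing off the tile, via $|\gamma(\vx,\vy)|^2\le\gamma(\vx,\vx)\gamma(\vy,\vy)$.) The remaining Fock-space bookkeeping is exactly that of Proposition~1 in the cited reference, to which the paper defers.
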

\begin{proof}
The partition of unity \cref{pouavg} allows us to write our density-current pair as
\begin{equation}\label{rhovjppou}
(\rho,\vjp)=\dashint_{C_\ell} \sum_{\vz\in\ZZ^3}\sum_{j=1}^{24} (\rho_{\vtau,\vz,j},\vjp_{\vtau,\vz,j}) \,\dd\vtau,
\end{equation}
where we have set
$$
(\rho_{\vtau,\vz,j},\vjp_{\vtau,\vz,j})= (1-\delta/\ell)^{-3} (\iden_{\ell \vT_j(1-\frac{\delta}{\ell})\tetra} * \eta_\delta)(\cdot - \vtau-\ell\vz)(\rho,\vjp)
$$
for any $\vtau\in C_\ell$, $\vz\in\ZZ^3$ and $j=1,\ldots,24$. 
We now claim that the density-current pairs $(\rho_{\vtau,\vz,j},\vjp_{\vtau,\vz,j})$ are
 representable, i.e. $(\rho_{\vtau,\vz,j},\vjp_{\vtau,\vz,j})\in\wt{\rcdens}$. In fact, it is clear that $\rho_{\vtau,\vz,j}\in\rdens$, and with $\vv=\frac{\vjp}{\rho}$ and 
$$
\vv_{\vtau,\vz,j}:=\frac{\vjp_{\vtau,\vz,j}}{\rho_{\vtau,\vz,j}}\iden_{\supp\rho_{\vtau,\vz,j}}=\vv\iden_{\supp\rho_{\vtau,\vz,j}}
$$
there holds $\int_{\RR^3} \rho_{\vtau,\vz,j}|\vv_{\vtau,\vz,j}|^2 \le \int_{\RR^3} \rho|\vv|^2 <\infty$. 
Also, $\int_{\RR^3} \rho_{\vtau,\vz,j} |\vD\vv_{\vtau,\vz,j}|\le \int_{\RR^3} \rho |\vD\vv|<\infty$. 

The rest of the proof is analogous to that of \cite[Proposition 1]{lewin2020local}. We note that in the construction of the trial state \cref{vropt} is used.
\end{proof}

Next, we state a similar bound for the indirect energy defined in \cref{inddef}.
\begin{theorem}\label{eneup}
For any $(\rho,\vjp)\in\wt{\rcdens}$, $0<\delta<\frac{\ell}{2}$ and $0<\alpha\le\frac{1}{2}$, the bound
\begin{multline*}
E(\rho,\vjp)\\
\le\dashint_{1-\alpha}^{1+\alpha}\frac{\dd t}{t^4} \int_{\SO(3)}\dd\vR  \dashint_{C_{t\ell}}\dd\vtau \sum_{\vz\in\ZZ^3} \sum_{j=1}^{24}
E\bigl((1-\delta/\ell)^{-3} (\iden_{t\ell \vT_j(1-\frac{\delta}{\ell})\tetra} * \eta_{t\delta})(\vR\cdot - t\ell\vz - \vtau)(\rho,\vjp)\bigr)\\
  + C\delta^2\log(\alpha^{-1}) \int_{\RR^3}\rho^2
\end{multline*}
holds for a universal constant $C$. Here, $\dashint_{1-\alpha}^{1+\alpha}\frac{\dd t}{t^4}:= \left(\int_{1-\alpha}^{1+\alpha}\frac{\dd t}{t^4}\right)^{-1} \int_{1-\alpha}^{1+\alpha}\frac{\dd t}{t^4}$.
\end{theorem}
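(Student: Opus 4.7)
The plan is to combine the upper bound on $F(\rho,\vjp)$ from the preceding proposition with a Graf--Schenker lower bound on $D(\rho)$, so that the off-diagonal Coulomb cross terms cancel and only the diagonal contributions $F(\text{piece})-D(\rho_{\text{piece}})=E(\text{piece})$ survive, together with a controlled self-energy error. This mirrors the derivation of the indirect-energy decoupling upper bound in the nonmagnetic setting of \cite{lewin2018statistical,lewin2020local}; since the paramagnetic current $\vjp$ enters only through $F$ and does not appear in $D(\rho)$, the magnetic extension is essentially a matter of tracking notation.

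First I would extend the averaging in the previous proposition to include a dilation parameter $t\in[1-\alpha,1+\alpha]$ with weight $\dd t/t^4$ and a rotation $\vR\in\SO(3)$. The rescaled and rotated family $(1-\delta/\ell)^{-3}(\iden_{t\ell\vT_j(1-\delta/\ell)\tetra}*\eta_{t\delta})(\vR\cdot-t\ell\vz-\vtau)$ still integrates to $1$ almost everywhere, so joint convexity of $F$ on $\wt{\rcdens}$ together with the preceding proposition yields
\begin{align*}
F(\rho,\vjp)\le \dashint_{1-\alpha}^{1+\alpha}\frac{\dd t}{t^4}\int_{\SO(3)}\dd\vR\dashint_{C_{t\ell}}\dd\vtau \sum_{\vz,j} F(\text{piece}_{t,\vR,\vtau,\vz,j}) + \mathrm{X}(\rho),
\end{align*}
where $\mathrm{X}(\rho)$ collects the pairwise cross Coulomb terms $(1-\delta/\ell)^{-6}D(\rho_{\text{piece}},\rho_{\text{piece}'})$ for $(\vz,j)\neq(\vz',j')$. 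Next I would invoke the Graf--Schenker inequality \cite{graf1995molecular,graf1995electrostatic} in the rotation-, translation-, and dilation-averaged form used in \cite{lewin2018statistical,lewin2020local}: the Coulomb kernel $1/|\vx-\vy|$ admits a pointwise lower bound by the sum of smeared-tetrahedron indicator products, up to a boundary-layer correction. Integrating against $\tfrac12\rho(\vx)\rho(\vy)$ gives a lower bound on $D(\rho)$ whose leading part equals the averaged sum $\sum D(\rho_{\text{piece}})$ plus \emph{exactly} the same cross Coulomb terms $\mathrm{X}(\rho)$ appearing in the upper bound on $F$.

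Subtracting this lower bound on $D(\rho)$ from the upper bound on $F(\rho,\vjp)$ cancels $\mathrm{X}(\rho)$ identically, and on each diagonal piece it produces $F(\text{piece})-D(\rho_{\text{piece}})=E(\text{piece})$, which is precisely the stated right-hand side modulo the error term. The main obstacle is the sharp estimate of the Graf--Schenker boundary error: the difference between $(\iden_{\ell\tetra_j}*\eta_{t\delta})^{\otimes 2}$ and the true partition of unity \cref{pouind} is supported in a tubular neighborhood of $\partial(\ell\tetra_j)$ of width $\delta$, and its pairing with the singular Coulomb kernel would only be logarithmic in $\ell$ without the dilation average. The $\dd t/t^4$ integration over $[1-\alpha,1+\alpha]$ replaces that $\ell$-divergence by $\log(\alpha^{-1})$, while the quadratic scaling in $\delta$ follows from a second-order Taylor expansion of the smoothed indicator that exploits $\int_{\RR^3}\vx\,\eta(\vx)\dd\vx=0$, exactly as in the analogous step of \cite{lewin2020local}. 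This yields the announced error $C\delta^2\log(\alpha^{-1})\int_{\RR^3}\rho^2$.
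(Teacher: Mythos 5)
Your outline has the right overall architecture — decouple $F$ by averaging a smeared tetrahedral partition of unity, observe that the off-diagonal Coulomb cross terms cancel against $D(\rho)$, and control what is left via a rotation/dilation average — and this is indeed what the paper does. But the mechanism you invoke for the $D(\rho)$ side is not the one that actually works here, and as stated it would not close.

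The Graf--Schenker inequality is a \emph{lower} bound for the many-body Coulomb interaction in terms of the \emph{diagonal} pieces $\sum_k\iden_{\tetra_k}(\vx)\iden_{\tetra_k}(\vy)/|\vx-\vy|$; in this paper it is used only in the decoupling \emph{lower} bound (Theorem~\ref{lboundthm}). If you applied it to $D(\rho)$ you would obtain $D(\rho)\ge\text{avg}\sum_k D(\rho_{\text{piece}_k})-\text{error}$, with no cross terms on the right. That is strictly weaker than what you need: you need $D(\rho)\ge\text{avg}\sum_k D(\rho_{\text{piece}_k})+\mathrm{X}(\rho)-\text{error}$, since the positive $\mathrm{X}(\rho)$ must match the cross terms coming from the trial-state construction in the $F$ upper bound. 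Graf--Schenker does not produce $\mathrm{X}(\rho)$; adding it by hand would be circular. What the paper actually uses is elementary: $D$ is a quadratic form and $\dashint_{C_\ell}\rho_{\vGamma_\vtau}\,\dd\vtau=\rho$, hence $D(\rho)=\dashint_{C_\ell}D(\rho_{\vGamma_\vtau})\,\dd\vtau-\dashint_{C_\ell}D(\rho_{\vGamma_\vtau}-\rho)\,\dd\vtau$. Since for each fixed $\vtau$ one has $D(\rho_{\vGamma_\vtau})=\sum_{\vz,j}D(\rho_{\vtau,\vz,j})+\mathrm{X}_\vtau$ with exactly the cross terms produced by the tensorized trial state used to bound $F$, the cross terms cancel identically in $F-D$, and the only object left to estimate is the nonnegative defect $\dashint_{C_\ell}D(\rho_{\vGamma_\vtau}-\rho)\,\dd\vtau$ — this is equation~\eqref{eneupproof} in the paper. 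The rotation/dilation average with weight $\dd t/t^4$, invoked precisely on this defect term, is then bounded by $C\delta^2\log(\alpha^{-1})\int\rho^2$ as in \cite{lewin2020local}, after which \cref{isolemma} removes the $\vR$-dependence of $(\rho,\vjp)$. Your last paragraph, on why the dilation average trades a divergence for $\log(\alpha^{-1})$ via the vanishing first moment of $\eta$, gestures at the right ingredient but should be attached to the defect $\rho_{\vGamma_\vtau}-\rho$ rather than to any Graf--Schenker boundary layer; also note the defect is not literally a tubular boundary layer — its $\vtau$-average vanishes pointwise, and it is this zero-mean structure, not a width-$\delta$ support, that the averaging exploits.
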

\begin{proof}
We have
\begin{equation}\label{eneupproof}
\begin{aligned}
E(\rho,\vjp)&\le \dashint_{C_\ell}\dd\vtau \sum_{\vz\in\ZZ^3}\sum_{j=1}^{24} E(\rho_{\vtau,\vz,j},\vjp_{\vtau,\vz,j}) +  \dashint_{C_\ell} D(\rho_{\vGamma_\vtau}-\rho)\,\dd\vtau.
\end{aligned}
\end{equation}
According to the proof of \cite[Proposition 1]{lewin2020local} (and Lemma 1 ibid.), replacing $\ell\to t\ell$, $\delta\to t\delta$ and $\rho\to\rho\circ\vR$ (and $\vjp\to\vR^{\top}(\vjp\circ\vR)$), and averaging, the preceding quantity may be bounded as follows
$$
\dashint_{1-\alpha}^{1+\alpha}\frac{\dd t}{t^4} \int_{\SO(3)}\dd\vR  \dashint_{C_{t\ell}}D((\rho_{\vGamma_\vtau}-\rho)\circ\vR)\,\dd\vtau \le C\delta^2\log(\alpha^{-1}) \int_{\RR^3}\rho^2,
$$
for $0< \delta < \ell/2$. 
By applying the same replacements and averagings in \cref{eneupproof}, we may use the preceding estimate and we end up with
\begin{align*}
&E(\rho\circ\vR,\vR^{\top}(\vjp\circ\vR)) \\
&\le \dashint_{1-\alpha}^{1+\alpha}\frac{\dd t}{t^4} \int_{\SO(3)}\dd\vR  \dashint_{C_{t\ell}}\dd\vtau \sum_{\vz\in\ZZ^3}\sum_{j=1}^{24} E((1-\delta/\ell)^{-3} (\iden_{t\ell \vT_j(1-\frac{\delta}{\ell})\tetra} * \eta_{t\delta})(\cdot - t\ell\vz - \vtau) (\rho\circ\vR,\vR^{\top}(\vjp\circ\vR))\\
&+ C\delta^2\log(\alpha^{-1}) \int_{\RR^3}\rho^2
\end{align*}
Recalling \cref{isolemma} finishes the proof.
\end{proof}

%%%%%%%%%%%%%%%%%%%%%%%%%%%%%%%%%%%%%%%%%%%%%%%%%%%%%%%%%%%%%%%%%%%%%%%%%%%%%%%%%%%%%%%%%%%%%%%%%%%
\subsection{Energy per volume in a tetrahedron}
%%%%%%%%%%%%%%%%%%%%%%%%%%%%%%%%%%%%%%%%%%%%%%%%%%%%%%%%%%%%%%%%%%%%%%%%%%%%%%%%%%%%%%%%%%%%%%%%%%%

\begin{figure}
\centering
\begin{tikzpicture}[scale=4.5]

\newcommand\gs{0.13};
	
	\draw[line width=1mm,red, fill=red, fill opacity=0.2] (-0.5+0*\gs,-0.5+1*\gs) -- (-0.5+1*\gs,-0.5+0*\gs)  -- (-0.5+10*\gs,-0.5+0*\gs) 
	-- (-0.5+10*\gs,-0.5+2*\gs) -- (-0.5+2*\gs,-0.5+10*\gs) -- (-0.5+0*\gs,-0.5+10*\gs)
	-- cycle;
		
	\draw[line width=1mm,white, fill=white] (-0.5+2*\gs,-0.5+2*\gs) -- (-0.5+2*\gs,-0.5+7*\gs) -- (-0.5+7*\gs,-0.5+2*\gs)-- cycle;
	\draw[line width=1mm,blue, opacity=1]  (-0.5+2*\gs,-0.5+2*\gs) -- (-0.5+2*\gs,-0.5+7*\gs) -- (-0.5+7*\gs,-0.5+2*\gs)-- cycle;
	\draw[line width=0,blue, fill=blue, fill opacity=0.4] (-0.5+2*\gs,-0.5+2*\gs) -- (-0.5+2*\gs,-0.5+7*\gs) -- (-0.5+7*\gs,-0.5+2*\gs)-- cycle;

	\draw[gray,step=\gs,xshift=-0.5cm, yshift=-0.5cm] (0,0) grid (\gs*10,\gs*10);

	\foreach \i in {0,...,10} {
	    % Draw the grid
	    \draw[gray] (-0.5+\gs*\i,-0.5) -- (-0.5,-0.5+\gs*\i);
	    \draw[gray] (0.8-\gs*\i,0.8) -- (0.8,0.8-\gs*\i);
		}
	
    \draw[thick,dashed] (-1/3,-1/3)--(2/3,-1/3)--(-1/3,2/3)--cycle;

	%pattern=north east lines
	\draw[thick] (-1/3+0.1,-1/3+0.1)--(-1/3+0.1,2/3-0.25)--(2/3-0.25,-1/3+0.1)--cycle;
	
	\draw[thick] ([shift=(180:0.1)] -1/3,-1/3) arc (180:270:0.1);
	\draw[thick] ([shift=(-90:0.1)] 2/3,-1/3) arc (-90:45:0.1);
	\draw[thick] ([shift=(45:0.1)] -1/3,2/3) arc (45:180:0.1);

	\draw[thick] (-1/3-0.1,-1/3)--(-1/3-0.1,2/3);
	\draw[thick] (-1/3+0.1*0.707,2/3+0.1*0.707)--(2/3+0.1*0.707,-1/3+0.1*0.707);
	\draw[thick] (-1/3,-1/3-0.1)--(2/3,-1/3-0.1);
		
	\draw[thick, >=latex, ->] (0.42,0.32) -- (0.227,0.227);
    \node at (0.65,0.35) {$\supp(\iden_{\ell'\tetra}*\eta_{\delta'})$};
	
	\draw[thick, >=latex, ->] (0.42,0.52) -- (0.07,0.11);
    \node at (0.45,0.55) {$(\ell'-\delta')\tetra$};

\end{tikzpicture}
\caption{The tetrahedra corresponding to $\vR=\iden$ and $\vtau=0$ included in the sets $J_0$ and $J\setminus J_0$ are depicted in blue and red, respectively. 
The tiling is composed of translates of $\ell\tetra_j$.}\label{tilingdecomp}
\end{figure}

We now prove that the thermodynamic limit for the indirect-, and the kinetic energy density of
the uniform electron gas defined in \cref{uegsec} exists when the domain is taken to be a tetrahedron. 
%The estimates which lead to this result will be needed at a later stage, so we need to record them in detail.

%%%%%%%%%%%%%%%%%%%%%%%%%%%%%%%%%%%%%%%%%%%%%%%%%%%%%%%%%%%%%%%%%%%%%%%%%%%%%%%%%%%%%%%%%%%%%%%%%%%

\begin{theorem}[Convergence rate for tetrahedra]\label{tetrathermo}
Fix $\rho_0\in\RR_+$ and $\vnu_0\in\RR^3$. Let $\eta$ and $\wt{\eta}$ be regularization functions. Let $\ul{e}_\UEG(\rho_0,\nu_0)$ and $\ol{e}_\UEG(\rho_0,\nu_0)$ denote the liminf and the limsup of $e_{\ell\tetra,\eta_\delta}(\rho_0,\vnu_0)$
as $\ell\delta\to\infty$ and $\delta/\ell\to 0$. 
\begin{itemize}
\item[(i)] For $\ell/\delta$ sufficiently large, we have
$$
\int_{\SO(3)} e_{\ell\tetra,\wt{\eta}_\delta}(\rho_0,\vR\vnu_0)\,\dd\vR\le \ul{e}_\UEG(\rho_0,\nu_0) + \frac{C\rho_0}{\ell}\Big( 1+\delta^{-1}+\delta^3\rho_0 + \delta \rho_0^{2/3} + \delta\nu_0\Big)
$$
\item[(ii)] We have for all $0<\alpha\le\tfrac{1}{2}$, 
$$
\dashint_{1-\alpha}^{1+\alpha}\int_{\SO(3)} e_{t\ell\tetra,\wt{\eta}_{t\delta}}(\rho_0,\vR\vnu_0)\,\dd\vR \frac{\dd t}{t^4}\ge \ol{e}_\UEG(\rho_0,\nu_0) -
\frac{C\rho_0}{\ell}\big(\delta\rho_0^{2/3} + \ell^{-1}\big) - C\delta^2\log(\alpha^{-1}) \rho_0^2
$$
\item[(iii)] If $\ell\rho_0^{1/3}\ge C$, then 
\begin{align*}
e_{\ell\tetra,\wt{\eta}_\delta}(\rho_0,\vnu_0)&\ge \ol{e}_\UEG(\rho_0,\nu_0) -\frac{C\delta}{\ell}(\rho_0^{5/3} + \rho_0^{4/3} + \rho_0\nu_0) - C\frac{\rho_0^{23/15} + \rho_0^{18/15}}{\ell^{2/5}},
\end{align*} 
\item[(iv)] The thermodynamic limit
$$
\lim_{\substack{\delta/\ell\to 0\\ \ell\delta\to\infty}} e_{\ell\tetra,\eta_\delta}(\rho_0,\vnu_0) = e_\UEG(\rho_0,\nu_0)
$$
exists and is independent of the choice of the regularization function $\eta$. 
\end{itemize}
\end{theorem}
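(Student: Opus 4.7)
The plan is to prove (i)--(iv) in order, using the decoupling bounds of Theorems~\ref{lboundthm} and~\ref{eneup} as the main engine; part~(iii) is the crucial technical step, from which~(iv) is immediate. For part~(i), I would apply the decoupling lower bound of Theorem~\ref{lboundthm} with piece scale $\ell$ and regularization $\wt{\eta}_\delta$ to $(\iden_{L\tetra}\ast\eta_D)(\rho_0,\tfrac12\rho_0\vnu_0\times\cdot)$ for an auxiliary large tetrahedron $L\tetra$, divide by $|L\tetra|$ after adding the global gauge term so that $\ol E$ appears on the left, and send $L\to\infty$ with $D\to 0$: the LHS's liminf is $\ul e_\UEG(\rho_0,\nu_0)$. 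On the RHS, using Lemma~\ref{etranslemma} to absorb the $SO(3)$ rotation and a gauge transformation to re-center the linear vector potential $\tfrac12\vnu_0\times\vx$ on each piece, interior pieces produce the convex combination $(1-C\delta/\ell)\,e_{\ell\tetra,\wt{\eta}_\delta}(\rho_0,\vR\vnu_0)+(C\delta/\ell)\,\tau_{\ell\tetra,\wt{\eta}_\delta}(\rho_0,\vR\vnu_0)$, which we rewrite as $e_{\ell\tetra,\wt{\eta}_\delta}(\rho_0,\vR\vnu_0)-(C\delta/\ell)\,e^{\xc}_{\ell\tetra,\wt{\eta}_\delta}(\rho_0,\vR\vnu_0)$ and bound the xc-correction by Lieb--Oxford (Theorem~\ref{lobound}). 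Boundary pieces contribute $O(\ell/L)\to 0$ thanks to the uniformly regular boundary of $L\tetra$, yielding~(i).

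Part~(ii) is handled symmetrically by applying Theorem~\ref{eneup} to the same test pair, dividing by $|L\tetra|$, and passing to the limsup; the $\dashint_t\int_{SO(3)}$ averages on the RHS are inherited directly from the statement of Theorem~\ref{eneup}, and the residual error $C\delta^2\log(\alpha^{-1})\int\rho^2$ becomes $C\delta^2\log(\alpha^{-1})\rho_0^2$ after division.

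Part~(iii) is the main obstacle. The plan is to apply Theorem~\ref{lboundthm} a \emph{second} time, now directly to $(\iden_{\ell\tetra}\ast\wt{\eta}_\delta)(\rho_0,\tfrac12\rho_0\vnu_0\times\cdot)$ with a smaller piece scale $\ell'\ll\ell$ and regularization $\delta'$. Proceeding exactly as in~(i) (gauge re-centering and isometry invariance) one obtains
\[
e_{\ell\tetra,\wt{\eta}_\delta}(\rho_0,\vnu_0) \ge \int_{SO(3)}\Bigl[\bigl(1-\tfrac{C\delta'}{\ell'}\bigr) e_{\ell'\tetra,\wt{\eta}_{\delta'}}(\rho_0,\vR\vnu_0) + \tfrac{C\delta'}{\ell'}\,\tau_{\ell'\tetra,\wt{\eta}_{\delta'}}(\rho_0,\vR\vnu_0)\Bigr] d\vR - \mathrm{err},
\]
with $\mathrm{err}$ consisting of a boundary correction of order $\ell'/\ell$ and the decoupling correction $O(\rho_0(1+(\delta')^{-1})/\ell' + (\delta')^3\rho_0^2/\ell')$. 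The rotation integral is then bounded from below using~(ii) at scale $\ell'$: the scaling bounds~\cref{Escale,Escalelo} (with $t$ close to $1$) allow dropping the $t$-average at the cost of an $O(\alpha\rho_0^{4/3})$ Lieb--Oxford error, and $\alpha$ is chosen to balance the $C\delta^2\log(\alpha^{-1})\rho_0^2$ error from~(ii). The $\tau$-term is dominated by $c_\TF\rho_0^{5/3}+c_\LO\rho_0^{4/3}+C\rho_0\nu_0$ via Corollary~\ref{ltcoro}, Theorem~\ref{lobound}, and inequality~\cref{kinweizgauge}. The stated $\ell^{-2/5}$ rate comes from optimizing $\ell'\sim\ell^{3/5}$ and $\delta'$ accordingly; the delicate point is that the two auxiliary parameters must simultaneously satisfy $\ell'/\ell\to 0$, $\delta'/\ell'\to 0$, and $\delta'\ell'\to\infty$ so that the averaged lower bound from~(ii) becomes active.

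Part~(iv) is then immediate: (iii) gives $\ul e_\UEG(\rho_0,\nu_0)\ge\ol e_\UEG(\rho_0,\nu_0)$, and the reverse is trivial, so the limit exists. The independence of the regularization function $\eta$ follows from a cross-comparison: applying~(i) with $\eta_2$ and~(iii) with $\eta_1$ forces $\ol e^{\eta_1}_\UEG \le \ul e^{\eta_2}_\UEG$, and swapping the roles yields equality for any two regularization functions.
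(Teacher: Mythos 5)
Your proposal tracks the paper's argument quite closely in parts (i), (ii), and (iv): the same decoupling theorems, the same interior/boundary split indexed by $J_0$ and $J\setminus J_0$, the same re-centering via \cref{etranslemma}(i) with cancellation of the resulting quadratic gauge terms against the global gauge correction in $\ol{E}$, and the same cross-comparison argument for independence of $\eta$.

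Part (iii), however, has a genuine gap as written. You propose to bound $\int_{\SO(3)} e_{\ell'\tetra,\wt{\eta}_{\delta'}}(\rho_0,\vR\vnu_0)\,\dd\vR$ from below by applying (ii) at scale $\ell'$ and then ``dropping the $t$-average at the cost of an $O(\alpha\rho_0^{4/3})$ Lieb--Oxford error'' via \cref{Escale,Escalelo}. This does not go through: the scaling $\delta_\lambda$ in \cref{Escale,Escalelo} rescales the \emph{density and current simultaneously with the domain}. Concretely, taking $\lambda = t^{-3}$ sends the pair $(\iden_{t\ell'\tetra}*\wt{\eta}_{t\delta'})(\rho_0,\tfrac12\rho_0\vnu_0\times\cdot)$ to $(\iden_{\ell'\tetra}*\wt{\eta}_{\delta'})(t^{-3}\rho_0,\tfrac12 t^{-3}\rho_0(t^{-2}\vnu_0)\times\cdot)$ (using \cref{cutoffscale}), so the scaling relations compare $e_{t\ell'\tetra,\wt{\eta}_{t\delta'}}(\rho_0,\vnu_0)$ with $e_{\ell'\tetra,\wt{\eta}_{\delta'}}(t^{-3}\rho_0,t^{-2}\vnu_0)$, not with $e_{\ell'\tetra,\wt{\eta}_{\delta'}}(\rho_0,\vnu_0)$. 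Removing the $t$-average would therefore require continuity of $e_{\ell'\tetra,\wt{\eta}_{\delta'}}$ in $(\rho_0,\nu_0)$ with quantitative control, which is not among the tools available (and is in fact downstream of what the theorem is trying to prove). The paper sidesteps this entirely: it never removes the $t$-average from (ii). Instead, it takes the intermediate estimate \cref{etetralowinter} produced in the proof of (i), substitutes $\ell\to t\ell$, $\delta\to t\delta$ on the \emph{small} scale only (the left-hand side $e_{\ell'\tetra,\eta_{\delta'}}$ is unaffected), applies $\dashint_{1-\alpha}^{1+\alpha}\tfrac{\dd t}{t^4}$ to the whole inequality, and then the right-hand side is precisely the $t$-averaged quantity to which (ii) applies. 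Your strategy is repairable along exactly these lines --- average your second decoupling inequality over $t$ \emph{before} invoking (ii) --- but as stated the step is unjustified.

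A smaller point in (i): after rewriting the interior contribution as $e_{\ell\tetra,\wt{\eta}_\delta}-\tfrac{C\delta}{\ell}e^{\xc}_{\ell\tetra,\wt{\eta}_\delta}$, you say to ``bound the xc-correction by Lieb--Oxford.'' But you need a \emph{lower} bound on $-\tfrac{C\delta}{\ell}e^{\xc}$, hence an \emph{upper} bound on $e^{\xc}$; Lieb--Oxford runs the other way ($e^{\xc}\ge -c_\LO\rho_0^{4/3}$). The upper bound $e^{\xc}\le 0$ (from quasi-free trial states, the remark after \cref{ethm}) does the job, and the paper instead keeps the factor $(1-C\delta/\ell)$ through \cref{etetralowinter} and removes it afterwards via \cref{ethm}. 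Either route is fine once the inequality direction is fixed.
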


%%%%%%%%%%%%%%%%%%%%%%%%%%%%%%%%%%%%%%%%%%%%%%%%%%%%%%%%%%%%%%%%%%%%%%%%%%%%%%%%%%%%%%%%%%%%%%%%%%%

\begin{proof}
\noindent\textbf{Proof of (i).} The upper bound in (i) is proved by
applying the lower bound of \cref{lboundthm} with the choice
\begin{equation}\label{rhojpchoice}
(\rho,\vjp):=(\iden_{\ell'\tetra}*\eta_{\delta'})(\rho_0,\tfrac{1}{2}\rho_0\vnu_0\times\cdot),
\end{equation}
where $\ell'\gg\ell$. Recalling~\eqref{etetradef}, we get
\begin{equation}\label{etetralow}
\begin{aligned}
&e_{\ell'\tetra,\eta_{\delta'}}(\rho_0,\vnu_0)\ge \frac{1}{|\ell'\tetra|} \sum_{\vz\in\ZZ^3} \sum_{j=1}^{24} \int_{\SO(3)}\dd\vR \dashint_{C_{\ell}}\dd\vtau  \times\\
\times \Bigg[ &(1-C\delta/\ell) E\bigl((\iden_{\ell\tetra_j}*\wt{\eta}_\delta)(\vR\cdot - \ell\vz - \vtau)(\iden_{\ell'\tetra}*\eta_{\delta'})(\rho_0,\tfrac{1}{2}\rho_0\vnu_0\times\cdot) \bigr)\\
&+\frac{C\delta}{\ell} T\bigl((\iden_{\ell\tetra_j}*\wt{\eta}_\delta)(\vR\cdot - \ell\vz - \vtau)(\iden_{\ell'\tetra}*\eta_{\delta'})(\rho_0,\tfrac{1}{2}\rho_0\vnu_0\times\cdot) \bigr)\Bigg] \\
&-\frac{1}{|\ell'\tetra|}\frac{\rho_0}{4} \int_{\RR^3} (\iden_{\ell'\tetra} * \eta_{\delta'})(\vx)  |\vnu_0\times\vx|^2\,\dd\vx  \\
& -\frac{C}{\ell|\ell'\tetra|} \int_{\RR^3} \Big( (1+\delta^{-1})(\iden_{\ell'\tetra}*\eta_{\delta'}) \rho_0 + \delta^3(\iden_{\ell'\tetra}*\eta_{\delta'})^2\rho_0^2 \Big)\\
&=:\mathrm{(I)} + \mathrm{(II)} + \mathrm{(III)}
\end{aligned}
\end{equation}
Here, (III) can be bounded simply by $-\frac{C\rho_0}{\ell}(1+\delta^{-1}+\delta^3\rho_0)$, using the trivial estimate:
\begin{equation}
\int_{\RR^3} (\iden_{\ell'\tetra}*\eta_{\delta'})^\alpha\le \int_{\RR^3} (\iden_{\ell'\tetra}*\eta_{\delta'}) = |\ell'\tetra|,
\end{equation}
valid for any $\alpha\ge 1$ due to the fact that $\iden_{\ell'\tetra}*\eta_{\delta'}\le 1$. 

Considering (I), we can proceed as follows. Define the index set
$$
J:=\left\{ \begin{aligned}(\vz,j)\in \ZZ^3\times\{1,\ldots,24\} : &\supp( (\iden_{\ell\tetra_j}*\wt{\eta}_\delta)(\vR\cdot - \ell\vz - \vtau)) \cap \supp(\iden_{\ell'\tetra}*\eta_{\delta'})\neq \emptyset,\\
&\text{for some}\;\vR\in\SO(3)\;\text{and}\; \vtau\in C_\ell \end{aligned} \right\},
$$
which collects all the small tetrahedra $\vR^\top(\ell\tetra_j + \ell\vz + \vtau)$ that possibly intersect with the big smeared tetrahedron $\supp(\iden_{\ell'\tetra}*\eta_{\delta'})$. Furthermore, define
$$
J_0:=\left\{ \begin{aligned}(\vz,j)\in \ZZ^3\times\{1,\ldots,24\} : &\supp( (\iden_{\ell\tetra_j}*\wt{\eta}_\delta)(\vR\cdot - \ell\vz - \vtau)) \subset (\ell'-\delta')\tetra,\\
&\text{for some}\;\vR\in\SO(3)\;\text{and}\; \vtau\in C_\ell \end{aligned} \right\},
$$
which contains small smeared tetrahedra which are well inside $\ell'\tetra$. See \cref{tilingdecomp}. 

Then, we may decompose the double sum in (I) into a sum over $J_0$ and one over $J\setminus J_0$. The latter part is treated around \cref{bdtetraboundst}, and the former part reads
\begin{multline}\label{etetraz}
\frac{1}{|\ell'\tetra|}  \sum_{(\vz,j)\in J_0} \int_{\SO(3)}\dd\vR \dashint_{C_{\ell}}\dd\vtau \Bigg[ (1-C\delta/\ell) E\bigl((\iden_{\ell\tetra_j}*\wt{\eta}_\delta)(\vR\cdot - \ell\vz - \vtau)(\rho_0,\tfrac{1}{2}\rho_0\vnu_0\times\cdot) \bigr)\\
+\frac{C\delta}{\ell} T\bigl((\iden_{\ell\tetra_j}*\wt{\eta}_\delta)(\vR\cdot - \ell\vz - \vtau)(\rho_0,\tfrac{1}{2}\rho_0\vnu_0\times\cdot) \bigr)\Bigg] .
\end{multline}
Note first that $\iden_{\ell\tetra_j}*\eta_\delta=(\iden_{\ell\tetra}*\eta_\delta)\circ \vT_{j,\ell}^{-1}$, where $\vT_{j,\ell}(\vx)=\vR_j\vx+\ell\vz_j$ is the isometry that puts $\vT_{j,\ell}(\ell\tetra)=\ell\tetra_j$. 
Setting 
$$
\vT(\vx)=\vT_{j,\ell}^{-1}\circ(\vR\vx - \ell\vz - \vtau)=\vR_j^{\top}\vR\vx - \vR_j^{\top}(\ell\vz+\vtau+\ell\vz_j).
$$
Using \cref{etranslemma} (i), we have
\begin{multline}
E\big(((\iden_{\ell\tetra}*\wt{\eta}_\delta) \circ \vT) (\rho_0,\tfrac{1}{2}\rho_0\vnu_0\times\cdot)\big)=E\big( (\iden_{\ell\tetra}*\wt{\eta}_\delta)(\rho_0,\tfrac{1}{2}\rho_0\vR_j^{\top}\vR\vnu_0\times\cdot)\big)\\
 + |\ell\tetra|\frac{\rho_0}{4}|\vR\vnu_0\times (\ell\vz+\vtau+\ell\vz_j)|^2 .
\end{multline}
Further, the same relation holds with $E$ replaced by $T$.

In other words, we expressed the energy in any (smeared) tetrahedron well inside $\ell'\tetra$ in terms of the energy in the (smeared) reference tetrahedron $\ell\tetra$, plus a \emph{gauge term}, which is position-dependent and grows faster than the volume. More concretely, \cref{etetraz} can be written as
\begin{multline}\label{welletetra}
\frac{M_\ell}{|\ell'\tetra||\ell\tetra|}\int_{\SO(3)}\dd\vR \Bigg[ (1-C\delta/\ell) E\bigl((\iden_{\ell\tetra}*\wt{\eta}_\delta)(\rho_0,\tfrac{1}{2}\rho_0\vR\vnu_0\times\cdot) \bigr)+\frac{C\delta}{\ell} T\bigl((\iden_{\ell\tetra}*\wt{\eta}_\delta)(\rho_0,\tfrac{1}{2}\rho_0\vR\vnu_0\times\cdot) \bigr)\Bigg]\\
+ \frac{1}{|\ell'\tetra|}\int_{\SO(3)}\dd\vR \dashint_{C_{\ell}}\dd\vtau \sum_{(\vz,j)\in J_0} |\ell\tetra|\frac{\rho_0}{4}|\vR\vnu_0\times (\ell\vz+\vtau+\ell\vz_j)|^2,
\end{multline}
where 
$$
M_\ell=|\tetra| |J_0|.
$$
The term involving $\frac{C\delta}{\ell}T(\ldots)$ can be bounded using \cref{kinweizgauge} (discarding all but the gauge term) as
\begin{equation}\label{etetralt}
\frac{M_\ell C\delta/\ell}{|\ell'\tetra||\ell\tetra|} \frac{\rho_0}{4}\int_{\SO(3)}\int_{\RR^3} (\iden_{\ell\tetra} * \wt{\eta}_{\delta})(\vx)  |\vR\vnu_0\times\vx|^2\,\dd\vx\dd\vR
\end{equation}

The gauge correction term (II) in \cref{etetralow} can be rewritten using the partition of unity 
$$
\int_{\SO(3)}\dd\vR\dashint_{C_{\ell}}\dd\vtau \sum_{\vz\in\ZZ^3} \sum_{j=1}^{24}  (\iden_{\ell\tetra_j}*\wt{\eta}_\delta)(\vR\cdot - \ell\vz-\vtau) \equiv 1,
$$
see \cref{pouind}, as
\begin{multline}\label{gaugepou}
\mathrm{(II)}=-\frac{1}{|\ell'\tetra|}\frac{\rho_0}{4}\int_{\SO(3)}\dd\vR\dashint_{C_{\ell}}\dd\vtau \sum_{(\vz,j)\in J} \int_{\RR^3} (\iden_{\ell\tetra_j}*\wt{\eta}_\delta)(\vR\vx - \ell\vz-\vtau) (\iden_{\ell'\tetra} * \eta_{\delta'})(\vx) |\vnu_0\times\vx|^2\,\dd\vx.
\end{multline}
Again, this sum may be split into two parts: the $J_0$ part can be written as
\begin{align*}
&-\frac{1}{|\ell'\tetra|}\frac{\rho_0}{4}\int_{\SO(3)}\dd\vR\dashint_{C_{\ell}}\dd\vtau \sum_{(\vz,j)\in J_0}\int_{\RR^3} (\iden_{\ell\tetra_j}*\wt{\eta}_\delta)(\vR\vx - \ell\vz-\vtau) |\vnu_0\times\vx|^2\,\dd\vx   \\
&= -\frac{1}{|\ell'\tetra|}\frac{\rho_0}{4}\int_{\SO(3)}\dd\vR\dashint_{C_{\ell}}\dd\vtau \sum_{(\vz,j)\in J_0} \int_{\RR^3} (\iden_{\ell\tetra}*\wt{\eta}_\delta)(\vx) |\vnu_0\times\vR^{\top}(\vR_j\vx + \ell\vz + \ell\vz_j + \vtau)|^2\,\dd\vx \\
&=-\frac{M_\ell}{|\ell'\tetra||\ell\tetra|}\frac{\rho_0}{4} \int_{\SO(3)} \int_{\RR^3} (\iden_{\ell\tetra}*\wt{\eta}_\delta)(\vx) |\vR\vnu_0\times \vx|^2 \,\dd\vx \dd\vR\\
&- \frac{1}{|\ell'\tetra|}\int_{\SO(3)}\dd\vR\dashint_{C_{\ell}}\dd\vtau \sum_{(\vz,j)\in J_0} |\ell\tetra|\frac{\rho_0}{4}  |\vR\vnu_0\times (\ell\vz + \ell\vz_j + \vtau)|^2,
\end{align*}
where in the last step we used once more that the smeared tetrahedron has barycenter 0, due to $\int_{\RR^3} \vx\eta(\vx)\,\dd\vx=0$.
Note that the last term cancels exactly with the second term of \cref{welletetra}. In total, the $J_0$ part of (I)+(II)  may be bounded by (using also \cref{etetralt})
$$
\frac{(1-C\delta/\ell) M_\ell}{|\ell'\tetra|}\int_{\SO(3)}  e_{\ell\tetra,\wt{\eta}_\delta}(\rho_0,\vR\vnu_0) \,\dd\vR.
$$

Next, we need to bound the part $J\setminus J_0$ of (I) corresponding to tetrahedra at the boundary. We use the Lieb--Oxford inequality and \cref{kinweizgauge}
(again dropping all but the gauge term) to get
\begin{equation}\label{bdtetraboundst}
\begin{aligned}
&\frac{1}{|\ell'\tetra|} \int_{\SO(3)}\dd\vR \dashint_{C_{\ell}}\dd\vtau \sum_{(\vz,j)\in J\setminus J_0} \times\\
	&\times \Bigg[ -(1-C\delta/\ell) c_\LO \rho_0^{4/3} \int_{\RR^3} (\iden_{\ell\tetra_j}*\wt{\eta}_\delta)^{4/3} (\vR\cdot - \ell\vz - \vtau) (\iden_{\ell'\tetra}*\eta_{\delta'})^{4/3} \\
	&+\frac{\rho_0}{4} \int_{\RR^3} (\iden_{\ell\tetra_j}*\wt{\eta}_\delta)(\vR\vx - \ell\vz - \vtau)(\iden_{\ell'\tetra}*\eta_{\delta'})(\vx) |\vnu_0\times\vx|^2\,\dd\vx \Bigg].
\end{aligned}
\end{equation}
Again, we see that the second term cancels exactly with the remaining, $J\setminus J_0$ part of the gauge correction term (II), see \cref{gaugepou}.
 The first term may be bounded by
$$
-\frac{(1-C\delta/\ell)M_\ell'}{|\ell'\tetra|} c_\LO \rho_0^{4/3}, \quad M_\ell'=|\tetra| |J\setminus J_0|.
$$
Let us bound the quantities $M_\ell$ and $M_\ell'$. 
If $\int_{\SO(3)} e_{\ell\tetra,\wt{\eta}_\delta}(\rho_0,\vR\vnu_0)\,\dd\vR\le 0$, then we can use the trivial bound $M_\ell/|\ell'\tetra|\le 1$. 
The boundary tetrahedra indexed by $J\setminus J_0$ are at a distance $\OC(\ell+\delta+\delta')$ from $\partial(\ell'\tetra)$, and hence they fill a volume of $M_\ell'=\OC(\ell'^2(\ell+\delta+\delta'))$. But then $M_\ell\ge |\ell'\tetra| - C\ell'^2(\ell+\delta+\delta')$, so we arrive at the bound
\begin{multline}\label{etetralowinter}
e_{\ell'\tetra,\eta_{\delta'}}(\rho_0,\vnu_0)\ge \left(1-C\sigma\frac{\delta}{\ell}-C\sigma\frac{\ell+\delta+\delta'}{\ell'}\right) \int_{\SO(3)} e_{\ell\tetra,\wt{\eta}_\delta}(\rho_0,\vR\vnu_0)\,\dd\vR\\
 -\frac{\ell+\delta+\delta'}{\ell'} \rho_0^{4/3}-\frac{C\rho_0}{\ell}(1+\delta^{-1}+\delta^3\rho_0),
\end{multline}
where we have set 
\begin{equation}\label{sigmadef}
\sigma=\begin{cases}
0 & \text{if} \int_{\SO(3)} e_{\ell\tetra,\wt{\eta}_\delta}(\rho_0,\vR\vnu_0)\,\dd\vR\le 0\\
1 & \text{otherwise}
\end{cases}
\end{equation}

 Next, we use \cref{ethm} to bound part of the l.h.s. of the preceding inequality. To this end,
note that $\vv(\vx)=\iden_{\supp(\iden_{\ell'\tetra} * \eta_{\delta'})}(\vx) \frac{1}{2} \vnu_0\times\vx$ is clearly divergence-free and has 
$\int_{\RR^3} \rho|\vD\vv|=|\ell'\tetra| \frac{\sqrt{2}}{2}\rho_0|\vnu_0|$. We have
\begin{align*}
&e_{\ell'\tetra,\eta_{\delta'}}(\rho_0,\vnu_0)= (1-C\sigma \delta/\ell) e_{\ell'\tetra,\eta_{\delta'}}(\rho_0,\vnu_0) + \frac{C\sigma \delta}{\ell} e_{\ell'\tetra,\eta_{\delta'}}(\rho_0,\vnu_0) \\
&= (1-C\sigma \delta/\ell)  e_{\ell'\tetra,\eta_{\delta'}}(\rho_0,\vnu_0)\\
& + \frac{1}{|\ell'\tetra|}\frac{C\sigma \delta}{\ell}\Bigg( C \rho_0^{5/3} \int_{\RR^3} (\iden_{\ell'\tetra} * \eta_{\delta'})^{5/3} + C\rho_0 \int_{\RR^3}|\grad\sqrt{\iden_{\ell'\tetra} * \eta_{\delta'}}|^2+ C|\ell'\tetra| \rho_0\nu_0\Bigg)\\
&\le (1-C\sigma \delta/\ell) e_{\ell'\tetra,\eta_{\delta'}}(\rho_0,\vnu_0) + \frac{C\sigma \delta}{\ell} \left( \rho_0^{5/3} + \frac{\rho_0}{\ell'\delta'} + \rho_0\nu_0 \right),
\end{align*}
where we also used the fact that $\frac{1}{\ell'^3} \int_{\RR^3} |\grad\sqrt{\iden_{\ell'\tetra} * \eta_{\delta'}}|^2 =\OC\left( \frac{1}{\ell'\delta'} \right)$.
We obtain
\begin{align*}
&\left(1-C\sigma\frac{\delta}{\ell}-C\sigma\frac{\ell+\delta+\delta'}{\ell'}\right) \int_{\SO(3)} e_{\ell\tetra,\wt{\eta}_\delta}(\rho_0,\vR\vnu_0)\,\dd\vR \le\left(1-C\sigma\frac{\delta}{\ell} \right) e_{\ell'\tetra,\eta_{\delta'}}(\rho_0,\vnu_0) \\
&+ \frac{C\rho_0}{\ell}\left( 1+\delta^{-1}+\delta^3\rho_0 + \delta \rho_0^{2/3} + \frac{\delta}{\ell'\delta'} + \delta \nu_0 \right) + \frac{\ell+\delta+\delta'}{\ell'} \rho_0^{4/3}.
\end{align*}
After taking the liminf as $\ell'\to\infty$ and dividing by $(1-C\sigma\delta/\ell)$, we arrive at
$$
\int_{\SO(3)} e_{\ell\tetra,\wt{\eta}_\delta}(\rho_0,\vR\vnu_0)\,\dd\vR \le \ul{e}_\UEG(\rho_0,\nu_0) + \frac{C\rho_0}{\ell}\Big( 1+\delta^{-1}+\delta^3\rho_0 + \delta \rho_0^{2/3} + \delta\nu_0\Big)
$$
as desired.\\

%%%%%%%%%%%%%%%%%%%%%%%%%%%%%%%%%%%%%%%%%%%%%%%%%%%%%%%%%%%%%%%%%%%%%%%%%%%%%%%%%%%%%%%%%%%%%%%%%%%
%% Part (ii)
%%%%%%%%%%%%%%%%%%%%%%%%%%%%%%%%%%%%%%%%%%%%%%%%%%%%%%%%%%%%%%%%%%%%%%%%%%%%%%%%%%%%%%%%%%%%%%%%%%%

\noindent\textbf{Proof of (ii).} The proof of the lower bound is similar. We use the upper bound of \cref{eneup} with the same choice \cref{rhojpchoice}. We have for any $0<\wt{\delta}<\frac{\ell}{2}$ and $0<\alpha\le\frac{1}{2}$, 
\begin{equation}\label{etetraup}
\begin{aligned}
&e_{\ell'\tetra,\eta_{\delta'}}(\rho_0,\vnu_0)\le\frac{1}{|\ell'\tetra|}\dashint_{1-\alpha}^{1+\alpha}\frac{\dd t}{t^4} \int_{\SO(3)}\dd\vR \dashint_{C_{t\ell}}\dd\vtau
\sum_{\vz\in\ZZ^3} \sum_{j=1}^{24}\times\\
	&\times E\bigl(\beta^{-3} (\iden_{t\ell \vT_j \beta\tetra} * \wt{\eta}_{t\wt{\delta}})(\vR\cdot - t\ell\vz - \vtau)(\iden_{\ell'\tetra}*\eta_{\delta'})(\rho_0,\tfrac{1}{2}\rho_0\vnu_0\times\cdot)\bigr)\\
	&-\frac{1}{|\ell'\tetra|}\frac{\rho_0}{4} \int_{\RR^3} (\iden_{\ell'\tetra} * \eta_{\delta'})(\vx)  |\vnu_0\times\vx|^2\,\dd\vx  \\
	&  + C\delta^2\log(\alpha^{-1}) \rho_0^2
\end{aligned}
\end{equation}
where we used the notation $\beta=1-\wt{\delta}/\ell$. The smearing parameter $\tilde{\delta}$ is chosen so that $\wt{\delta}=\beta\delta$, i.e. 
$\wt{\delta}=\delta/(1+\delta/\ell)$.

Define $J$ and $J_0$ analogously. First, we consider the summation over $J_0$ in the first term.
Using $t\ell \vT_j(1-\wt{\delta}/\ell)\tetra=t\ell\beta\vR_j\tetra + t\ell\vz_j$, we may write
\begin{align*}
(\iden_{t\ell \vT_j\beta\tetra} * \wt{\eta}_{t\wt{\delta}})(\vR\vx - t\ell\vz - \vtau)&=(\iden_{t\ell\beta\tetra} * \wt{\eta}_{t\wt{\delta}})(\vT(\vx))=(\iden_{t\ell\tetra} * \wt{\eta}_{t\delta})(\beta^{-1}\vT(\vx)),
\end{align*}
where $\vT$ is defined as above except with $t\ell$ in place of $\ell$. Here, we used the scaling relation
\begin{equation}\label{cutoffscale}
(\iden_{\ell/a\tetra}*\wt{\eta}_{\delta/a})(\vx/a)=(\iden_{\ell\tetra}*\wt{\eta}_\delta)(\vx)
\end{equation}
valid for all $a>0$.
We have
\begin{equation}\label{etetraup0}
\begin{aligned}
&\frac{1}{|\ell'\tetra|}\dashint_{1-\alpha}^{1+\alpha}\frac{\dd t}{t^4} \int_{\SO(3)}\dd\vR \dashint_{C_{t\ell}}\dd\vtau\sum_{(\vz,j)\in J_0}
 E\bigl(\vx\mapsto \beta^{-3} (\iden_{t\ell\tetra} * \wt{\eta}_{t\delta})(\beta^{-1}\vT(\vx)) (\rho_0,\tfrac{1}{2}\rho_0\vnu_0\times\vx)\bigr)\\
=&\frac{1}{|\ell'\tetra|}\dashint_{1-\alpha}^{1+\alpha}\frac{\dd t}{t^4} \int_{\SO(3)}\dd\vR  \dashint_{C_{t\ell}}\dd\vtau\sum_{(\vz,j)\in J_0} \times\\
	&\times\Bigg[ E\bigl(\vx\mapsto \beta^{-3} (\iden_{t\ell\tetra} * \wt{\eta}_{t\delta})(\beta^{-1}\vx) (\rho_0,\tfrac{1}{2}\rho_0\vR_j^{-1}\vR\vnu_0\times\vx)\bigr)\\
	&+|t\ell\tetra| \frac{\rho_0}{4} |\vR\vnu_0\times (t\ell\vz + \vtau + t\ell\vz_j)|^2 \Bigg]\\
\le &\int_{\SO(3)}\dd\vR  \dashint_{1-\alpha}^{1+\alpha}\frac{\dd t}{t^4} \frac{M_{t\ell}}{|\ell'\tetra||t\ell\tetra|} \beta^{-1} E\bigl(\vx\mapsto (\iden_{t\ell\tetra} * \wt{\eta}_{t\delta})(\vx) (\rho_0,\tfrac{1}{2}\rho_0\beta\vR\vnu_0\times\vx)\bigr)\\
	&+\frac{1}{|\ell'\tetra|}\int_{\SO(3)}\dd\vR  \dashint_{1-\alpha}^{1+\alpha}\frac{\dd t}{t^4} \int_{C_{t\ell}}\dd\vtau\sum_{(\vz,j)\in J_0} |t\ell\tetra| \frac{\rho_0}{4} 
	|\vR\vnu_0\times (t\ell\vz + \vtau + t\ell\vz_j)|^2 
\end{aligned}
\end{equation}
where in the first step we used \cref{etranslemma}, and \cref{Escale} in the second (with $\lambda=\beta^{-3}$ and dropping the gauge terms).

Furthermore, note that the function $\vnu_0\mapsto E((\iden_{t\ell\tetra}*\wt{\eta}_{t\delta})(\rho_0,\tfrac{1}{2}\rho_0\vR\vnu_0\times\cdot))$ is convex for all fixed $\rho_0$
and $\vR$, using the convexity of $F$.
Hence, we find
\begin{align*}
\beta^{-1}E((\iden_{t\ell\tetra}*\wt{\eta}_{t\delta})(\rho_0,\tfrac{1}{2}\rho_0\beta\vR\vnu_0\times\cdot))\le 
 &E((\iden_{t\ell\tetra}*\wt{\eta}_{t\delta})(\rho_0,\tfrac{1}{2}\rho_0\vR\vnu_0\times\cdot)) \\
 &+ \frac{1-\beta}{\beta} E((\iden_{t\ell\tetra}*\wt{\eta}_{t\delta})\rho_0),
\end{align*}
where used $0<\beta<\frac{1}{2}$.
Here, we used that $E(\rho,0)=E(\rho)$, see \cref{corrthm}.

The gauge correction term in \cref{etetraup} may be written similarly to \cref{gaugepou}. The $J_0$ part reads
\begin{align*}
&-\frac{\rho_0}{4} \dashint_{1-\alpha}^{1+\alpha}\frac{\dd t}{t^4}\frac{M_{t\ell}}{|\ell'\tetra||t\ell\tetra|} \int_{\RR^3} (\iden_{t\ell\tetra}*\wt{\eta}_{t\delta})(\vx) |\vnu_0\times \vx|^2 \,\dd\vx\\
&- \frac{1}{|\ell'\tetra|}\dashint_{1-\alpha}^{1+\alpha}\frac{\dd t}{t^4} \int_{\SO(3)}\dd\vR\dashint_{C_{t\ell}}\dd\vtau \sum_{(\vz,j)\in J_0}  |t\ell\tetra|\frac{\rho_0}{4}  |\vR\vnu_0\times (t\ell\vz + t\ell\vz_j + \vtau)|^2.
\end{align*}
We see that the second term here cancels with the last term of \cref{etetraup0}.

Next, we consider the $J\setminus J_0$ part. We use \cref{ethm} to bound the first term of \cref{etetraup}, 
\begin{align*}
&\frac{1}{|\ell'\tetra|}\dashint_{1-\alpha}^{1+\alpha}\frac{\dd t}{t^4} \int_{\SO(3)}\dd\vR  \dashint_{C_{t\ell}}\dd\vtau
\sum_{(\vz,j)\in J\setminus J_0} \times\\
	&\times E\bigl(\beta^{-3} (\iden_{t\ell \vT_j \beta\tetra} * \wt{\eta}_{t\wt{\delta}})(\vR\cdot - t\ell\vz - \vtau)(\iden_{\ell'\tetra}*\eta_{\delta'})(\rho_0,\tfrac{1}{2}\rho_0\vnu_0\times\cdot)\bigr)\\
&\le\frac{1}{|\ell'\tetra|}\dashint_{1-\alpha}^{1+\alpha}\frac{\dd t}{t^4} \int_{\SO(3)}\dd\vR \dashint_{C_{t\ell}}\dd\vtau
\sum_{(\vz,j)\in J\setminus J_0} \times\\
	&\times \Bigg[ C\rho_0^{5/3} \beta^{-5}\int_{\RR^3} (\iden_{t\ell \vT_j \beta\tetra} * \wt{\eta}_{t\wt{\delta}})(\vR\cdot - t\ell\vz - \vtau)\\
	&+ C\rho_0 \beta^{-3} \int_{\RR^3} \left|\grad \sqrt{(\iden_{t\ell \vT_j \beta\tetra} * \wt{\eta}_{t\wt{\delta}})(\vR\cdot - t\ell\vz - \vtau)(\iden_{\ell'\tetra}*\eta_{\delta'})}\right|^2\\
	&+ \beta^{-3}\frac{\rho_0}{4} \int_{\RR^3} (\iden_{t\ell \vT_j \beta\tetra} * \wt{\eta}_{t\wt{\delta}})(\vR\vx - t\ell\vz - \vtau)(\iden_{\ell'\tetra}*\eta_{\delta'})(\vx) |\vnu_0\times\vx|^2 \,\dd\vx\\
	&+ C\beta^{-3}\rho_0\nu_0\int_{\RR^3}(\iden_{t\ell \vT_j \beta\tetra} * \wt{\eta}_{t\wt{\delta}})(\vR\cdot - t\ell\vz - \vtau) \Bigg]
\end{align*}
For the $J\setminus J_0$ part of the gauge correction term, we use the partition of unity with holes to cancel the gauge term in the previous estimate.
In total, we have obtained
\begin{equation}\label{etetraavgup}
\begin{aligned}
e_{\ell'\tetra,\eta_{\delta'}}(\rho_0,\vnu_0)&\le\left(1+C\sigma\frac{\ell+\delta+\delta'}{\ell'}\right)\left(\dashint_{1-\alpha}^{1+\alpha}\int_{\SO(3)}e_{t\ell\tetra,\wt{\eta}_{t\delta}}(\rho_0,\vR\vnu_0)\,\dd\vR \frac{\dd t}{t^4} + 
\frac{\delta}{\ell} \dashint_{1-\alpha}^{1+\alpha}e_{t\ell\tetra,\wt{\eta}_{t\delta}}(\rho_0,0)\,\frac{\dd t}{t^4}\right)\\
& +C\frac{\ell+\delta+\delta'}{\ell'}\rho_0\left(\rho_0^{2/3}+(\ell\delta)^{-1} + \nu_0\right) + \frac{C\rho_0}{\delta'\ell'} + C\delta^2\log(\alpha^{-1}) \rho_0^2\\
&\le \left(1+C\sigma\frac{\ell+\delta+\delta'}{\ell'}\right)\left( \dashint_{1-\alpha}^{1+\alpha} \int_{\SO(3)} e_{t\ell\tetra,\wt{\eta}_{t\delta}}(\rho_0,\vR\vnu_0)\,\dd\vR \frac{\dd t}{t^4}  + 
\frac{C\rho_0}{\ell}\big(\delta\rho_0^{2/3} + \ell^{-1}\big)\right)\\
& +C\frac{\ell+\delta+\delta'}{\ell'}\rho_0\left(\rho_0^{2/3}+(\ell\delta)^{-1} + \nu_0\right) + \frac{C\rho_0}{\delta'\ell'} + C\delta^2\log(\alpha^{-1}) \rho_0^2.
\end{aligned}
\end{equation}
The proof of (ii) is finished after taking the limsup as $\ell'\delta'\to\infty$, $\delta'/\ell'\to 0$.\\

%%%%%%%%%%%%%%%%%%%%%%%%%%%%%%%%%%%%%%%%%%%%%%%%%%%%%%%%%%%%%%%%%%%%%%%%%%%%%%%%%%%%%%%%%%%%%%%%%%%
%% Part (iii)
%%%%%%%%%%%%%%%%%%%%%%%%%%%%%%%%%%%%%%%%%%%%%%%%%%%%%%%%%%%%%%%%%%%%%%%%%%%%%%%%%%%%%%%%%%%%%%%%%%%

\noindent\textbf{Proof of (iii).} Finally, to show (iii), we substitute $\ell\to t\ell$ and $\delta\to t\delta$ in \cref{etetralowinter}, apply $\dashint_{1-\alpha}^{1+\alpha}\frac{\dd t}{t^4}$ to the inequality
and use (ii), with the roles of $\eta$ and $\wt{\eta}$ interchanged,
\begin{align*}
e_{\ell'\tetra,\wt{\eta}_{\delta'}}(\rho_0,\vnu_0)&\ge \left(1-C\sigma\frac{\delta}{\ell}-C\sigma\frac{\ell+\delta+\delta'}{\ell'}\right) \dashint_{1-\alpha}^{1+\alpha} \int_{\SO(3)} e_{t\ell\tetra,\eta_{t\delta}}(\rho_0,\vR\vnu_0)\,\dd\vR\frac{\dd t}{t^4} \\
& -C\frac{\ell+\delta+\delta'}{\ell'} \rho_0^{4/3}-\frac{C\rho_0}{\ell}(1+\delta^{-1}+\delta^3\rho_0)\\
&\ge \ol{e}_\UEG(\rho_0,\nu_0) -C\frac{\ell+\delta+\delta'}{\ell'} \big(\rho_0^{5/3}+ \rho_0^{4/3} + \rho_0\nu_0\big)\\
&-\frac{C\rho_0}{\ell}\big(1+\delta^{-1}+\delta^3\rho_0 + \delta\rho_0^{2/3}  + \delta\rho_0\nu_0 \big) - C\delta^2\rho_0^2.
\end{align*}
Here, we have modified $\sigma$ in suitable way (cf. \cref{sigmadef}).
The choice $\delta=\ell^{-1/3}\rho_0^{-4/9}$ leads to
\begin{align*}
e_{\ell'\tetra,\wt{\eta}_{\delta'}}(\rho_0,\vnu_0)&\ge \ol{e}_\UEG(\rho_0,\nu_0) -C\frac{\ell+\ell^{-1/3}\rho_0^{-4/9}+\delta'}{\ell'} \big(\rho_0^{5/3}+ \rho_0^{4/3} + \rho_0\nu_0\big)\\
&-C\frac{\rho_0^{13/9} + \rho_0^{10/9}}{\ell^{2/3}} - C\frac{\rho_0^{11/9} + \nu_0\rho_0^{14/9}}{\ell^{4/3}} - \frac{C\rho_0^{2/3}}{\ell^2} - \frac{C\rho_0}{\ell}
\end{align*}
and setting $\ell=(\ell')^{3/5} \rho_0^{-2/15}$ results in
\begin{align*}
e_{\ell'\tetra,\wt{\eta}_{\delta'}}(\rho_0,\vnu_0)&\ge \ol{e}_\UEG(\rho_0,\nu_0) -\frac{C\delta'}{\ell'}(\rho_0^{5/3} + \rho_0^{4/3} + \rho_0\nu_0) - C\frac{\rho_0^{23/15} + \rho_0^{18/15}}{(\ell')^{2/5}},
\end{align*}
using $\ell'\rho_0^{1/3}\ge C$.

%%%%%%%%%%%%%%%%%%%%%%%%%%%%%%%%%%%%%%%%%%%%%%%%%%%%%%%%%%%%%%%%%%%%%%%%%%%%%%%%%%%%%%%%%%%%%%%%%%%
%% Part (iv)
%%%%%%%%%%%%%%%%%%%%%%%%%%%%%%%%%%%%%%%%%%%%%%%%%%%%%%%%%%%%%%%%%%%%%%%%%%%%%%%%%%%%%%%%%%%%%%%%%%%
\noindent\textbf{Proof of (iv).} To show that the limit exists, it is enough to prove that $\ol{e}_\UEG(\rho_0,\nu_0)\le\ul{e}_\UEG(\rho_0,\nu_0)$.
But this follows from (i) combined with (iii). It remains to prove that $e_\UEG(\rho_0,\nu_0)$ is independent of the choice of the regularization function.
Let $\wt{e}_\UEG(\rho_0,\nu_0)$ denote $e_\UEG(\rho_0,\nu_0)$, but with $\eta_\delta$ replaced 
with $\wt{\eta}_\delta$. Using (iii) and then (i), we find that the following chain of inequalities hold true,
\begin{align*}
e_\UEG(\rho_0,\nu_0)\le \int_{\SO(3)} e_{\ell\tetra,\wt{\eta}_{\delta}}(\rho_0,\vR\vnu_0)\,\dd\vR + \mathrm{(error)} \le e_\UEG(\rho_0,\nu_0)  + \mathrm{(error)},
\end{align*}
from which we deduce $\wt{e}_\UEG(\rho_0,\nu_0)=e_\UEG(\rho_0,\nu_0)$.
\end{proof}

By omitting the Coulomb interaction, we find the following.

\begin{proposition}
Fix $\rho_0\in\RR_+$ and $\nu_0\in\RR_+$. Let $\eta$ and $\wt{\eta}$ be regularization functions.  Let $\ul{\tau}_\UEG(\rho_0,\nu_0)$ and $\ol{\tau}_\UEG(\rho_0,\nu_0)$ denote the liminf and the limsup of $\tau_{\ell\tetra,\eta_\delta}(\rho_0,\nu_0)$
as $\ell\delta\to\infty$ and $\delta/\ell\to 0$. 
\begin{itemize}
\item[(i)] For $C\delta\le\ell$, we have
$$
\tau_{\ell\tetra,\wt{\eta}_\delta}(\rho_0,\nu_0)\le \ul{\tau}_\UEG(\rho_0,\nu_0) + \frac{C\rho_0}{\ell}\Big( \delta^{-1} + \delta \rho_0^{2/3} + \delta\nu_0\Big)
$$
\item[(ii)] We have 
$$
\tau_{\ell\tetra,\wt{\eta}_\delta}(\rho_0,\nu_0)\ge \ol{\tau}_\UEG(\rho_0,\nu_0) -
\frac{C\rho_0}{\ell}\big(\delta\rho_0^{2/3} + \ell^{-1}\big) 
$$
\item[(iii)] The thermodynamic limit
$$
\lim_{\substack{\delta/\ell\to 0\\ \ell\delta\to\infty}} \tau_{\ell\tetra,\eta_\delta}(\rho_0,\vnu_0) = \tau_\UEG(\rho_0,\nu_0)
$$
exists and is independent of the choice of the regularization function $\eta$. 
\end{itemize}
\end{proposition}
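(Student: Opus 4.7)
The plan is to repeat the argument of \cref{tetrathermo} with the Coulomb interaction removed. The required ingredients are the lower and upper kinetic decoupling bounds (the second inequality of \cref{lboundthm} and of the proposition preceding \cref{eneup}), together with \cref{tthm} to bound $T(\rho,\vjp)$ from above by the right-hand side of \cref{repineq}. No Graf--Schenker or Lieb--Oxford input is required; this is precisely why the error terms are simpler than in \cref{tetrathermo}(i)--(ii) (in particular, the $1+\delta^3\rho_0$ contribution in (i) and the $C\delta^2\log(\alpha^{-1})\rho_0^2$ term in (ii) disappear, and no $t$-averaging is needed in (ii)).

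For (i), I apply the lower decoupling bound for $T$ to $(\rho,\vjp):=(\iden_{\ell'\tetra}*\eta_{\delta'})(\rho_0,\tfrac{1}{2}\rho_0\vnu_0\times\cdot)$ with $\ell'\gg\ell$, subtract the global gauge correction $|\ell'\tetra|^{-1}\int_{\RR^3}\rho|\vjp/\rho|^2$, and divide by $|\ell'\tetra|$. As in the proof of \cref{tetrathermo}(i), the sub-tetrahedral sum is partitioned into a bulk part $J_0$ (tetrahedra fully inside $(\ell'-\delta')\tetra$) and a boundary part $J\setminus J_0$. On $J_0$, I use \cref{etranslemma}(i) together with $\SO(3)$-averaging to reduce each local kinetic energy to $\tau_{\ell\tetra,\wt{\eta}_\delta}(\rho_0,\vR\vnu_0)$ plus a position-dependent gauge remainder, which cancels exactly against the corresponding piece of the global gauge correction thanks to the barycenter property $\int\vx\eta(\vx)\,\dd\vx=0$. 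On $J\setminus J_0$ the same cancellation occurs and no Lieb--Oxford term is needed, so only a boundary-volume factor $\OO((\ell+\delta+\delta')/\ell')$ survives. I then close by bounding $\tau_{\ell'\tetra,\eta_{\delta'}}(\rho_0,\vnu_0)$ from above via \cref{tthm}, which produces the $\delta\rho_0^{2/3}$ and $\delta\nu_0$ error terms; after taking the liminf $\ell'\to\infty$ (with $\delta'/\ell'\to 0$, $\ell'\delta'\to\infty$) and invoking rotational invariance of $\tau_\UEG$ from \cref{etranslemma}(ii), (i) follows. For (ii), I apply the upper kinetic decoupling bound to the same $(\rho,\vjp)$ and repeat the $J_0$/$J\setminus J_0$ bookkeeping: $\SO(3)$-averaging together with \cref{etranslemma}(i) and the scaling relation \cref{cutoffscale} reduces the bulk contribution to $\tau_{\ell\tetra,\wt{\eta}_\delta}(\rho_0,\vR\vnu_0)$ modulo cancelling gauge terms (the leftover small piece is handled by convexity of $T$ in $\vjp$ combined with \cref{corrthm}, exactly as in \cref{tetrathermo}(ii)), while \cref{tthm} controls the boundary. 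Taking the limsup $\ell'\delta'\to\infty$, $\delta'/\ell'\to 0$ gives (ii).

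Part (iii) then follows by combining (i) and (ii) in the limits $\delta/\ell\to 0$ and $\ell\delta\to\infty$: the error terms vanish, so $\ol{\tau}_\UEG(\rho_0,\nu_0)\le\ul{\tau}_\UEG(\rho_0,\nu_0)$; independence from the choice of regularization function is obtained by sandwiching $\tau_\UEG$ and $\wt{\tau}_\UEG$ against each other using (i) for one regularization and (ii) for the other, exactly as in \cref{tetrathermo}(iv). The principal technical obstacle, as in the proof of \cref{tetrathermo}, is the careful pairwise cancellation of the position-dependent gauge contributions coming from the symmetric gauge $\tfrac{1}{2}\rho_0\vnu_0\times\vx$: these grow super-volumetrically in $|\vx|$ and must be matched term by term against the global gauge correction after localization, both in the bulk and at the boundary.
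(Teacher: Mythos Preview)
Your proposal is correct and matches the paper's approach exactly: the paper's own ``proof'' is the single sentence ``By omitting the Coulomb interaction, we find the following,'' and your sketch supplies precisely the details one would fill in when rerunning the argument of \cref{tetrathermo} for $T$ in place of $E$. You correctly identify which pieces drop out (Graf--Schenker, Lieb--Oxford, the $t$-averaging, and the corresponding error terms) and which survive (the gauge-term bookkeeping via \cref{etranslemma}, the $J_0$/$J\setminus J_0$ split, and the use of \cref{tthm} in place of \cref{ethm}).
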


%%%%%%%%%%%%%%%%%%%%%%%%%%%%%%%%%%%%%%%%%%%%%%%%%%%%%%%%%%%%%%%%%%%%%%%%%%%%%%%%%%%%%%%%%%%%%%%%%%%
\subsection{Proof of \cref{thermothm}}
%%%%%%%%%%%%%%%%%%%%%%%%%%%%%%%%%%%%%%%%%%%%%%%%%%%%%%%%%%%%%%%%%%%%%%%%%%%%%%%%%%%%%%%%%%%%%%%%%%%

The proof is very similar to the one in \cite{lewin2020local}. Let $\{\Omega_N\}$ be a sequence of domains such that $|\Omega_N|\to\infty$, $|\partial\Omega_N+B_r|\le Cr|\Omega_N|^{2/3}$ for $r\le|\Omega_N|^{1/3}/C$
and $\delta_N/|\Omega_N|^{1/3}\to 0$, $\delta_N|\Omega_N|^{1/3}\to\infty$.

We can define the index sets $J$ and $J_0$ for the domain $\Omega_N$ in an obvious manner,
$$
J:=\left\{ \begin{aligned}
(\vz,j)\in \ZZ^3\times\{1,\ldots,24\} : &\supp( (\iden_{\ell\tetra_j}*\eta_\delta)(\vR\cdot - \ell\vz - \vtau)) \cap \supp(\iden_{\Omega_N}*\eta_{\delta_N})\neq \emptyset,\\
&\text{for some}\,\vR\in\SO(3)\,\text{and}\,\vtau\in C_\ell
\end{aligned}
 \right\},
$$
which collects all the small tetrahedra $\vR^\top(\ell\tetra_j + \ell\vz + \vtau)$ that possibly intersect the smearing of $\Omega_N$. Furthermore, define
$$
J_0:=
\left\{ \begin{aligned}
(\vz,j)\in \ZZ^3\times\{1,\ldots,24\} : &\supp( (\iden_{\ell\tetra_j}*\eta_\delta)(\vR\cdot - \ell\vz - \vtau)) \subset \Omega_N\setminus(\partial\Omega_N+B_{\delta_N}),\\
&\text{for some}\,\vR\in\SO(3)\,\text{and}\,\vtau\in C_\ell
\end{aligned}
 \right\},
$$
which contains small smeared tetrahedra which are well inside $\Omega_N$.

The argument for the tetrahedron $\ell'\tetra$ works exactly the same for the general domain $\Omega_N$,
we only need to bound the quantities $M_\ell$ and $M_\ell'$ to obtain the analog of \cref{etetralowinter}.
If $e_{\Omega_N,\delta_N}(\rho_0,\vnu_0)\le 0$, then we can use the trivial bound $M_\ell/|\Omega_N|\le 1$. 
The boundary tetrahedra indexed by $J\setminus J_0$ are at a distance $r=\ell+\delta+\delta_N$ and hence, by the uniform regularity hypothesis, they fill a volume of 
$$
M_\ell'=C|\partial\Omega_N+B_r|\le C(\ell+\delta+\delta_N)|\Omega_N|^{2/3}
$$
for sufficiently large $N$, since $r\le |\Omega_N|^{1/3}/C$ by the convergence $\delta_N/|\Omega_N|^{1/3}\to 0$.
But then $M_\ell\ge |\Omega_N| - C(\ell+\delta+\delta')|\Omega_N|^{2/3}$, so we arrive at the bound
\begin{align*}
e_{\Omega_N,\delta_N}(\rho_0,\vnu_0)&\ge \left(1-C\sigma\frac{\delta}{\ell}-C\sigma\frac{\ell+\delta+\delta_N}{|\Omega_N|^{1/3}}\right)\int_{\SO(3)} e_{\ell\tetra,\delta}(\rho_0,\vR\vnu_0)\,\dd\vR\\
& -\frac{\ell+\delta+\delta_N}{|\Omega_N|^{1/3}} \rho_0^{4/3}-\frac{C\rho_0}{\ell}(1+\delta^{-1}+\delta^3\rho_0).
\end{align*}
Taking $\delta$ fixed and $\ell=|\Omega_N|^{1/6}$, the r.h.s. goes to $e_\UEG(\rho_0,\nu_0)$ as $\delta_N/|\Omega_N|^{1/3}\to 0$. 

For the upper bound, we use the analog of \cref{etetraavgup}, with $\alpha=1/2$,
\begin{align*}
e_{\Omega_N,\delta_N}(\rho_0,\vnu_0)&\le \left(1+C\sigma\frac{\ell+\delta+\delta_N}{|\Omega_N|^{1/3}}\right)\left( \dashint_{1/2}^{3/2}\int_{\SO(3)} e_{t\ell\tetra,t\delta}(\rho_0,\vR\vnu_0)\,\dd\vR\frac{\dd t}{t^4}  + 
\frac{C\rho_0}{\ell}\big(\delta\rho_0^{2/3} + \ell^{-1}\big)\right)\\
& +C\frac{\ell+\delta+\delta_N}{|\Omega_N|^{1/3}}\rho_0\left(\rho_0^{2/3}+(\ell\delta)^{-1} + \nu_0\right) + \frac{C\rho_0}{\delta_N|\Omega_N|^{1/3}} + C\delta^2 \rho_0^2.
\end{align*}
Using now $\delta_N|\Omega_N|^{1/3}\to\infty$ as well as $\delta_N/|\Omega_N|^{1/3}\to 0$, and the choice $\ell=|\Omega_N|^{1/6}$, $\delta=|\Omega_N|^{-1/12}$, we see that the r.h.s. goes to  $e_\UEG(\rho_0,\nu_0)$, and the proof is finished.

The proof for the kinetic energy per volume $\tau_{\Omega_N,\delta_N}(\rho_0,\nu_0)$ is very similar. The limit for $e^\xc_{\Omega_N,\delta_N}(\rho_0,\nu_0)$ is obtained by subtracting 
$\tau_{\Omega_N,\delta_N}(\rho_0,\nu_0)$ from $e_{\Omega_N,\delta_N}(\rho_0,\nu_0)$ and taking the limit term-by-term.

\subsection{Further proofs}

\begin{proof}[Proof of \cref{basicprop}]
\noindent\textbf{Part (i).} Following \cite{sen2018local}, it is convenient to consider the so-called \emph{intrinsic kinetic energy tensor},
\begin{equation}\label{intrkin}
\vomega_\gamma(\vx):=(\vpi_\vx\otimes\ol{\vpi_\vy}) \gamma(\vx,\vy)|_{\vy=\vx}=\vtau_\gamma(\vx) - \frac{\vzeta_\gamma(\vx)  \otimes  \ol{\vzeta_\gamma(\vx)}}{\rho_\gamma(\vx)},
\end{equation}
where $\vpi_\vx:=-i\grad_\vx - \frac{\vzeta_\gamma(\vx)}{\rho_\gamma(\vx)}$. We immediately have that $0\le \vomega_\gamma(\vx)=\vomega_\gamma(\vx)^\dag$.
 Also, $\vomega_\gamma$ is invariant under gauge transformations.
Note first that
\begin{equation}\label{omegaexpand}
\begin{aligned}
&\frac{\vomega(\vx)-\vomega(\vx)^\top}{2i}=\frac{1}{2i}(\vpi_\vx\otimes\ol{\vpi}_\vy-\ol{\vpi}_\vy\otimes\vpi_\vx)\gamma(\vx,\vy)|_{\vy=\vx}\\
&=\frac{1}{2i}\sum_{j\ge 1} \lambda_j  \bigl(\grad\phi_j(\vx)\otimes \ol{\grad\phi_j(\vx)} - \ol{\grad\phi_j(\vx)} \otimes \grad\phi_j(\vx)\bigr)+ \frac{1}{2i}\frac{\vzeta(\vx)\otimes\ol{\vzeta(\vx)}-\ol{\vzeta(\vx)}\otimes\vzeta(\vx)}{\rho(\vx)}\\
&=\vD_a\vjp(\vx) - \frac{1}{2}\frac{\grad\rho(\vx) \otimes \vjp(\vx) - \vjp(\vx)\otimes\grad\rho(\vx)}{\rho(\vx)}\\
&=\rho(\vx)\vD_a\frac{\vjp(\vx)}{\rho(\vx)}.
\end{aligned}
\end{equation}
We find
$$
\rho(\vx)\left|\vD_a\frac{\vjp(\vx)}{\rho(\vx)}\right|\le \sqrt{d} \Tr_{\RR^d} \vomega(\vx),
$$
which upon expansion of the r.h.s.,
$$
\Tr_{\RR^d} \vomega(\vx)=\tau(\vx) - \frac{|\vzeta(\vx)|^2}{\rho(\vx)}=\tau(\vx) - |\grad\sqrt{\rho}(\vx)|^2 - \frac{|\vjp(\vx)|^2}{\rho(\vx)}
$$
yields the desired bound.

\noindent\textbf{Part (ii).} Follows from the Cauchy--Bunyakovsky--Schwarz inequality.
\end{proof}

\begin{proof}[Proof of \cref{kinweizgauge}]
Let $\gamma(\vx,\vy)=\sum_{j\ge 1} \lambda_j \phi_j(\vx)\ol{\phi_j(\vy)}$,
for some $0\le \lambda_j\le 1$ with $\sum_{j\ge 1}\lambda_j=N$ and $\{\phi_j\}_{j\ge 1}\subset H^1(\RR^d)$ $L^2$-orthonormal.
Let $\{\phi_j^n\}\subset C_c^\infty(\RR^3)$ be such that $\|\phi_j^n\|=1$ and $\phi_j^n\to\phi_j$ in $H^1$ as $n\to\infty$. 
Denoting $\rho_n(\vx)=\sum_{j\ge 1} \lambda_j|\phi_j^n(\vx)|^2$, we have that $\|\sqrt{\rho}-\sqrt{\rho_n}\|\to 0$
and $\|\grad \sqrt{\rho}-\grad \sqrt{\rho_n}\|\to 0$ by an argument similar to proof of \cite[Theorem 1.3]{lieb1983density}.
It is easy to see that we also have $\|\frac{\vjp_n}{\sqrt{\rho_n}}-\frac{\vjp}{\sqrt{\rho}}\|\to 0$. Using \cref{basicprop} (i), we may write
$$
\sum_{j\ge 1} \lambda_j \|\grad\phi_j^n\|^2\ge \int_{\RR^3} |\grad\sqrt{\rho_n}|^2 + \int_{\RR^3} \frac{|\vjp_n|^2}{\rho_n} + \frac{1}{\sqrt{6}} \int_{\RR^3} \rho_n|\vnu_n|,
$$
where $\vjp_n$ and $\vnu_n$ are defined in terms of the $\phi_j^n$. Using expansion \cref{omegaexpand}, 
\begin{align*}
&\int_{\RR^3} \left| \rho_n\vnu_n - \rho\vnu\right|=\int_{\RR^3} \left| \rho_n\rot \frac{\vjp_n}{\rho_n} - \rho\rot\frac{\vjp}{\rho}\right|\\
&\le \sum_{j\ge 1} \lambda_j \int_{\RR^3} |\grad\phi_j^n\times\ol{\grad\phi_j^n}-\grad\phi_j\times\ol{\grad\phi_j}|+\int_{\RR^3} \left| \frac{\grad\rho_n}{\sqrt{\rho_n}} \times \frac{\vjp_n}{\sqrt{\rho_n}} - \frac{\grad\rho}{\sqrt{\rho}} \times \frac{\vjp}{\sqrt{\rho}}\right|\\
&\le \sum_{j\ge 1} \lambda_j  (\|\grad\phi_j^n\|+\|\grad\phi_j^n\|) \|\grad\phi_j^n-\grad\phi_j\| \\
&+ 2\sum_{j\ge 1} \lambda_j \left[  \left(\int_{\RR^3} \frac{|\vjp_n|^2}{\rho_n}\right)^{1/2} \|\grad\sqrt{\rho_n}-\grad\sqrt{\rho}\|
+ \|\grad\sqrt{\rho}\| \left\|\frac{\vjp_n}{\sqrt{\rho_n}}-\frac{\vjp}{\sqrt{\rho}}\right\|\right]\to 0,
\end{align*}
we obtain the desired bound. 
\end{proof}

\bibliographystyle{siamplain}
\bibliography{dft}

\begin{thebibliography}{10}

\bibitem{bach1994generalized}
{\sc V.~Bach, E.~H. Lieb, and J.~P. Solovej}, {\em Generalized {H}artree-{F}ock
  theory and the {H}ubbard model}, Journal of statistical physics, 76 (1994),
  pp.~3--89.

\bibitem{chan1999optimized}
{\sc G.~K.-L. Chan and N.~C. Handy}, {\em Optimized {L}ieb-{O}xford bound for
  the exchange-correlation energy}, Physical Review A, 59 (1999), p.~3075.

\bibitem{DOBSON_JPCM4_7877}
{\sc J.~F. Dobson}, {\em Spin-density functionals for the electron correlation
  energy with automatic freedom from orbital self-interaction}, J. Phys.:
  Condens. Matter, 4 (1992), pp.~7877--7890,
  \url{http://stacks.iop.org/0953-8984/4/i=39/a=003}.

\bibitem{fushiki1992matter}
{\sc I.~Fushiki, E.~H. Gudmundsson, C.~J. Pethick, and J.~Yngvason}, {\em
  Matter in a magnetic field in the {T}homas-{F}ermi and related theories},
  Annals of Physics, 216 (1992), pp.~29--72.

\bibitem{giuliani2005quantum}
{\sc G.~Giuliani and G.~Vignale}, {\em Quantum theory of the electron liquid},
  Cambridge university press, 2005.

\bibitem{graf1995electrostatic}
{\sc G.~M. Graf}, {\em An electrostatic inequality with applications to the
  constitution of matter}, Journ. {\'E}qu. D{\'e}riv. Partielles,
  St.-Jean-de-Monts, 1996 (1996), p.~ex.

\bibitem{graf1995molecular}
{\sc G.~M. Graf and D.~Schenker}, {\em On the molecular limit of {C}oulomb
  gases}, Communications in mathematical physics, 174 (1995), pp.~215--227.

\bibitem{HAINZL2009454}
{\sc C.~Hainzl, M.~Lewin, and J.~P. Solovej}, {\em The thermodynamic limit of
  quantum {C}oulomb systems {P}art {I}. {G}eneral theory}, Advances in
  Mathematics, 221 (2009), pp.~454--487.

\bibitem{harriman1981orthonormal}
{\sc J.~E. Harriman}, {\em Orthonormal orbitals for the representation of an
  arbitrary density}, Physical Review A, 24 (1981), p.~680.

\bibitem{hoffmann1977schrodinger}
{\sc M.~Hoffmann-Ostenhof and T.~Hoffmann-Ostenhof}, {\em "{S}chr{\"o}dinger
  inequalities" and asymptotic behavior of the electron density of atoms and
  molecules}, Physical Review A, 16 (1977), p.~1782.

\bibitem{kvaal2021lower}
{\sc S.~Kvaal, A.~Laestadius, E.~Tellgren, and T.~Helgaker}, {\em Lower
  semicontinuity of the universal functional in paramagnetic current--density
  functional theory}, The Journal of Physical Chemistry Letters, 12 (2021),
  pp.~1421--1425.

\bibitem{laestadius2014density}
{\sc A.~Laestadius}, {\em Density functionals in the presence of magnetic
  field}, International Journal of Quantum Chemistry, 114 (2014),
  pp.~1445--1456.

\bibitem{laestadius2019kohn}
{\sc A.~Laestadius, E.~I. Tellgren, M.~Penz, M.~Ruggenthaler, S.~Kvaal, and
  T.~Helgaker}, {\em {K}ohn--{S}ham theory with paramagnetic currents:
  compatibility and functional differentiability}, Journal of Chemical Theory
  and Computation, 15 (2019), pp.~4003--4020.

\bibitem{lewin2018statistical}
{\sc M.~Lewin, E.~H. Lieb, and R.~Seiringer}, {\em Statistical mechanics of the
  uniform electron gas}, Journal de l’{\'E}cole
  polytechnique-Math{\'e}matiques, 5 (2018), pp.~79--116.

\bibitem{lewin2019floating}
{\sc M.~Lewin, E.~H. Lieb, and R.~Seiringer}, {\em Floating {W}igner crystal
  with no boundary charge fluctuations}, Physical Review B, 100 (2019),
  p.~035127.

\bibitem{lewin2020local}
{\sc M.~Lewin, E.~H. Lieb, and R.~Seiringer}, {\em The local density
  approximation in density functional theory}, Pure and Applied Analysis, 2
  (2020), pp.~35--73.

\bibitem{lewin2022improved}
{\sc M.~Lewin, E.~H. Lieb, and R.~Seiringer}, {\em Improved {L}ieb-{O}xford
  bound on the indirect and exchange energies}, arXiv preprint
  arXiv:2203.12473,  (2022).

\bibitem{lieb1979lower}
{\sc E.~H. Lieb}, {\em A lower bound for {C}oulomb energies}, Physics Letters
  A, 70 (1979), pp.~444--446.

\bibitem{lieb1983density}
{\sc E.~H. Lieb}, {\em Density functionals for {C}oulomb systems},
  International Journal of Quantum Chemistry, 24 (1983), pp.~243--277.

\bibitem{lieb2001analysis}
{\sc E.~H. Lieb and M.~Loss}, {\em Analysis}, vol.~14, American Mathematical
  Soc., 2001.

\bibitem{lieb1975thermodynamic}
{\sc E.~H. Lieb and H.~Narnhofer}, {\em The thermodynamic limit for jellium},
  Journal of statistical physics, 12 (1975), pp.~291--310.

\bibitem{lieb1981improved}
{\sc E.~H. Lieb and S.~Oxford}, {\em Improved lower bound on the indirect
  {C}oulomb energy}, International Journal of Quantum Chemistry, 19 (1981),
  pp.~427--439.

\bibitem{lieb2013current}
{\sc E.~H. Lieb and R.~Schrader}, {\em Current densities in density-functional
  theory}, Physical Review A, 88 (2013), p.~032516.

\bibitem{lieb2010stability}
{\sc E.~H. Lieb and R.~Seiringer}, {\em The stability of matter in quantum
  mechanics}, Cambridge University Press, 2010.

\bibitem{lieb1994asymptoticsI}
{\sc E.~H. Lieb, J.~P. Solovej, and J.~Yngvason}, {\em Asymptotics of heavy
  atoms in high magnetic fields: {I}. {L}owest {L}andau band regions},
  Communications on Pure and Applied Mathematics, 47 (1994), pp.~513--591.

\bibitem{lieb1994asymptoticsII}
{\sc E.~H. Lieb, J.~P. Solovej, and J.~Yngvason}, {\em Asymptotics of heavy
  atoms in high magnetic fields: {I}{I}. {S}emiclassical regions},
  Communications in mathematical physics, 161 (1994), pp.~77--124.

\bibitem{rougerie2019laughlin}
{\sc N.~Rougerie}, {\em On the laughlin function and its perturbations},
  S{\'e}minaire Laurent Schwartz—EDP et applications,  (2019), pp.~1--17.

\bibitem{rougerie2014quantum}
{\sc N.~Rougerie, S.~Serfaty, and J.~Yngvason}, {\em Quantum {H}all phases and
  plasma analogy in rotating trapped bose gases}, Journal of Statistical
  Physics, 154 (2014), pp.~2--50.

\bibitem{sen2018local}
{\sc S.~Sen and E.~I. Tellgren}, {\em A local tensor that unifies kinetic
  energy density and vorticity in density functional theory}, The Journal of
  Chemical Physics, 149 (2018), p.~144109.

\bibitem{tellgren2014fermion}
{\sc E.~I. Tellgren, S.~Kvaal, and T.~Helgaker}, {\em Fermion
  n-representability for prescribed density and paramagnetic current density},
  Physical Review A, 89 (2014), p.~012515.

\bibitem{Vignale1988}
{\sc G.~Vignale and M.~Rasolt}, {\em Current- and spin-density-functional
  theory for inhomogeneous electronic systems in strong magnetic fields}, Phys.
  Rev. B, 37 (1988), pp.~10685--10696.

\end{thebibliography}

\end{document}